\documentclass[11pt]{amsart}
\usepackage{graphicx}
\usepackage{amscd}
\usepackage{amsmath}
\usepackage{amsfonts}
\usepackage{amssymb}
\usepackage{setspace}
\setstretch{1.2}
\usepackage{enumerate}         
\usepackage{color}
\usepackage{url}
\usepackage{amsthm}
\usepackage{hyperref}
\usepackage{bm}
\usepackage{xy}
\usepackage{color}
\usepackage{subfig}
\allowdisplaybreaks[4]

\usepackage{geometry}
\geometry{a4paper,hcentering,vcentering,outer=2cm,top=2.5cm}

\theoremstyle{plain}
\newtheorem{theorem}{Theorem}[section]

\newtheorem{corollary}[theorem]{Corollary}

\newtheorem{lemma}[theorem]{Lemma}

\newtheorem{proposition}[theorem]{Proposition}

\newtheorem{definition}[theorem]{Definition}

\newtheorem{assumption}[theorem]{Assumption}

\theoremstyle{remark}
\newtheorem{remark}[theorem]{Remark}

\numberwithin{equation}{section}

\newcommand{\ind}{\mathbf{1}}
\newcommand{\rsto}{]\!\kern-1.8pt ]}
\newcommand{\lsto}{[\!\kern-1.7pt [}

\vfuzz2pt 
\hfuzz2pt 

\numberwithin{equation}{section}

\newcommand{\Lois}{L^{\text{OIS}}}

\newcommand{\RR}{\mathbb{R}}
\newcommand{\QQ}{\mathbb{Q}}

\newcommand{\CC}{\mathbb{C}}
\newcommand{\NN}{\mathbb{N}}
\newcommand{\EE}{\mathbb{E}}
\newcommand{\T}{\mathbb{T}}

\newcommand{\cA}{\mathcal{A}}
\newcommand{\cB}{\mathcal{B}}

\newcommand{\cF}{\mathcal{F}}
\newcommand{\cG}{\mathcal{G}}

\newcommand{\cI}{\mathcal{I}}
\newcommand{\cN}{\mathcal{N}}

\newcommand{\cP}{\mathcal{P}}

\newcommand{\cT}{\mathcal{T}}

\newcommand{\mfU}{\mathfrak{U}}
\newcommand{\mfD}{\mathfrak{D}}

\renewcommand{\Re}{\mathrm{Re}}
\renewcommand{\Im}{\mathrm{Im}}


\newcommand{\Excond}[3]{\mathbb{E}^{#1}\left[\left.#2\right|#3\right]}  

\newcommand{\tr}{\mathop{\mathrm{Tr}}}
\newcommand{\im}{\ensuremath{\mathsf{i}}}

\makeatletter
\newcommand{\subjclassname@JEL}{JEL Classification}
\makeatother

\begin{document}
\title[Affine multiple yield curve models]{Affine multiple yield curve models}

\author{Christa Cuchiero}
\address[Christa Cuchiero]{University of Vienna, Faculty of Mathematics, \newline
\indent Oskar-Morgenstern Platz 1, 1090 Wien, Austria}
\email[Christa Cuchiero]{christa.cuchiero@univie.ac.at}%

\author{Claudio Fontana}
\address[Claudio Fontana]{Laboratoire de Probabilit\'es et Mod\`eles Al\'eatoires, Universit\'{e} Paris Diderot, \newline
\indent avenue de France, 75205 Paris, France}
\email[Claudio Fontana]{fontana@math.univ-paris-diderot.fr}

\author{Alessandro Gnoatto}
\address[Alessandro Gnoatto]{Mathematisches Institut der LMU M\"unchen,\newline%
\indent Theresienstrasse, 39 D-80333 M\"unchen}
\email[Alessandro Gnoatto]{alessandro@alessandrognoatto.com}%

\begin{abstract}
We provide a general and tractable framework under which all multiple yield curve modeling approaches based on affine processes, be it short rate, Libor market, or HJM modeling, can be consolidated. 
We model a num\'eraire process and multiplicative spreads between Libor rates and simply compounded OIS rates as functions of an underlying affine process. Besides allowing for ordered spreads and an exact fit to the initially observed term structures, this general framework leads to tractable valuation formulas for caplets and swaptions and embeds all existing multi-curve affine models. The proposed approach also gives rise to new developments, such as a short rate type model driven by a Wishart process, for which we derive a closed-form pricing formula for caplets. 
The empirical performance of two specifications of our framework is illustrated by calibration to market data.
\end{abstract}

\keywords{Multiple yield curves, Libor rate, forward rate agreement, multiplicative spread, affine processes}
\thanks{{\em Acknowledgements.} The authors are grateful to two anonymous referees for their valuable comments that helped to significantly improve the paper.}
\subjclass[2010]{91G30, 91B24, 91B70. \textit{JEL Classification} E43, G12}

\maketitle

\section{Introduction}

Starting from the 2007 crisis, one of the most striking features of today's financial environment is represented by the segmentation of interest rate  markets. 
More specifically, while the pre-crisis interest rate market was well described by a \emph{single} yield curve and interbank Xibor rates\footnote{Similarly as in~\cite{bia10,GMS}, we denote by Xibor a generic interbank offered rate 
for unsecured term lending, such as the Libor rate in the London interbank market and the Euribor rate in the Eurozone. 
While the theoretical framework developed in the present paper applies to generic Xibor rates, the empirical results reported in Section 6 refer to  Euribor rates.} 
associated to different tenors were simply determined by no-arbitrage relations, nowadays the market is segmented in the sense that distinct yield curves are 
constructed from market instruments that depend on a specific tenor, thus giving rise to \emph{multiple} yield curves. The credit and liquidity risks existing in the interbank market, which were perceived as negligible before the crisis, are at the origin of this phenomenon. This is also reflected by the emergence of spreads between OIS rates and Xibor rates as well as between Xibor rates associated to different tenors. 

In this paper, we propose  
the most general tractable approach based on affine processes to model multiple yield curves. 
In our view, it should be understood as \emph{the affine framework} for this modeling purpose and not as a specific model class, such as the affine Libor models considered in \cite{GPSS14}. 
We provide a unifying affine methodology within which all interest rate modeling approaches, i.e., (i) short rate, (ii) Libor market, and (iii) Heath-Jarrow-Morton modeling, can be consolidated.
On the one hand, this allows us to embed in our framework all affine multiple curve models proposed so far in the literature. 
On the other hand, the approach allows for novel developments, for instance multiple curve short rate models based on Wishart processes as well as interest rate models under alternative num\'eraires. 
Moreover, the generality of the setting does not preclude tractability and  appealing practical features, as will be made precise below.

We model a general num\'eraire process and, inspired by our previous work \cite{CFG:14}, \emph{multiplicative spreads} between (normalized) 
spot Xibor rates and (normalized) simply compounded OIS rates. Besides being directly observable from market quotes, 
multiplicative spreads represent the most convenient modeling quantity in an affine setting and admit a natural interpretation in terms of forward exchange premia. 
We consider a general interest rate market where OIS zero-coupon bonds and FRA contracts are traded, for a finite set of tenors and for all maturities, 
and assume the existence of a num\'eraire - martingale measure couple, thus ensuring absence of arbitrage in a sense precisely specified below. 
We then model the logarithm of the multiplicative spreads and of the num\'eraire process as affine functions of an underlying affine process. 
In the special case where the num\'eraire is chosen as the OIS bank account, our framework can be regarded as the natural extension of 
classical affine short rate models to the multi-curve setting. 
In the special case where the num\'eraire is chosen to be the OIS bond with terminal maturity, we can recover an extension of the affine Libor market model of \cite{GPSS14}.

The proposed framework fully exploits the analytical tractability of affine processes and exhibits several desirable modeling features. Indeed, besides its generality, it allows for
\begin{itemize}
\item
an automatic fit to the initially observed term structures of OIS bond prices and spreads;
\item a transparent characterization of order relations between spreads associated to different tenors;
\item efficient valuation formulae for caplets and swaptions in a multi-curve setting. 
\end{itemize}
The driving affine process is allowed to take values in a general convex state space. In particular, we analyze a specification based on Wishart processes and we derive a novel closed-form formula for the price of a caplet, expressed in terms of the distribution function of weighted sums of independent non-central chi-square-distributed random variables. 
Beyond that, we test the empirical performance of two simple specifications of our setup and show that they achieve a satisfactory fit to market data.

Referring to the introductory section of \cite{CFG:14} and to the monographies \cite{GR15,Henr14} for a more detailed overview of the relevant literature, we mention here that affine processes have been used to model multiple yield curves in \cite{fitr12,GM:14,GMR:15,ken10,kitawo09,MR14} by adopting a short rate approach and, more recently, in \cite{GPSS14} by extending the affine Libor model originally introduced in \cite{KRPT13}. 
As already mentioned, the flexibility of our approach is highlighted by the fact that, to the best of our knowledge, all existing affine multi-curve models can be recovered as special cases of our framework. For this reason, the present work structurally differs from papers on specific affine modeling approaches, as previously considered in the literature. Beyond that, it also provides a broad class of tractable specifications of the abstract HJM approach based on general It\^o semimartingales  proposed in \cite{CFG:14}  for arbitrage-free modeling of the term structures of OIS bonds and multiplicative spreads.

The paper is organized as follows. 
Section~\ref{sec:intro2} starts by introducing the main market rates and multiplicative spreads and then presents an abstract approach to the arbitrage-free modeling of a general multi-curve interest rate market.
In Section~\ref{affinespecification}, we develop a general framework driven by affine processes as well as possible specifications based on a short rate approach. Moreover, we  relate our setup to several affine multi-curve models recently proposed in the literature.
In Section~\ref{sec:aff_pricing}, we derive general semi-closed valuation formulae for caplets and an approximate formula for the price of a swaption.
Section~\ref{sec:Wishart} contains a detailed analysis of a tractable specification based on Wishart processes, deriving an analytical pricing formula for caplets.
Section~\ref{sec:aff_examples} presents the calibration results for two specifications of the framework.
Finally,  Appendix~\ref{sec:gen_pricing} presents general pricing formulae in terms of the quantities used in our framework and Appendix~\ref{app:proof_prop} contains the proof of Proposition~\ref{prop:bonds_spreads}.

\section{Modeling the post-crisis interest rate market}	\label{sec:intro2}

\subsection{Xibor rates, OIS rates and spot multiplicative spreads} 	\label{sec:rates}

The underlying quantities of most interest rate products are represented by Xibor rates. We denote by $L_t(t,t+\delta)$ the Xibor rate prevailing at date $t$ for the time interval $[t,t+\delta]$, where the tenor $\delta>0$ is typically one day (1D), one week (1W), or several months (1M, 2M, 3M, 6M or 12M). Referring to \cite[Chapter 1]{GR15} for a more detailed presentation of post-crisis interest rate markets, we just mention that Xibor rates are reference rates determined by a panel of primary financial institutions for unsecured lending and are not based on actual transactions.
In the post-crisis fixed income market, Xibor rates associated to different tenors started to exhibit distinct behaviors, leading for instance to non-negligible basis spreads (see Appendix A.1). In this paper, we shall consider Xibor rates for a generic set of tenors $\{\delta_1,\ldots,\delta_m\}$, with $\delta_1<\ldots<\delta_m$, for some $m\in\mathbb{N}$. 

The reference rates for overnight borrowing (i.e., for the shortest tenor of 1D) correspond to the Eonia (Euro overnight index average) rate in the Eurozone and to the Federal Funds rate in the US market. 
Unlike Xibor rates, the Eonia and the Federal Funds rates are determined on the basis of actual overnight transactions in the interbank market by a panel of banks (in the case of the Eonia rate, this is the same panel determining the Euribor rate)\footnote{Note also that, besides being based on actual interbank unsecured transactions, the Eonia and the Federal Funds rate have also a different settlement period than 1D Xibor rates. We refer to \cite{bia_slides} for a detailed presentation of the different market conventions of Libor/Euribor rates and Eonia rates.}.
Overnight rates represent the underlying of overnight indexed swaps (OIS), whose market swap rates are referred to as OIS rates (see Appendix~\ref{sec:noopt}), typically considered as the best proxy for risk-free rates in market practice.
By bootstrapping techniques (see~\cite{AB:13}), OIS rates allow to recover the OIS term structure $T\mapsto B(t,T)$, where $B(t,T)$ denotes the price at date $t$ of an OIS zero-coupon bond with maturity $T$. We define the simply compounded (risk-free) OIS spot rate by
\[
\Lois_t(t,t+\delta) := \frac{1}{\delta}\left(\frac{1}{B(t,t+\delta)}-1\right),
\]
for $\delta>0$. 
In particular, note that the right-hand side of the above formula corresponds to the pre-crisis textbook definition of Xibor rate.

Our main modeling quantities are the following \emph{spot multiplicative  spreads}:
\begin{equation}	\label{eq:multspread}
S^{\delta_i}(t,t):=\frac{1+\delta_i L_t(t,t+\delta_i)}{1+\delta_i L^{{\rm OIS}}_t(t,t+\delta_i)},
\qquad \text{ for }i=1,\ldots,m,
\end{equation}
corresponding to multiplicative spreads between (normalized) Xibor rates and (normalized) simply compounded OIS spot rates. The multiplicative spreads $S^{\delta_i}(t,t)$ can be directly inferred from the quoted Xibor and OIS rates. Indeed, the numerator of \eqref{eq:multspread} is determined by the Xibor rate quoted on the market, while the quantity appearing in the denominator can be bootstrapped from quoted OIS rates, as mentioned above.
In the post-crisis environment, the quantities $S^{\delta_i}(t,t)$ are usually greater than one and increasing with respect to the tenor's length $\delta_i$. Neglecting liquidity issues, this is related to the fact that Xibor rates embed the risk that the average credit quality of the banks included in the Xibor panel deteriorates over the term of the loan, while OIS rates reflect the average credit quality of a newly refreshed Xibor panel (see, e.g.,~\cite{CDS:01, fitr12}). As will be shown below, a key feature of our modeling framework is the facility of generating multiplicative spreads satisfying such requirements. 

In comparison to additive spreads (as considered for instance in \cite{mer10,merxie12}), multiplicative spreads admit a natural economic interpretation in a multiple curve setting and, as shown in the following sections, represent a convenient modeling quantity  in relation with affine processes. Referring to \cite[Appendix B]{CFG:14} for full details, Xibor rates can be associated with artificial risky bond prices $B^{\delta_i}(t,T)$, so that $1+\delta_i L_t(t,t+\delta_i)=1/B^{\delta_i}(t,t+\delta_i)$, for $i=1,\ldots,m$. If risky bonds are interpreted as bonds of a foreign economy, with $B^{\delta_i}(t,T)$ representing the price (in units of the foreign currency) of a foreign zero-coupon bond, and OIS bonds are interpreted as domestic bonds, then the quantity $S^{\delta_i}(t,t)$ corresponds to the {\em forward exchange premium} between the domestic and the foreign economy over the period $[t,t+\delta_i]$. In this sense, the multiplicative spread $S^{\delta_i}(t,t)$ can be regarded as a market expectation (at date $t$) of the riskiness of the Xibor panel over the period $[t,t+\delta_i]$. Related foreign exchange analogies have been proposed in \cite{bia10} and \cite{NS15}.

\subsection{Basic traded assets and fundamental properties}	\label{sec:abstract_setting}

Let us now make precise the financial market considered in this paper. Among all financial contracts written on Xibor rates, forward rate agreements (FRA) can be rightfully considered as the basic building blocks, due to the simplicity of their payoff and to the fact that all linear interest rate derivatives (such as interest rate swaps and basis swaps) can be represented as portfolios of FRAs. 
As usually done in the literature, we consider standard ``textbook'' FRAs, in contrast to the cash-settled FRAs traded in the market (compare with~\cite[Remark 1.3]{GR15}). While the payoff of market FRAs is non-linear, the discrepancy between the two versions of FRAs is typically negligible for practical purposes, as documented in~\cite{mer10}.

Motivated by the above observation, we shall consider a general financial market consisting of the following families of \emph{basic traded assets}, with $\T<+\infty$ denoting a fixed time horizon:
\begin{itemize}
\item[(i)] OIS zero-coupon bonds for all maturities $T\in[0,\T]$;
\item[(ii)] FRA contracts for all maturities $T\in[0,\T]$ and for all tenors $\{\delta_1,\ldots,\delta_m\}$.
\end{itemize}
Note that, differently from the pre-crisis single-curve setting, FRA contracts have to be considered on top of OIS bonds since they cannot be perfectly replicated by the latter any longer, due to the discrepancy between Xibor rates $L_t(t,t+\delta_i)$ and simply compounded OIS rates $\Lois_t(t,t+\delta_i)$.
Moreover, OIS bonds play a particularly important role in collateralized transactions. Indeed, the collateral rate adopted in collateralized transactions in interest rate markets is typically specified as the OIS short rate and, hence, OIS bonds represent the natural discount factors for the valuation of interest rate products (see, e.g.,~\cite{fitr12} and \cite[Appendix A]{CFG:14}).
We denote by $\Pi^{FRA}(t;T,T+\delta_i,K)$ the price at date $t$ of a FRA contract starting at date $T$ with maturity $T+\delta_i$, rate $K>0$ and unitary notional, for $0\leq t\leq T\leq \T$ and $i=1,\ldots,m$. We denote by $L_t(T,T+\delta_i)$ the corresponding \emph{fair FRA rate}, i.e., the rate $K$ fixed at date $t$ satisfying $\Pi^{FRA}(t;T,T+\delta_i,K)=0$. We say that a FRA contract written at $t$ is \emph{fair} if its rate $K$ is set equal to $L_t(T,T+\delta_i)$, i.e., if the price of the FRA contract at the inception date $t$ is zero.
Extending definition \eqref{eq:multspread} of the spot multiplicative spread, we define (forward) \emph{multiplicative spreads} $S^{\delta_i}(t,T)$, for $0\leq t\leq T\leq \T$ and $i=1,\ldots,m$, by
\begin{equation}	\label{eq:spreadtT}
S^{\delta_i}(t,T) := \frac{1+\delta_i L_t(T,T+\delta_i)}{1+\delta_i\Lois_t(T,T+\delta_i)} = \bigl(1+\delta_i L_t(T,T+\delta_i)\bigr)\frac{B(t,T+\delta_i)}{B(t,T)}.
\end{equation}
Multiplicative spreads of the form \eqref{eq:spreadtT} have been first introduced in \cite{hen10} and further considered as main modeling quantities in \cite{CFG:14} (see also \cite{NS15}).
We refer to \cite[Appendix B]{CFG:14} for the interpretation of the forward multiplicative spread $S^{\delta_i}(t,T)$ in terms of the foreign exchange analogy introduced at the end of Section \ref{sec:rates}.

A first and fundamental issue is the absence of arbitrage in the financial market composed by the basic traded assets introduced above. To this end, we let $(\Omega,\cF)$ be a measurable space endowed with a right-continuous filtration $(\cF_t)_{0\leq t\leq \T}$ and formulate the following definition.
 
\begin{definition}	\label{def:couple}
Let $\QQ$ be a probability measure on $(\Omega,\cF)$ and $B=(B_t)_{0\leq t\leq \T}$ a strictly positive adapted process with $B_0=1$. We say that $(B,\QQ)$ is a \emph{num\'eraire - martingale measure couple} if the $B$-discounted price of every basic traded asset is a martingale on $(\Omega,(\cF_t)_{0\leq t\leq \T},\QQ)$.
\end{definition}

In other words, if $(B,\QQ)$ is a num\'eraire - martingale measure couple, the price of every OIS zero-coupon bond and FRA contract is a $\QQ$-martingale when discounted with respect to the num\'eraire process $B$. Definition~\ref{def:couple} is in the spirit of~\cite{DS:01}, where the authors consider term structure models generated by a num\'eraire - martingale measure couple in the above sense. Note that we do not assume that $B$ is a traded asset and it can therefore be interpreted as a state-price density process when $\mathbb{Q}$ corresponds to the physical probability measure (compare with~\cite{FH:96, R:97}).

In the present paper, we shall work under the following standing assumption.

\begin{assumption}	\label{ass:couple}
There exists a num\'eraire - martingale measure couple $(B,\QQ)$.
\end{assumption}

In particular, Assumption \ref{ass:couple} suffices to exclude the existence of arbitrage profits, in the sense of \emph{no asymptotic free lunch with vanishing risk} (NAFLVR, see \cite{CKT14}), in the large financial market composed by all basic traded assets, i.e., by all OIS zero-coupon bonds and FRA contracts. 

\begin{remark}[OIS bank account as num\'eraire]	\label{rem:OIS_num}
In the literature on multiple curves, the num\'eraire is typically chosen to be the OIS bank account and $\QQ$ is the corresponding (spot) martingale measure. While this choice of $(B,\QQ)$ will be considered in detail in Section \ref{sec:short_rate}, we want to point out that our framework is not limited to this specification and allows to consider several alternative settings, depending on the choice of the num\'eraire - martingale measure couple (see Section \ref{sec:aff_model} below). For the moment, we just think of the couple $(B,\QQ)$ in abstract and general terms.
\end{remark}

Assumption \ref{ass:couple} implies that the arbitrage-free price of every basic traded asset can be computed by taking the conditional $\QQ$-expectation of its $B$-discounted payoff, similarly as in the classical risk-neutral valuation paradigm\footnote{In the typical situation where $B$ is the OIS bank account and $\QQ$ the corresponding martingale measure, as considered in Remark \ref{rem:OIS_num}, computing the price of a derivative by taking the conditional $\QQ$-expectation of its $B$-discounted payoff corresponds to computing its \emph{clean price}, i.e., neglecting counterparty risk and assuming a funding rate equal to the OIS rate (compare also with the discussion in~\cite[Appendix A]{CFG:14}).}. Throughout the paper, unless explicitly indicated, expectations are taken under the measure $\QQ$, while the measure $\QQ^T$ denotes the $T$-forward measure with density $d\QQ^T/d\QQ=1/(B_TB(0,T))$, for $T\in[0,\T]$. We can now state the following proposition, which is a simple consequence of Definition~\ref{def:couple} (for completeness, the proof is given in Appendix~\ref{app:proof_prop}). 
 
\begin{proposition}	\label{prop:bonds_spreads}
Suppose that Assumption \ref{ass:couple} holds. Then the following hold:
\begin{enumerate}
\item
OIS zero-coupon bond prices satisfy
\begin{equation}	\label{eq:bond_gen}
B(t,T) = \EE\left[\frac{B_t}{B_T}\Bigr|\cF_t\right],
\qquad \text{ for all }0\leq t\leq T\leq \T;
\end{equation}
\item 
for every $i=1,\ldots,m$, the fair FRA rate satisfies
\begin{equation}	\label{eq:libor_gen}
L_t(T,T+\delta_i) = \EE^{\QQ^{T+\delta_i}}[L_T(T,T+\delta_i)|\cF_t],
\qquad \text{ for all }0\leq t\leq T\leq \T;
\end{equation}
\item
for every $i=1,\ldots,m$, the multiplicative spread satisfies
\begin{equation}	\label{eq:spread_gen}
S^{\delta_i}(t,T) = \EE^{\QQ^T}[S^{\delta_i}(T,T)|\cF_t],
\qquad \text{ for all }0\leq t\leq T\leq \T.
\end{equation}
\end{enumerate}
\end{proposition}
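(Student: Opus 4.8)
The plan is to obtain the three identities as direct consequences of Definition~\ref{def:couple}, which states that $B(\cdot,T)/B$ and $\Pi^{FRA}(\cdot;T,T+\delta_i,K)/B$ are $\QQ$-martingales, together with the abstract Bayes rule linking $\QQ$-conditional expectations to $\QQ^T$-conditional expectations via the density $d\QQ^T/d\QQ = 1/(B_TB(0,T))$. Part~(i) is the starting point: since $B(\cdot,T)$ is a basic traded asset, the martingale property gives $B(t,T)/B_t = \EE[B(T,T)/B_T\mid\cF_t]$, and using $B(T,T)=1$ and multiplying through by the $\cF_t$-measurable factor $B_t$ produces \eqref{eq:bond_gen}. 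I would record \eqref{eq:bond_gen} first, since it is used repeatedly in the other two parts.

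For part~(ii) I would recall the textbook FRA convention: a FRA with rate $K$ delivers $\delta_i\bigl(L_T(T,T+\delta_i)-K\bigr)$ at $T+\delta_i$, equivalently its value at $T$ is $\delta_i B(T,T+\delta_i)\bigl(L_T(T,T+\delta_i)-K\bigr)$. Invoking the martingale property of $\Pi^{FRA}(\cdot;T,T+\delta_i,K)/B$ and using part~(i) to pass, if needed, from a $B(\cdot,T+\delta_i)$-discounted to a $B_{T+\delta_i}$-discounted payoff, one gets $\Pi^{FRA}(t;T,T+\delta_i,K) = \delta_i B_t\,\EE\bigl[\bigl(L_T(T,T+\delta_i)-K\bigr)/B_{T+\delta_i}\bigm|\cF_t\bigr]$. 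Setting $K=L_t(T,T+\delta_i)$, which is $\cF_t$-measurable, makes the left side vanish by definition of the fair rate, and since $\delta_i>0$ this yields
\[
L_t(T,T+\delta_i)\,\EE\bigl[1/B_{T+\delta_i}\bigm|\cF_t\bigr]=\EE\bigl[L_T(T,T+\delta_i)/B_{T+\delta_i}\bigm|\cF_t\bigr].
\]
Rewriting this with the density $d\QQ^{T+\delta_i}/d\QQ = 1/(B_{T+\delta_i}B(0,T+\delta_i))$ and the abstract Bayes formula (the factor $\EE[1/B_{T+\delta_i}\mid\cF_t]$ being identified through part~(i)) gives exactly \eqref{eq:libor_gen}. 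I will also keep the displayed identity for re-use in part~(iii).

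For part~(iii), using $B(T,T)=1$ in \eqref{eq:spreadtT} I write $S^{\delta_i}(T,T)=\bigl(1+\delta_i L_T(T,T+\delta_i)\bigr)B(T,T+\delta_i)$. Applying the abstract Bayes rule with density $1/(B_TB(0,T))$ and then part~(i) reduces $\EE^{\QQ^T}[S^{\delta_i}(T,T)\mid\cF_t]$ to $\frac{B_t}{B(t,T)}\,\EE\bigl[\bigl(1+\delta_i L_T(T,T+\delta_i)\bigr)B(T,T+\delta_i)/B_T\bigm|\cF_t\bigr]$, and I split the expectation into two pieces. The term $\EE[B(T,T+\delta_i)/B_T\mid\cF_t]$ equals $B(t,T+\delta_i)/B_t$ by part~(i) and the tower property, while the term $\delta_i\,\EE[L_T(T,T+\delta_i)B(T,T+\delta_i)/B_T\mid\cF_t]$ equals $\delta_i L_t(T,T+\delta_i)B(t,T+\delta_i)/B_t$, again by the tower property together with the displayed FRA identity from part~(ii). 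Recombining and invoking \eqref{eq:spreadtT} once more yields \eqref{eq:spread_gen}.

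The genuinely delicate point is the bookkeeping in part~(ii): one must fix the textbook FRA payoff convention, be careful about the time interval on which the martingale property of the discounted FRA price is invoked, and handle the passage to the forward measure correctly; the remaining manipulations, and all of part~(iii), are then routine applications of part~(i), the tower property and the change-of-num\'eraire formula.
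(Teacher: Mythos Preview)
Your proof is correct. Parts~(i) and~(ii) coincide with the paper's argument essentially line by line: the paper also uses $B(T,T)=1$ with the martingale property for~(i), and for~(ii) sets $K=L_t(T,T+\delta_i)$ in the discounted FRA martingale identity and then passes to the $(T+\delta_i)$-forward measure via Bayes' formula.

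For part~(iii) your route differs from the paper's. You compute $\EE^{\QQ^T}[S^{\delta_i}(T,T)\mid\cF_t]$ directly: change measure from $\QQ^T$ back to $\QQ$, expand $S^{\delta_i}(T,T)=(1+\delta_iL_T)B(T,T+\delta_i)$ into two summands, and reduce each to a $B(t,T+\delta_i)/B_t$-type term via the tower property (using $B(T,T+\delta_i)/B_T=\EE[1/B_{T+\delta_i}\mid\cF_T]$) together with the displayed identity from~(ii). The paper instead changes measure from $\QQ^T$ to $\QQ^{T+\delta_i}$: it observes that $(S^{\delta_i}(t,T))_{t\leq T}$ is a $\QQ^T$-martingale if and only if $M^i_t:=S^{\delta_i}(t,T)\frac{d\QQ^T|_{\cF_t}}{d\QQ^{T+\delta_i}|_{\cF_t}}$ is a $\QQ^{T+\delta_i}$-martingale, and then notes that the ratio of bond prices in $S^{\delta_i}(t,T)$ cancels against the Radon--Nikodym density, leaving $M^i_t$ equal to a constant times $1+\delta_iL_t(T,T+\delta_i)$, which is a $\QQ^{T+\delta_i}$-martingale by~(ii). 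The paper's argument exploits a neat telescoping of the bond-price factors and avoids splitting the expectation; your argument is more hands-on but equally short and has the advantage of producing the value of the conditional expectation directly rather than first reducing to a martingale-equivalence statement. Both ultimately rest on~(ii).
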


From the modeling perspective, Proposition \ref{prop:bonds_spreads} together with representation \eqref{eq:spreadtT} implies that, in order to give an arbitrage-free description of the financial market composed by all OIS zero-coupon bonds and FRA contracts, it suffices to model the num\'eraire process $B$ together with the spot multiplicative spreads $\{(S^{\delta_i}(t,t))_{0\leq t\leq \T}, i=1,\ldots,m\}$, under some reference measure $\QQ$. 
In the next section, this will be achieved in a flexible and tractable way by letting the num\'eraire process and the spot multiplicative spreads be driven by a common underlying affine process.

\section{Multi-curve models based on affine processes}\label{affinespecification}

In this section we develop a general framework based on affine processes for modeling multiple curves. We first recall in Section~\ref{sec:aff_gen} some general results on affine processes, mainly relying on \cite{cfmt11,CT:13,krm12}. In Section~\ref{sec:aff_model}, we provide the general definition of an affine multi-curve model with respect to an abstract num\'eraire-martingale measure couple and derive its fundamental properties. This general setup is then specialized in Section~\ref{sec:short_rate} to the typical setting where the num\'eraire is given by the OIS bank account modeled via a short rate. Section~\ref{sec:relations} shows how our setup relates to the existing multi-curve models driven by affine processes.

\subsection{Preliminaries on affine processes}	\label{sec:aff_gen}

Let $V$ be a finite-dimensional real vector space with associated scalar product $\langle \cdot,\cdot\rangle$. We denote by $D$ a non-empty closed  convex subset of $V$, which will serve as state space for the stochastic process to be introduced below, endowed with the Borel $\sigma$-algebra $\cB_D$. 
Typical choices for the state space are represented by $D\in\{\RR^d,\RR_+^{d},\mathbb{S}^{d}_+\}$ or combinations thereof, for some $d\in\NN$, where $\mathbb{S}^{d}_+$ denotes the cone of symmetric ${d}\times {d}$ positive semidefinite matrices.\footnote{If $D=\mathbb{S}^d_+$ and $x,y\in\mathbb{S}_+^d$, the scalar product is given by $\langle x,y\rangle=\tr[xy]$, with $\tr[\cdot]$ being the trace operator.} 

On a filtered probability space $(\Omega,\cF,(\cF_t)_{0\leq t\leq \mathbb{T}},\QQ)$ satisfying the usual conditions, we introduce a stochastic process $X=\left(X_t\right)_{0\leq t\leq \T}$, starting from an initial value $X_0=x$ belonging to the interior of $D$. 
The process $X$ is assumed to be an adapted, time-homogeneous and conservative Markov process taking values in $D$ and we denote by $\{p_t:D\times\cB_D\rightarrow[0,1];t\in[0,\T]\}$ the family of its transition kernels.
To the process $X$ we associate the sets
\begin{align}	\label{eq:expmom}
\mfU_T := \bigl\{\zeta\in V+\im V : 
\mathbb{E}\bigl[e^{\langle \zeta,X_t\rangle}\bigr]<+\infty, 
\text{ for all }t\in[0,T]\bigr\}
\qquad\text{ and }\qquad
\mfU:=\mfU_{\mathbb{T}}.
\end{align}
Letting $\mfD:=\{(t,\zeta)\in[0,\T]\times(V+\im V) : \zeta\in\mfU_t\}$, we define as follows the class of affine processes.

\begin{definition} 	\label{def:affine}
The Markov process $X=(X_t)_{0\leq t\leq \T}$ is said to be \emph{affine} if
\begin{enumerate}
\item it is stochastically continuous, i.e., the transition kernels satisfy $\lim_{s\to t}p_s(x,\cdot)=p_t(x,\cdot)$ weakly on $D$, for every $(t,x)\in[0,\T]\times D$;
\item its Fourier-Laplace transform has exponential affine dependence on the initial state, i.e., there exist functions $\phi:\mfD\rightarrow \CC$ and $\psi:\mfD\rightarrow V+\im V$ such that, for every initial value $x\in D$ and for every $(t,u)\in\mfD$, it holds that
\begin{align}
\mathbb{E}[e^{\langle u,X_t\rangle}]
=\int_De^{\langle u,\xi\rangle}p_t(x,d\xi)=e^{\phi(t,u)+\langle\psi(t,u),x\rangle}.\label{affineproperty}
\end{align}
\end{enumerate}
\end{definition}

\begin{remark}	\label{rem:convex_expmom}
\begin{enumerate}
\item
Definition \ref{def:affine} slightly differs from the usual definition of affine process where the exponential affine form \eqref{affineproperty} is only assumed to hold on the set of bounded exponentials. Indeed, in most theoretical works the definition of affine process only refers to the characteristic function or Laplace transform (for processes with values in a cone). 
Then a natural question arises, namely if the exponential affine form can be extended beyond the characteristic function or Laplace transform, in our notation to $\mathcal{D}$. This important question has been answered positively in \cite{krm12} under general assumptions and, therefore, we modified the classical definition.
In particular, in view of~\cite[Lemma 4.2]{krm12}, it holds that 
\[
\mfU_T=\bigl\{\zeta\in V+\im V : \, \mathbb{E}\bigl[e^{\langle \zeta,X_T\rangle}\bigr]<+\infty\bigr\},
\qquad\text{ for all }T\in[0,\T].
\]
\item
At the expense of greater technicalities, our setup can be extended to affine processes taking values in general (possibly non-convex) state spaces. However, the class of affine processes considered in Definition \ref{def:affine} is sufficiently rich to cover any affine multi-curve model of practical interest.
\end{enumerate}
\end{remark}

The affine property \eqref{affineproperty} together with the Chapman-Kolmogorov equation implies that the functions $\phi$ and $\psi$ have the \textit{semiflow property}, i.e., for every $u\in \mfU$ and $t,s\in[0,\T]$ with $s+t\leq\T$, it holds that
\begin{equation}
\begin{split}
\phi(t+s,u)&=\phi(t,u)+\phi\bigl(s,\psi(t,u)\bigr),\\
\psi(t+s,u)&=\psi\bigl(s,\psi(t,u)\bigr).\label{semiflow}
\end{split}
\end{equation}
The stochastic continuity of the affine process $X$ also implies its  \emph{regularity}, according to \cite[Theorem 7]{CT:13} (see also \cite[Theorem 4.3]{krst11} in the case $D=\RR^n_+\times\RR^{d-n}$, for some $n\in\{0,1,\ldots,d\}$, and \cite[Proposition 3.4]{cfmt11} in the case $D=\mathbb{S}^d_+$), in the sense that the derivatives
\begin{equation}	\label{eq:F_R}
F(u):=\left.\frac{\partial \phi(t,u)}{\partial t}\right|_{t=0+}
\quad\text{ and }\quad 
R(u):=\left.\frac{\partial \psi(t,u)}{\partial t}\right|_{t=0+}
\end{equation}
exist and are continuous at $u=0$. Regularity implies that it is possible to differentiate the semiflow relations \eqref{semiflow} with respect to $t$ and evaluate them at $t=0$, thus obtaining the following system of generalized Riccati ODEs
\begin{align*}
\frac{\partial \phi(t,u)}{\partial t}&=F\bigl(\psi(t,u)\bigr),\quad \phi(0,u)=0;\\
\frac{\partial \psi(t,u)}{\partial t}&=R\bigl(\psi(t,u)\bigr),\quad \psi(0,u)=u\in\mfU.
\end{align*}

We state the following result (see~\cite[Theorem 4.10]{phdthesis_kr}), which will be useful for some specifications of our setup. We refer to~\cite{K:06} for the notion of differential characteristics of an affine process and we denote by $V^1$ and $V^2$ two real vector spaces. 

\begin{lemma}\label{mixedstate}
Let $X=(X^1,X^2)$ be an affine process taking values in a state space of the form $D=D^1\times D^2\subseteq V^1\times V^2$, with initial value $X_0=(x^1,x^2)$ and such that the differential characteristics of $X^2$ only depend on $X^1$.
Then there exist functions $\tilde{\phi}:\mfD\rightarrow\mathbb{C}$ and $\tilde{\psi}:\mfD\rightarrow V^1+\im V^1$ such that, for every $(t,u_1,u_2)\in\mfD$, it holds that
\[
\mathbb{E}\left[e^{\langle u_1,X^1_t\rangle+\langle u_2,X^2_t\rangle}\right]=e^{\tilde{\phi}(t,u_1,u_2)+\langle \tilde{\psi}(t,u_1,u_2),x^1\rangle+\langle u_2,x^2\rangle}.
\]
\end{lemma}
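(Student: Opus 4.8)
The key observation is that the hypothesis ``$X = (X^1, X^2)$ is affine on $D^1 \times D^2$ and the differential characteristics of $X^2$ depend only on $X^1$'' places us precisely in the situation where the full generalized Riccati system decouples. The plan is to exploit the general affine structure \eqref{affineproperty} together with the semiflow relations \eqref{semiflow} and the Riccati ODEs, and to show that the $X^2$-component of $\psi$ stays frozen at its initial value $u_2$ along the flow, while the $\phi$-component and the $X^1$-component of $\psi$ form a closed subsystem. Concretely, write $\psi = (\psi^1, \psi^2)$ with $\psi^j$ taking values in $V^j + \im V^j$, and similarly decompose the functional characteristic $R = (R^1, R^2)$ from \eqref{eq:F_R}. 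First I would argue that because the differential characteristics of $X^2$ do not depend on the value of $X^2$ itself (only on $X^1$), the drift, diffusion and jump parts governing $X^2$ produce, via the standard computation identifying $F$ and $R$ with the affine generator (see \cite{cfmt11,CT:13,krm12}), a function $R^2$ that does not depend on the $u_2$-slot being exponentiated against $X^2$ — more precisely, $R^2(u_1, u_2)$ is such that the ODE for $\psi^2$ is trivial in the sense described next.

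Second, I would look at the Riccati ODE for the $X^2$-block. The correct statement here is subtler than ``$\psi^2$ is constant'': what one actually gets is that the component of the transform conjugate to $X^2$ remains equal to $u_2$. The clean way to see this is to posit the ansatz
\[
\mathbb{E}\bigl[e^{\langle u_1, X^1_t\rangle + \langle u_2, X^2_t\rangle}\bigr]
= e^{\tilde\phi(t,u_1,u_2) + \langle \tilde\psi(t,u_1,u_2), x^1\rangle + \langle u_2, x^2\rangle}
\]
and verify it is consistent. Differentiating in $t$ and using the Markov property (Chapman--Kolmogorov / the martingale property of the conditional transform), one is led to a system in which $\tilde\psi$ solves a Riccati ODE driven by $R^1$ evaluated at $(\tilde\psi, u_2)$ — the second argument staying fixed at $u_2$ precisely because the $X^2$-characteristics feed back only through $X^1$ — and $\tilde\phi$ solves $\partial_t \tilde\phi = F(\tilde\psi, u_2)$ with $\tilde\phi(0,\cdot)=0$. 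The fact that the $X^2$-exponent does not evolve is the manifestation of $R^2 \equiv 0$ in the appropriate coordinates: since $X^2$ enters no characteristic (not even its own), the generator applied to $e^{\langle u_2, x^2\rangle}$ in the $x^2$-variable produces only terms already accounted for, and the $\langle u_2, x^2\rangle$ term passes through unchanged. One then invokes existence and uniqueness for the (lower-dimensional) Riccati system in $(\tilde\phi, \tilde\psi)$ on $\mfD$, which holds by the regularity of $X$ established via \cite[Theorem 7]{CT:13}, to define $\tilde\phi$ and $\tilde\psi$ and to conclude the identity holds on all of $\mfD$.

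Alternatively — and this is probably the cleanest write-up — one avoids re-deriving the Riccati system and argues directly: by the affine property applied to the \emph{full} process $X$, $\mathbb{E}[e^{\langle u_1, X^1_t\rangle + \langle u_2, X^2_t\rangle} \mid X_0 = (x^1, x^2)] = e^{\phi(t, u_1, u_2) + \langle \psi^1(t,u_1,u_2), x^1\rangle + \langle \psi^2(t,u_1,u_2), x^2\rangle}$; it then remains only to show $\psi^2(t, u_1, u_2) = u_2$ for all $(t, u_1, u_2) \in \mfD$, after which one simply sets $\tilde\phi := \phi$ and $\tilde\psi := \psi^1$. To see $\psi^2 \equiv u_2$: the Riccati ODE gives $\partial_t \psi^2 = R^2(\psi^1, \psi^2)$ with $\psi^2(0, \cdot) = u_2$, so it suffices to show $R^2 \equiv 0$. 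But $R^2$ is, up to the standard identification, the $V^2$-component of the symbol of the affine generator, which encodes exactly the dependence of the $X^2$-characteristics on $X^2$; by hypothesis that dependence is absent, so $R^2 \equiv 0$, hence $\psi^2(t,\cdot) \equiv u_2$ by uniqueness of the (trivial) ODE. This reduces the matter entirely to an inspection of the admissibility parametrization of the generator.

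\textbf{Main obstacle.} The delicate point is the passage ``$X^2$-characteristics independent of $X^2$'' $\Rightarrow$ ``$R^2 \equiv 0$'', which requires being careful about what ``differential characteristics'' means (cf.\ \cite{K:06}) and about the precise form of the affine generator on the product state space $D^1 \times D^2$ — in particular handling the diffusion and jump terms, not merely the drift, and making sure no cross-terms in the Lévy–Khintchine triplet secretly reintroduce $x^2$-dependence. One must also be slightly careful that $\mfD$ is the natural domain on which the extended transform and the Riccati solutions live, so that the claimed identity is asserted exactly where both sides make sense; this is where Remark \ref{rem:convex_expmom} and the regularity results cited there do the work. Everything else is a routine specialization of the standard affine machinery.
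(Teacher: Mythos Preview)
The paper does not prove this lemma but cites \cite[Theorem 4.10]{phdthesis_kr}, so there is no detailed argument to compare against; your strategy --- take the full affine representation $e^{\phi+\langle\psi^1,x^1\rangle+\langle\psi^2,x^2\rangle}$ and reduce to showing $\psi^2(t,u_1,u_2)\equiv u_2$ via $R^2\equiv 0$ in the Riccati system --- is the standard one and is what the cited reference does.

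There is, however, a real gap at exactly the step you flag as the main obstacle. You assert that $R^2$ ``encodes exactly the dependence of the $X^2$-characteristics on $X^2$''. That is not what $R^2$ is: it is the coefficient of $x^2$ in the full affine symbol
\[
\langle b(x),u\rangle + \tfrac12\langle u,c(x)u\rangle + \int\bigl(e^{\langle u,\xi\rangle}-1-\langle u,\chi(\xi)\rangle\bigr)\,\nu(x,d\xi),
\]
and therefore picks up the $x^2$-linear part of \emph{every} characteristic, including those governing $X^1$. A two-dimensional Gaussian example makes the point: take $D^1=D^2=\RR$, $dX^1_t=(-X^1_t+X^2_t)\,dt+dW_t$ and $dX^2_t=X^1_t\,dt$. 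The characteristics of $X^2$ depend only on $X^1$, yet the symbol has $x^2$-coefficient $R^2(u_1,u_2)=u_1$, so $\partial_t\psi^2=\psi^1\not\equiv 0$ and the conclusion of the lemma fails. What is actually required --- and what holds in every application the paper makes, notably $Y=(X,\int_0^\cdot X_s\,ds)$ with $X$ autonomously affine --- is that \emph{no} characteristic of the joint process depends on $x^2$. Under that stronger reading the implication $R^2\equiv 0$ is immediate and the remainder of your argument goes through verbatim; but your justification as written conflates ``$X^2$-block independent of $x^2$'' with ``full triplet independent of $x^2$'', and only the latter yields $R^2\equiv 0$.
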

Typical examples where affine processes satisfying the assumptions of Lemma \ref{mixedstate} are employed are short rate models (see Section \ref{sec:short_rate}) and affine stochastic volatility models (see, e.g.,~\cite[Chapter 5]{phdthesis_Cuchiero}).

\subsection{Definition and general properties of affine multi-curve models}	\label{sec:aff_model}

As explained in Section~\ref{sec:abstract_setting}, our general setup assumes the existence of a num\'eraire-martingale measure couple $(B,\QQ)$, meaning that the $B$-discounted price of every traded asset is  a $\QQ$-martingale. Recall also that we consider as basic traded assets the OIS zero-coupon bonds and the FRA contracts, for all maturities $T\in[0,\T]$ and for a finite set of tenors $\{\delta_1,\ldots,\delta_m\}$, with $\delta_1<\ldots<\delta_m$, for some $m\in\mathbb{N}$.

Letting $X=(X_t)_{0\leq t\leq \T}$ be an affine process taking values in a state space $D\subseteq V$ under a measure $\QQ$, we are now in a position to give the general definition of an \emph{affine multi-curve model}. 

\begin{definition}	\label{def:aff_model}
Let $\mathbf{u}=(u_0,u_1,\ldots,u_m)$ be a family of functions $u_i:[0,\T]\rightarrow V$, $i=0,1,\ldots,m$, such that $u_0(t)\in\mfU_t$ and $u_i(t)+u_0(t)\in\mfU_t$, for every $i=1,\ldots,m$ and $t\in[0,\T]$, and $\mathbf{v}=(v_0,v_1,\ldots,v_m)$ a family of functions $v_i:[0,\T]\rightarrow\RR$, $i=0,1,\ldots,m$.
We say that the triplet $(X,\mathbf{u},\mathbf{v})$ is an \emph{affine multi-curve model} if
\begin{enumerate}
\item 
the logarithm of the num\'eraire process $B_t$ satisfies
\begin{equation}	\label{eq:log_numeraire}
\log B_t = -v_0(t)-\langle u_0(t),X_t\rangle,
\qquad\text{ for all } t\in[0,\T];
\end{equation}
\item
the logarithm of the spot multiplicative spreads $\{S^{\delta_i}(t,t);i=1,\ldots,m\}$ satisfies
\begin{equation}	\label{eq:log_spread}
\log S^{\delta_i}(t,t) = v_i(t)+\langle u_i(t),X_t\rangle,
\qquad\text{ for all } t\in[0,\T] \text{ and }i=1,\ldots,m.
\end{equation}
\end{enumerate}
\end{definition}

Depending on the choice of the num\'eraire-martingale measure couple $(B,\QQ)$ and of the triplet $(X,\mathbf{u},\mathbf{v})$, different modeling approaches can be obtained from the present general setup. For instance:
\begin{itemize}
\item if the num\'eraire asset $B$ is chosen to be the OIS bank account, $\QQ$ is the corresponding (spot) martingale measure and $B$ is modeled via an OIS short rate $r=(r_t)_{0\leq t\leq \T}$, then Definition \ref{def:aff_model} embeds an affine framework for modeling the OIS short rate and the multiplicative spreads. This specification will be analyzed in detail in Section \ref{sec:short_rate} (compare also with Section \ref{sec:rel_short_rate});
\item if the OIS zero-coupon bond with maturity $\T$ is chosen as num\'eraire and $\QQ$ is the $\T$-forward measure, then Definition \ref{def:aff_model} yields an extension of the affine Libor model with multiple curves recently introduced in \cite{GPSS14} (see Section \ref{sec:rel_affineLibor});
\item if $\QQ$ is the physical probability measure and $B$ is chosen as the \emph{growth-optimal portfolio} (in the spirit of the
\emph{benchmark approach} to interest rate modeling, see, e.g.,~\cite{BLNSP10,TP15} and, for a general account of the benchmark approach, \cite{PH}) or, more generally, as a {\em state-price density} (as recently considered in the rational multi-curve models \cite{CMNS15,FLT:14, NS15}), then Definition \ref{def:aff_model} leads to a {\em real-world} approach to the modeling of multiple curves. 
\end{itemize}

Affine multi-curve models lead to a tractable structure for OIS bond prices and multiplicative spreads. More precisely, affine multi-curve models can be characterized via the exponentially affine form of OIS bond prices and spreads. To this end, let us give the following definition.

\begin{definition}	\label{def:QAffineModel}
The affine process $X$ is said to \emph{generate exponentially affine OIS bond prices and spreads} if
\begin{enumerate}
\item
the $B$-discounted prices of OIS zero-coupon bonds satisfy
\begin{equation} \label{eq:generalAffineBond}
\frac{B(t,T)}{B_t} = \exp\bigl(\cA^0(t,T)+\langle \cB^0(t,T),X_t\rangle\bigr),
\qquad\text{ for all }0\leq t\leq T\leq\T,
\end{equation}	
for some functions $\cA^0:[0,\T]\times[0,\T]\rightarrow\RR$ and $\cB^0:[0,\T]\times[0,\T]\rightarrow V$, and
\item 
the multiplicative spreads satisfy
\begin{equation}	\label{eq:generalAffineSpread}
S^{\delta_i}(t,T) = \exp\bigl(\cA^i(t,T)+\langle \cB^i(t,T),X_t\rangle\bigr),
\qquad\text{ for all }0\leq t\leq T\leq\T,
\end{equation}
for some functions $\cA^i:[0,\T]\times[0,\T]\rightarrow\RR$ and $\cB^i:[0,\T]\times[0,\T]\rightarrow V$, for all $i=1,\ldots,m$.
\end{enumerate}
\end{definition}

The following proposition provides the announced characterization of affine multi-curve models, thereby showing that Definition \ref{def:aff_model} is equivalent to Definition \ref{def:QAffineModel}.

\begin{proposition}	\label{prop:QAffineCharacterization}
Let $X$ be an affine process. Then the following hold:
\begin{enumerate}
\item if $(X,\mathbf{u},\mathbf{v})$ is an affine multi-curve model, then $X$ generates exponentially affine OIS bond prices and spreads and the functions $\cA^i$, $\cB^i$, $i=0,1,\ldots,m$, appearing in \eqref{eq:generalAffineBond}-\eqref{eq:generalAffineSpread} are given by
\begin{equation}	\label{eq:correspondence}	\begin{aligned}
\cA^0(t,T) &= v_0(T)+\phi\bigl(T-t,u_0(T)\bigr),\\
\cB^0(t,T) &= \psi\bigl(T-t,u_0(T)\bigr),\\
\cA^i(t,T) &= v_i(T)+\phi\bigl(T-t,u_i(T)+u_0(T)\bigr)-\phi\bigl(T-t,u_0(T)\bigr),\\
\cB^i(t,T) &= \psi\bigl(T-t,u_i(T)+u_0(T)\bigr)-\psi\bigl(T-t,u_0(T)\bigr),
\end{aligned}	\end{equation}
for all $0\leq t\leq T\leq \T$ and $i=1,\ldots,m$, where $\phi$ and $\psi$ denote the characteristic exponents of $X$ as in Definition \ref{def:affine};
\item 
conversely, if $X$ generates exponentially affine OIS bond prices and spreads, then $(X,\mathbf{u},\mathbf{v})$ is an affine multi-curve model, where the functions $u_i,v_i$, $i=0,1,\ldots,m$, are given by
\begin{equation}	\label{eq:correspondence_2}
v_0(t) = \cA^0(t,t),
\qquad u_0(t) = \cB^0(t,t),	
\qquad v_i(t) = \cA^i(t,t),
\qquad u_i(t) = \cB^i(t,t),
\end{equation}
for all $t\in[0,\T]$ and $i=1,\ldots,m$.
\end{enumerate}
\end{proposition}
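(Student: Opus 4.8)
\emph{Approach.} Both implications are obtained by translating between the diagonal quantities $B_t$, $S^{\delta_i}(t,t)$ and the term structures $B(t,T)/B_t$, $S^{\delta_i}(t,T)$: Proposition~\ref{prop:bonds_spreads} writes the latter as conditional expectations under $\QQ$ (after a change of measure for the spread), and the affine transform~\eqref{affineproperty} evaluates these expectations. I will repeatedly use the conditional form of~\eqref{affineproperty}: since $X$ is a conservative time-homogeneous Markov process, whenever $(T-t,\zeta)\in\mfD$ one has $\EE[e^{\langle\zeta,X_T\rangle}\mid\cF_t]=\int_De^{\langle\zeta,\xi\rangle}p_{T-t}(X_t,d\xi)=e^{\phi(T-t,\zeta)+\langle\psi(T-t,\zeta),X_t\rangle}$, where~\eqref{affineproperty} is applied at the random initial point $X_t\in D$ — licit because Definition~\ref{def:affine} grants~\eqref{affineproperty} for every $x\in D$ with a common domain $\mfD$.

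\emph{Part (i).} Let $(X,\mathbf u,\mathbf v)$ be an affine multi-curve model. By~\eqref{eq:bond_gen} and~\eqref{eq:log_numeraire}, $B(t,T)/B_t=\EE[B_T^{-1}\mid\cF_t]=e^{v_0(T)}\,\EE[e^{\langle u_0(T),X_T\rangle}\mid\cF_t]$; since $u_0(T)\in\mfU_T\subseteq\mfU_{T-t}$ (the sets in~\eqref{eq:expmom} decrease in the time index), the conditional affine formula gives~\eqref{eq:generalAffineBond} with $\cA^0,\cB^0$ as in~\eqref{eq:correspondence}. For the spreads I would rewrite the $T$-forward expectation in~\eqref{eq:spread_gen} under $\QQ$ by Bayes' rule: $d\QQ^T/d\QQ=1/(B_TB(0,T))$ has $\QQ$-density process $t\mapsto(B(t,T)/B_t)/B(0,T)$ by~\eqref{eq:bond_gen}, so $S^{\delta_i}(t,T)=(B_t/B(t,T))\,\EE[S^{\delta_i}(T,T)/B_T\mid\cF_t]$. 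By~\eqref{eq:log_numeraire}--\eqref{eq:log_spread}, $S^{\delta_i}(T,T)/B_T=e^{v_i(T)+v_0(T)}e^{\langle u_i(T)+u_0(T),X_T\rangle}$ with $u_i(T)+u_0(T)\in\mfU_T$; applying the conditional affine formula and multiplying by $B_t/B(t,T)$ — the reciprocal of the expression just found for $B(t,T)/B_t$ — yields~\eqref{eq:generalAffineSpread} with $\cA^i,\cB^i$ as in~\eqref{eq:correspondence}. (That the resulting $\cA^i$ are real and the $\cB^i$ lie in $V$, as required by Definition~\ref{def:QAffineModel}, is immediate from positivity of the transforms of real exponents.)

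\emph{Part (ii).} Conversely, suppose $X$ generates exponentially affine OIS bond prices and spreads and set $v_0(t):=\cA^0(t,t)$, $u_0(t):=\cB^0(t,t)$, $v_i(t):=\cA^i(t,t)$, $u_i(t):=\cB^i(t,t)$ (cf.~\eqref{eq:correspondence_2}). Evaluating~\eqref{eq:generalAffineBond} at $t=T$ and using $B(T,T)=1$ gives $B_T^{-1}=e^{v_0(T)+\langle u_0(T),X_T\rangle}$, i.e.~\eqref{eq:log_numeraire}, while~\eqref{eq:generalAffineSpread} at $t=T$ is exactly~\eqref{eq:log_spread}. It then remains to verify the admissibility conditions of Definition~\ref{def:aff_model}. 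By Remark~\ref{rem:convex_expmom}(i) it suffices to check $\EE[e^{\langle\cB^0(t,t),X_t\rangle}]<\infty$ and $\EE[e^{\langle\cB^i(t,t)+\cB^0(t,t),X_t\rangle}]<\infty$. The former follows from $e^{\cA^0(t,t)}\EE[e^{\langle\cB^0(t,t),X_t\rangle}]=\EE[B_t^{-1}]=B(0,t)<\infty$ (by~\eqref{eq:bond_gen}); for the latter, $S^{\delta_i}(t,t)/B_t=e^{\cA^i(t,t)+\cA^0(t,t)}e^{\langle\cB^i(t,t)+\cB^0(t,t),X_t\rangle}$ and, since $(S^{\delta_i}(s,t))_{0\le s\le t}$ is a $\QQ^t$-martingale by~\eqref{eq:spread_gen}, $\EE[S^{\delta_i}(t,t)/B_t]=B(0,t)\,\EE^{\QQ^t}[S^{\delta_i}(t,t)]=B(0,t)\,S^{\delta_i}(0,t)<\infty$.

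\emph{Main difficulty.} No step is computationally heavy; the care is concentrated in two places. In Part~(i), one must undo the forward-measure changes in Proposition~\ref{prop:bonds_spreads} so that only the single transform~\eqref{affineproperty} is needed (immediate for the bond, a short Bayes computation for the spread). In Part~(ii), the admissibility conditions of Definition~\ref{def:aff_model} are exponential-moment conditions at \emph{every} time $s\le t$, and collapsing them to finiteness at the single time $t$ is precisely what Remark~\ref{rem:convex_expmom}(i) provides; without it one would have to propagate the finite exponential moment through all intermediate times. A minor but genuine point is the conditional version of the affine property used at the start, which rests on time-homogeneity of $X$ and on the domain $\mfD$ being independent of the initial state.
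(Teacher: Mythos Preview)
Your proposal is correct and follows essentially the same approach as the paper's own proof: in part~(i) you compute $B(t,T)/B_t$ via~\eqref{eq:bond_gen} and the conditional affine transform, and for the spread you pass from $\QQ^T$ to $\QQ$ via Bayes' formula before applying the affine transform; in part~(ii) you evaluate~\eqref{eq:generalAffineBond}--\eqref{eq:generalAffineSpread} on the diagonal $t=T$ and then verify the exponential-moment conditions of Definition~\ref{def:aff_model} through Remark~\ref{rem:convex_expmom}, exactly as the paper does. Your write-up is in places slightly more explicit than the paper (e.g.\ the inclusion $\mfU_T\subseteq\mfU_{T-t}$ and the justification of the conditional affine formula via the Markov property), but there is no substantive difference in method.
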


\begin{proof}
Suppose first that $(X,\mathbf{u},\mathbf{v})$ is an affine multi-curve model. Then, in view of Proposition \ref{prop:bonds_spreads}, OIS bond prices are given by
\[
\frac{B(t,T)}{B_t} = \EE\left[\frac{1}{B_T}\Bigr|\cF_t\right]
= \EE\left[e^{v_0(T)+\langle u_0(T),X_T\rangle}\bigr|\cF_t\right]
= e^{v_0(T)+\phi(T-t,u_0(T))+\langle\psi(T-t,u_0(T)),X_t\rangle},
\]
where the last equality follows from \eqref{affineproperty} together with the assumption that $u_0(T)\in\mfU_T$ (see also Remark \ref{rem:convex_expmom}). This proves that representation \eqref{eq:generalAffineBond} holds with the functions $\cA^0$ and $\cB^0$ being given as in \eqref{eq:correspondence}. Similarly, Proposition \ref{prop:bonds_spreads} together with the assumption that $u_i(T)+u_0(T)\in\mfU_T$, for every $T\in[0,\T]$ and $i=1,\ldots,m$, implies that the multiplicative spreads satisfy
\begin{align*}
S^{\delta_i}(t,T) &= \EE^{\QQ^T}[S^{\delta_i}(T,T)\bigr|\cF_t]
= \EE^{\QQ^T}\left[e^{v_i(T)+\langle u_i(T),X_T\rangle}\bigr|\cF_t\right]	\\
&= \frac{B_t}{B(t,T)}\EE\left[e^{v_i(T)+v_0(T)+\langle u_i(T)+u_0(T),X_T\rangle}\bigr|\cF_t\right]	\\
&= e^{v_i(T)+\phi(T-t,u_i(T)+u_0(T))-\phi(T-t,u_0(T))+\langle\psi(T-t,u_i(T)+u_0(T))-\psi(T-t,u_0(T)),X_t\rangle},
\end{align*}
thus showing that representation \eqref{eq:generalAffineSpread} holds with the functions $\cA^i$ and $\cB^i$ being given as in \eqref{eq:correspondence}.

Conversely, suppose that $X$ generates exponentially affine OIS bond prices and spreads. Then, for every $t\in[0,\T]$, letting $T=t$ in \eqref{eq:generalAffineBond}-\eqref{eq:generalAffineSpread} and recalling that $B(t,t)=1$, it follows that
\begin{align*}
\log B_t = -\cA^0(t,t) - \langle\cB^0(t,t),X_t\rangle
\qquad\text{and}\qquad
\log S^{\delta_i}(t,t) = \cA^i(t,t) + \langle\cB^i(t,t),X_t\rangle.
\end{align*}
Defining the functions $u_i$ and $v_i$ as in \eqref{eq:correspondence_2}, this proves that representations \eqref{eq:log_numeraire}-\eqref{eq:log_spread} hold true.
It remains to prove that $u_0(T)=\cB^0(T,T)\in\mfU_T$ and $u_i(T)+u_0(T)=\cB^i(T,T)+\cB^0(T,T)\in\mfU_T$, for every $T\in[0,\T]$ and $i=1,\ldots,m$. This follows from Proposition \ref{prop:bonds_spreads}, since
\begin{align*}
\EE\left[e^{\langle u_0(T),X_T\rangle}\right] &= e^{-v_0(T)}\EE\left[\frac{1}{B_T}\right] = e^{-v_0(T)}B(0,T) < +\infty,	\\
\EE\left[e^{\langle u_i(T)+u_0(T),X_T\rangle}\right]
&= e^{-v_i(T)-v_0(T)}\EE\left[\frac{S^{\delta_i}(T,T)}{B_T}\right]
= e^{-v_i(T)-v_0(T)}B(0,T)\EE^{\QQ^T}[S^{\delta_i}(T,T)]	\\
&= e^{-v_i(T)-v_0(T)}B(0,T)S^{\delta_i}(0,T) < +\infty.
\end{align*}
Together with Remark \ref{rem:convex_expmom}, this implies that $u_0(T)\in\mfU_T$ and $u_i(T)+u_0(T)\in\mfU_T$, for all $T\in[0,\T]$ and $i=1,\ldots,m$, thus proving that $(X,\mathbf{u},\mathbf{v})$ is an affine multi-curve model.
\end{proof}

In particular, Proposition \ref{prop:QAffineCharacterization} implies that affine multi-curve models admit explicit valuation formulas for all linear interest rate derivatives (i.e., FRA contracts, interest rate swaps, overnight swaps and basis swaps), since their prices can always be expressed in terms of OIS bond prices and multiplicative spreads, as detailed in  Appendix~\ref{sec:noopt}. 

A major issue when modeling multiple curves consists in characterizing when the multiplicative spreads are greater than one and ordered with respect to the tenor's length, as it is the case in typical market scenarios. To this effect, we can establish the following result (compare also with~\cite[Corollary~3.17]{CFG:14}).

\begin{proposition}\label{prop:X0Y}
Let $(X,\mathbf{u},\mathbf{v})$ be an affine multi-curve model.
Suppose that the process $X$ is of the form $X=(X^0,Z)$ and takes values in a state space $D=D_{X^0} \times C_Z$, with $C_Z$ a cone. Suppose that $v_i(t)\geq 0$, for all $t\in[0,\T]$ and that the functions $u_i$ are of the form $u_i=(0,w_i)$ with $w_i:[0,\T]\rightarrow C_Z^*$, where $C_Z^*$ denotes the dual cone of $C_Z$, for all $i=1,\ldots,m$. Then $S^{\delta_i}(t,T) \geq 1$ for all $0\leq t \leq T\leq \T$ and $i\in \{1, \ldots,m\}$. Moreover, if in addition
\[
v_1(t) \leq v_2(t) \leq \ldots \leq v_m(t)
\qquad\text{and}\qquad
w_1(t) \prec w_2(t) \prec \ldots \prec w_m(t),
\qquad \text{ for all }t\in[0,\T],
\]
with $\prec$ denoting the partial order on $C_Z^*$, then it holds that 
\[
S^{\delta_1}(t,T)\leq S^{\delta_2}(t,T) \leq \ldots \leq S^{\delta_m}(t,T),
\qquad \text{ for all } 0\leq t \leq T \leq \T.
\]
\end{proposition}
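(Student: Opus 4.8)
The key observation is that, by Proposition~\ref{prop:QAffineCharacterization}, the multiplicative spread admits the exponential-affine representation $S^{\delta_i}(t,T) = \exp\bigl(\cA^i(t,T)+\langle\cB^i(t,T),X_t\rangle\bigr)$, with $\cA^i$ and $\cB^i$ given explicitly in \eqref{eq:correspondence} in terms of the characteristic exponents $\phi,\psi$ of $X$ and the functions $v_i,u_i$. Since $X_t = (X^0_t, Z_t)$ with $Z_t \in C_Z$, and $\cB^i(t,T)$ will turn out to have vanishing $X^0$-component, the pairing $\langle\cB^i(t,T),X_t\rangle$ reduces to $\langle w_i\text{-part}, Z_t\rangle$, which is nonnegative precisely when the $Z$-component of $\cB^i(t,T)$ lies in the dual cone $C_Z^*$. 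So the whole statement is really about propagating the cone/sign conditions through the Riccati flow $\psi$ and through $\phi$.

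\textbf{Step 1: spreads are $\geq 1$.} First I would show that, under the stated hypotheses, $\cA^i(t,T) \geq 0$ and the $Z$-component of $\cB^i(t,T)$ lies in $C_Z^*$ for all $0\leq t\leq T\leq\T$. From \eqref{eq:correspondence}, $\cB^i(t,T) = \psi\bigl(T-t,u_i(T)+u_0(T)\bigr) - \psi\bigl(T-t,u_0(T)\bigr)$. Writing $u_i = (0,w_i)$, the argument $u_i(T)+u_0(T)$ differs from $u_0(T)$ only by adding $w_i(T)\in C_Z^*$ in the $Z$-slot. Because the differential characteristics of a cone-valued affine component depend monotonically on the $Z$-coordinate and the Riccati vector field $R$ is quasi-monotone with respect to the cone order, the flow $\psi(s,\cdot)$ is order-preserving in the relevant sense: increasing the initial condition by an element of $C_Z^*$ increases $\psi$ by an element of $C_Z^*$, and likewise $\phi$ increases. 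This comparison gives simultaneously that the $Z$-part of $\cB^i$ is in $C_Z^*$ and that $\phi(T-t,u_i(T)+u_0(T)) - \phi(T-t,u_0(T)) \geq 0$; combined with $v_i(T)\geq 0$ this yields $\cA^i(t,T)\geq 0$. Since $Z_t \in C_Z$, we get $\langle\cB^i(t,T),X_t\rangle = \langle Z\text{-part of }\cB^i(t,T), Z_t\rangle \geq 0$, hence $S^{\delta_i}(t,T) = \exp(\cA^i + \langle\cB^i, X_t\rangle) \geq 1$.

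\textbf{Step 2: ordering of spreads.} For the monotonicity in $i$, I would compare $S^{\delta_{i+1}}(t,T)$ with $S^{\delta_i}(t,T)$ by comparing exponents. From \eqref{eq:correspondence}, $\cA^{i+1}(t,T) - \cA^i(t,T) = \bigl(v_{i+1}(T) - v_i(T)\bigr) + \bigl[\phi(T-t,u_{i+1}(T)+u_0(T)) - \phi(T-t,u_i(T)+u_0(T))\bigr]$, and similarly $\cB^{i+1}(t,T) - \cB^i(t,T) = \psi(T-t,u_{i+1}(T)+u_0(T)) - \psi(T-t,u_i(T)+u_0(T))$. Now $u_{i+1}(T)+u_0(T)$ differs from $u_i(T)+u_0(T)$ by $w_{i+1}(T) - w_i(T) \in C_Z^*$ in the $Z$-slot (using $w_i(t)\prec w_{i+1}(t)$), so the same comparison/quasi-monotonicity argument as in Step~1 shows the $Z$-part of $\cB^{i+1}(t,T) - \cB^i(t,T)$ lies in $C_Z^*$ and the bracketed $\phi$-difference is nonnegative; with $v_i(T)\leq v_{i+1}(T)$ we obtain $\cA^{i+1}(t,T)\geq \cA^i(t,T)$. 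Pairing against $Z_t\in C_Z$ then gives $\langle\cB^{i+1}(t,T),X_t\rangle \geq \langle\cB^i(t,T),X_t\rangle$, so $S^{\delta_{i+1}}(t,T)\geq S^{\delta_i}(t,T)$, and iterating over $i$ gives the full chain of inequalities.

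\textbf{Main obstacle.} The routine bookkeeping with \eqref{eq:correspondence} is harmless; the real work is the monotonicity of the generalized Riccati flow: that adding a dual-cone element to the initial datum of $\psi$ (in the cone-valued block, with zero added in the $X^0$-block) produces a dual-cone increment in $\psi$ and a nonnegative increment in $\phi$, uniformly in the time argument. This is where the structure of affine processes on $D_{X^0}\times C_Z$ enters — it hinges on the admissibility/quasi-monotonicity of $(F,R)$ with respect to $C_Z$ and its dual, as in the canonical affine-process comparison results; I would invoke the cited literature on affine processes on cones (e.g.\ \cite{cfmt11,krm12,CT:13}) rather than reprove it. Two minor points to handle carefully along the way: that the exponential moments are finite so that $\phi,\psi$ are genuinely defined at all the arguments in question (this is built into Definition~\ref{def:aff_model} via the $\mfU_t$-conditions and re-derived in Proposition~\ref{prop:QAffineCharacterization}), and that the $X^0$-component of $\cB^i(t,T)$ indeed vanishes so that the pairing with $X_t$ only sees the cone-valued block — this follows because $u_i$ and $u_0$ have identical $X^0$-components (namely that of $u_0$), so the difference killing the $X^0$-part is automatic by uniqueness of the Riccati solution.
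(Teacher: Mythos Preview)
Your approach is far more complicated than needed, and it has a genuine gap. The paper's proof is essentially a one-liner via the martingale property: by Proposition~\ref{prop:bonds_spreads}, $S^{\delta_i}(t,T) = \EE^{\QQ^T}[S^{\delta_i}(T,T)\mid\cF_t] = \EE^{\QQ^T}\bigl[e^{v_i(T)+\langle w_i(T),Z_T\rangle}\bigm|\cF_t\bigr]$. Since $v_i(T)\geq 0$, $w_i(T)\in C_Z^*$ and $Z_T\in C_Z$, the integrand is $\geq 1$ pointwise, so the conditional expectation is $\geq 1$. The ordering follows identically: $v_i(T)\leq v_{i+1}(T)$ and $w_{i+1}(T)-w_i(T)\in C_Z^*$ give $S^{\delta_{i+1}}(T,T)\geq S^{\delta_i}(T,T)$ pointwise, and conditional expectation preserves it. No Riccati analysis whatsoever.

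Your route via Proposition~\ref{prop:QAffineCharacterization} tries to prove \emph{separately} that $\cA^i(t,T)\geq 0$ and that $\cB^i(t,T)$ lies in the right cone, which is strictly harder than the combined inequality $\cA^i(t,T)+\langle\cB^i(t,T),X_t\rangle\geq 0$ that you actually need. More seriously, your justification that ``the $X^0$-component of $\cB^i(t,T)$ vanishes \ldots\ by uniqueness of the Riccati solution'' is wrong: the Riccati system is nonlinear and in general coupled across the $X^0$- and $Z$-blocks, so two initial data agreeing in the $X^0$-slot but differing in the $Z$-slot need not produce $\psi$-flows with identical $X^0$-components. And the quasi-monotonicity/comparison results you want to invoke are established in the cited references for specific state spaces (canonical $\RR^n_+\times\RR^{d-n}$, or $\mathbb{S}^d_+$), not for an arbitrary product $D_{X^0}\times C_Z$; you cannot simply cite them here. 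If you try to repair the argument by reinterpreting $\phi(T-t,\cdot)+\langle\psi(T-t,\cdot),x\rangle$ as $\log\EE^x[e^{\langle\cdot,X_{T-t}\rangle}]$ and comparing expectations, you have in fact reproduced the paper's martingale argument through a longer detour.
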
 

\begin{proof}
The assertion is a direct consequence of Proposition \ref{prop:bonds_spreads}, stating that the process $(S^{\delta_i}(t,T))_{0\leq t\leq T}$ is a $\QQ^T$-martingale, for every $i=1,\ldots,m$ and $T\in[0,\T]$. It thus satisfies
\[
S^{\delta_i}(t,T)=\EE^{\QQ^T}[S^{\delta_i}(T,T)|\cF_t]=\EE^{\QQ^T}[e^{v_i(T) +\langle w_i(T) , Z_T \rangle}|\cF_t],
\qquad \text{ for all } t\in[0,\T].
\]
\end{proof}

Besides the modeling flexibility and tractability ensured by affine processes, this facility of generating spreads which are greater than one and ordered with respect to the tenor's length represents one of the main advantages of specifying multiplicative spreads via~\eqref{eq:log_spread}.

The family of functions $\mathbf{v}=(v_0,v_1,\ldots,v_m)$ appearing in Definition \ref{def:aff_model} can always be chosen in such a way to automatically achieve an exact fit to the initially observed term structures of OIS bond prices and multiplicative spreads. 
We denote by $B^M(0,T)$ the price of an OIS zero-coupon bond with maturity $T$ observed on the market at the initial date $t=0$ and, similarly, by $S^{M,\delta_i}(0,T)$ the multiplicative spread of tenor $\delta_i$ observed on the market at $t=0$, for $i=1,\ldots,m$ and $T\in[0,\T]$. We say that \emph{an exact fit to the initially observed term structures is achieved} if 
\[
B(0,T) = B^M(0,T)
\qquad\text{and}\qquad
S^{\delta_i}(0,T) = S^{M,\delta_i}(0,T),
\qquad \text{ for all } T\in[0,\T] \text{ and } i=1,\ldots,m.
\]
The following proposition provides a necessary and sufficient condition for an exact fit to hold. If $(X,\mathbf{u},\mathbf{v})$ is an affine multi-curve model, we denote by $B^0(t,T)$ and $S^{0,\delta_i}(t,T)$, for $0\leq t\leq T\leq \T$ and $i=1,\ldots,m$, the theoretical bond prices and spreads computed according to the model $(X,\mathbf{u},\mathbf{0})$ via \eqref{eq:generalAffineBond}-\eqref{eq:correspondence}, i.e., with all the functions $v_i$, $i=0,1,\ldots,m$, set equal to zero.

\begin{proposition}	\label{prop:fitting}
An affine multi-curve model $(X,\mathbf{u},\mathbf{v})$ achieves an exact fit to the initially observed term structures if and only if the family of functions $\mathbf{v}=(v_0,v_1,\ldots,v_m)$ satisfies
\begin{align*}
v_0(t) &=	\log B^M(0,t) - \log B^0(0,t),
&\qquad &\text{ for all }t\in[0,\T],\\
v_i(t) &=  \log S^{M,\delta_i}(0,t) - \log S^{0,\delta_i}(0,t),
&\qquad &\text{ for all }t\in[0,\T] \text{ and }i=1,\ldots,m,
\end{align*}
with $B^0(0,t)$ and $S^{0,\delta_i}(0,t)$ denoting the theoretical bond prices and spreads computed according to the affine multi-curve model $(X,\mathbf{u},\mathbf{0})$ via \eqref{eq:generalAffineBond}-\eqref{eq:correspondence}, for $t\in[0,\T]$ and $i=1,\ldots,m$.
\end{proposition}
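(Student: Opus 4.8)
The plan is to exploit the multiplicative structure that governs how the functions $v_i$ enter bond prices and spreads. The key observation is that in the correspondence \eqref{eq:correspondence}, each $v_i(T)$ appears as a purely additive term in the exponent $\cA^i(t,T)$ and does not affect $\cB^i(t,T)$ at all; moreover, in the formula for $\cA^i$ with $i\geq 1$ the contributions $\phi(T-t,u_i(T)+u_0(T))$ and $\phi(T-t,u_0(T))$ are exactly the $v$-independent parts. Consequently, for any affine multi-curve model $(X,\mathbf{u},\mathbf{v})$ one has the clean factorization $B(0,t)=e^{v_0(t)}B^0(0,t)$ and $S^{\delta_i}(0,t)=e^{v_i(t)}S^{0,\delta_i}(0,t)$ for all $t\in[0,\T]$ and $i=1,\ldots,m$, where $B^0$ and $S^{0,\delta_i}$ are the quantities associated with $(X,\mathbf{u},\mathbf{0})$.

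\textbf{Steps.} First I would write out $B^0(0,t)$ and $S^{0,\delta_i}(0,t)$ explicitly from \eqref{eq:generalAffineBond}--\eqref{eq:correspondence} with all $v_j\equiv 0$: namely $B^0(0,t)=\exp\bigl(\phi(t,u_0(t))+\langle\psi(t,u_0(t)),x\rangle\bigr)$ (using $B_0=1$, i.e. $X_0=x$) and similarly $S^{0,\delta_i}(0,t)=\exp\bigl(\phi(t,u_i(t)+u_0(t))-\phi(t,u_0(t))+\langle\psi(t,u_i(t)+u_0(t))-\psi(t,u_0(t)),x\rangle\bigr)$. Second, comparing with the general formulas \eqref{eq:correspondence} evaluated at $t=0$ (again with $B_0=1$ so that $\langle\cB^0(0,t),X_0\rangle$ and $\langle\cB^i(0,t),X_0\rangle$ are the same in both models), I obtain $\log B(0,t)=v_0(t)+\log B^0(0,t)$ and $\log S^{\delta_i}(0,t)=v_i(t)+\log S^{0,\delta_i}(0,t)$. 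Third, the exact-fit condition $B(0,t)=B^M(0,t)$ and $S^{\delta_i}(0,t)=S^{M,\delta_i}(0,t)$ holds for all $t\in[0,\T]$ and $i=1,\ldots,m$ if and only if, after taking logarithms, $v_0(t)=\log B^M(0,t)-\log B^0(0,t)$ and $v_i(t)=\log S^{M,\delta_i}(0,t)-\log S^{0,\delta_i}(0,t)$, which is exactly the claimed characterization. Both directions of the equivalence are immediate from the displayed identity, since $v_i$ is uniquely determined by it.

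\textbf{Main obstacle.} There is no real obstacle here: once the additive role of the $v_i$ in \eqref{eq:correspondence} is recognized, the proof is a one-line bookkeeping argument. The only point that requires a moment's care is checking that the admissibility requirement $u_i(t)\in\mfU_t$ (and $u_i(t)+u_0(t)\in\mfU_t$) is unaffected by the choice of $\mathbf{v}$ — but this is clear, since the $\mfU_t$-membership of the functions $\mathbf{u}$ is a condition on $\mathbf{u}$ alone and does not involve $\mathbf{v}$ at all, so the model $(X,\mathbf{u},\mathbf{0})$ is well defined precisely when $(X,\mathbf{u},\mathbf{v})$ is. Hence the theoretical quantities $B^0(0,t)$ and $S^{0,\delta_i}(0,t)$ appearing in the statement are well defined, and the equivalence follows.
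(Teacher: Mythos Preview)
Your proposal is correct and follows essentially the same approach as the paper: both arguments establish the factorizations $B(0,t)=e^{v_0(t)}B^0(0,t)$ and $S^{\delta_i}(0,t)=e^{v_i(t)}S^{0,\delta_i}(0,t)$ and then read off the equivalence. The only presentational difference is that you invoke the explicit formulas \eqref{eq:correspondence} from Proposition~\ref{prop:QAffineCharacterization} directly, whereas the paper re-derives the factorizations from the expectation representations and pauses to check explicitly that the $t$-forward measure density $1/(B_tB(0,t))$ coincides for the models $(X,\mathbf{u},\mathbf{v})$ and $(X,\mathbf{u},\mathbf{0})$; your route is slightly more economical since that fact is already baked into \eqref{eq:correspondence}.
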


\begin{proof}
The claim follows by noting that, for all $t\in[0,\T]$ and $i=1,\ldots,m$,
\begin{align*}
B(0,t) &= \EE\left[\frac{1}{B_t}\right] = e^{v_0(t)}\EE[e^{\langle u_0(t),X_t\rangle}] = e^{v_0(t)}B^0(0,t),	\\
S^{\delta_i}(0,t) &= \EE^{\QQ^t}[S^{\delta_i}(t,t)]
= e^{v_i(t)}\EE^{\QQ^t}[e^{\langle u_i(t),X_t\rangle}] = e^{v_i(t)}S^{0,\delta_i}(0,t),
\end{align*}
where in the last equality we used the fact that the density of the $t$-forward measure $\QQ^t$ is the same for both models $(X,\mathbf{u},\mathbf{v})$ and $(X,\mathbf{u},\mathbf{0})$. Indeed, for every $t\in[0,\T]$, it holds that
\[
\frac{1}{B_tB(0,t)} = \frac{e^{v_0(t)+\langle u_0(t),X_t\rangle}}{\EE[e^{v_0(t)+\langle u_0(t),X_t\rangle}]}
= \frac{e^{\langle u_0(t),X_t\rangle}}{\EE[e^{\langle u_0(t),X_t\rangle}]}
= \frac{1}{B^0_tB^0(0,t)},
\]
with $B^0=(B^0_t)_{0\leq t\leq \T}$ denoting the num\'eraire process corresponding to the model $(X,\mathbf{u},\mathbf{0})$.
\end{proof}

\subsection{Affine short rate multi-curve models}		\label{sec:short_rate}

We have so far introduced a general setup based on affine processes for modeling multiple curves. As outlined in the previous subsection, depending on the choice of the num\'eraire-martingale measure couple $(B,\QQ)$, different modeling approaches can be obtained from Definition \ref{def:aff_model}. We now specialize the above general framework to a classical setting where the num\'eraire is given by the OIS bank account, defined with respect to the (risk-free) OIS short rate process $r=(r_t)_{0\leq t\leq \T}$, and $\QQ$ is the associated (spot) martingale measure. More specifically, in this subsection we shall work under the following assumption.

\begin{assumption}	\label{ass:short_rate}
The num\'eraire $B=(B_t)_{0\leq t\leq \T}$ satisfies $B_t=\exp(\int_0^tr_udu)$, for all $t\in[0,\T]$, where $r=(r_t)_{0\leq t\leq \T}$ is a real-valued adapted process denoting the OIS short rate. The probability measure $\QQ$ is such that the $B$-discounted price of every basic traded asset is a $\QQ$-martingale.
\end{assumption}

Throughout this subsection, let $X=(X_t)_{0\leq t\leq \T}$ be an affine process taking values in $D_X\subseteq V_X$. We also introduce the auxiliary affine process $Y=(Y_t)_{0\leq t\leq \T}$ defined by $Y_t:=(X_t,\int_0^tX_udu)$, for $t\in[0,\T]$, and taking values in $D_Y:=D_X\times D_X$. The set $\mfU^Y$ is defined as in \eqref{eq:expmom} with respect to $Y$. We specialize Definition \ref{def:aff_model} as follows.

\begin{definition}	\label{def:short_rate_model}
Let $\lambda\in V_X$, $\ell: [0, \mathbb{T}] \to \mathbb{R}$ such that $\int_0^{\mathbb{T}} |\ell(u)| du  < +\infty $,  $\boldsymbol{\gamma}=(\gamma_1,\ldots,\gamma_m)\in V_X^m$ and $\mathbf{c}=(c_1,\ldots,c_m)$ a family of functions $c_i: [0, \mathbb{T}] \to \mathbb{R}$, $i=1, \ldots, m$. The tuple $(X,\ell,\lambda,\mathbf{c},\boldsymbol{\gamma})$ is said to be an \emph{affine short rate multi-curve model} if $(Y,\mathbf{u},\mathbf{v})$ is an affine multi-curve model in the sense of Definition \ref{def:aff_model}, where
\begin{align*}
v_0(t) &:= -\int_0^t \ell(u) du,
& \text{and }\quad
&u_0(t) := (0,-\lambda),
&\qquad&\text{ for all }t\in[0,\T],	\\
v_i(t) &:= c_i(t)
& \text{and }\quad
&u_i(t) := (\gamma_i,0),
&\qquad&\text{ for all }t\in[0,\T] \text{ and }i=1,\ldots,m.
\end{align*}
\end{definition}

If Assumption \ref{ass:short_rate} holds, so that $B=\exp(\int_0^{\cdot}r_udu)$, assuming that a tuple $(X,\ell,\lambda,\mathbf{c},\boldsymbol{\gamma})$ is an affine short rate multi-curve model in the sense of Definition \ref{def:short_rate_model} is equivalent to letting the OIS short rate $r_t$ and the spot multiplicative spreads $S^{\delta_i}(t,t)$ be specified as follows:
\begin{align}
r_t &= \ell(t) + \langle\lambda,X_t\rangle,
&\text{ for all }&t\in[0,\T],	\label{eq:short_rate_1}\\
\log S^{\delta_i}(t,t) &= c_i(t) + \langle\gamma_i,X_t\rangle,
 &\text{ for all }&t\in[0,\T] \text{ and } i=1,\ldots,m. \label{eq:short_rate_2}
\end{align}
Note that Definition \ref{def:short_rate_model} implicitly requires (via Definition \ref{def:aff_model}) that the parameters $\lambda$ and $\boldsymbol{\gamma}=(\gamma_1,\ldots,\gamma_m)$ satisfy $(\gamma_i,-\lambda)\in\mfU^Y$, for all $i=0,1,\ldots,m$, with $\gamma_0:=0$. 

\begin{remark}
If the function $\ell$ is chosen to be constant and $\mathbf{c}$ and $\boldsymbol{\gamma}$ set to $0$, Definition \ref{def:short_rate_model} reduces to a classical (i.e., single-curve) time-homogeneous affine short rate model. By additionally allowing $\ell$ and also $\mathbf{c}$ to be deterministic functions we introduce a time-inhomogeneity which in turn enables a perfect fit to the initial term structures (see Proposition~\ref{prop:deterministicShiftCharacterization} below). Definition \ref{def:short_rate_model} thus already incorporates  \emph{deterministic shift extended models} first introduced in~\cite{bm01} in the single-curve setting and recently taken up in~\cite{GM:14} in the multi-curve setting (cf. Section~\ref{sec:rel_short_rate} for the relation between~\cite{GM:14} and the present setting).
\end{remark}

\begin{remark}
In Definition \ref{def:short_rate_model}, the OIS bank account is specified via a short rate modeled as in \eqref{eq:short_rate_1}. By analogy,  multiplicative spreads can also be specified via an instantaneous \emph{short spread rate}, similarly as in \cite[Remark~3.19]{CFG:14}. This can be achieved by letting the affine process $X$ be of the form $X=(X^0,Z^1,\ldots,Z^m)$, where $X^0$ is an affine process on some state space $D_{X^0} $ and $Z^i:=\int_0^{\cdot} q^i(X^0_s) ds$, with $q^i : D_{X^0} \to \RR$ being an affine function, for all $i=1,\ldots,m$. The coefficients $\mathbf{c}=(c_1,\ldots,c_m)$ and $\boldsymbol{\gamma}=(\gamma_1,\ldots,\gamma_m)$ are chosen as $c_i:=0$ and $\gamma_i:=(0,e_i)$, for all $i=1,\ldots,m$, with $\{e_1,\ldots,e_m\}$ being the canonical basis of $\RR^m$. 
The spot multiplicative spreads admit then the representation
\[
S^{\delta_i}(t,t)=e^{Z_t^i}=e^{\int_0^t  q^i(X^0_s) ds},
\qquad \text{ for all }t\in[0,\T] \text{ and }i=1,\ldots,m,
\]
where the process $(q^i(X_t^0))_{0\leq t\leq \T}$ can be interpreted as a short spread rate, for $i=1,\ldots,m$.
\end{remark}

Similarly as in the case of general affine multi-curve models, affine short rate multi-curve models admit an equivalent characterization in terms of the  structure of OIS bond prices and spreads. More specifically, we can establish the following result, which specializes Proposition \ref{prop:QAffineCharacterization}.

\begin{proposition}
Let $X$ be an affine process and suppose that Assumption \ref{ass:short_rate} holds. Then the following hold:
\begin{enumerate}
\item 
if $(X,\ell,\lambda,\mathbf{c},\boldsymbol{\gamma})$ is an affine short rate multi-curve model, then $X$ generates exponentially affine OIS bond prices and spreads, in the sense that
\begin{equation}	\label{eq:time_homogeneous_bond_spreads}	\begin{aligned}
B(t,T) &= \exp\left(\widetilde{\cA}^0(t,T) + \langle\widetilde{\cB}^0(T-t),X_t\rangle\right),
&\quad&\text{ for all }0\leq t\leq T\leq \T,	\\
S^{\delta_i}(t,T) &= \exp\left((\widetilde{\cA}^i(t,T) + \langle\widetilde{\cB}^i(T-t),X_t\rangle\right),
&\quad&\text{ for all }0\leq t\leq T\leq \T \text{ and }i=1,\ldots,m,
\end{aligned}	\end{equation}
where the functions $\widetilde{\cA}^i$, $\widetilde{\cB}^i$, $i=0,1,\ldots,m$, are given by
\begin{align*}
\widetilde{\cA}^0(t,T) &= -\int_t^T\ell(u)du + \tilde{\phi}(T-t,0,-\lambda),	
&\quad &\widetilde{\cB}^0(T-t) = \tilde{\psi}(T-t,0,-\lambda),	\\
\widetilde{\cA}^i(t,T) &= c_i(T) + \tilde{\phi}(T-t,\gamma_i,-\lambda)-\tilde{\phi}(T-t,0,-\lambda),	
&\quad &\widetilde{\cB}^i(T-t) = \tilde{\psi}(T-t,\gamma_i,-\lambda)-\tilde{\psi}(T-t,0,-\lambda),
\end{align*}
for all $0\leq t\leq T\leq\T$ and $i=1,\ldots,m$, with $\tilde{\phi}$ and $\tilde{\psi}$ denoting the characteristic exponents of the affine process $Y=(X,\int_0^{\cdot}X_udu)$ as in Lemma \ref{mixedstate};
\item
conversely, if OIS bond prices and spreads admit the representation \eqref{eq:time_homogeneous_bond_spreads} for some functions $\widetilde{\cA}^i$, $\widetilde{\cB}^i$, $i=0,1,\ldots,m$, with $T \mapsto \widetilde{\cA}^0(t,T)$ and $T\mapsto\widetilde{\cB}^0(T)$   differentiable, for all $t\in[0,\T]$, then $(X,\ell,\lambda,\mathbf{c},\boldsymbol{\gamma})$ is an affine short rate model, where, for all $i=1,\ldots,m$,
\[
\ell(t) = -\frac{\partial}{\partial T}\bigl(\widetilde{\cA}^0\bigr)(t,T)|_{T=t},
\qquad \lambda = -\bigl(\widetilde{\cB}^0\bigr)'(0),
\qquad c_i(t) = \widetilde{\cA}^i(t,t)
\qquad\text{ and }
\qquad \gamma_i = \widetilde{\cB}^i(0).
\]
\end{enumerate}
\end{proposition}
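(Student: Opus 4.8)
The plan is to derive this as a specialization of Proposition \ref{prop:QAffineCharacterization}, using the auxiliary process $Y=(X,\int_0^\cdot X_u\,du)$ together with Lemma \ref{mixedstate}. First I would prove part (i). Given an affine short rate multi-curve model $(X,\ell,\lambda,\mathbf{c},\boldsymbol{\gamma})$, by Definition \ref{def:short_rate_model} the triple $(Y,\mathbf{u},\mathbf{v})$ is an affine multi-curve model with $u_0(t)=(0,-\lambda)$, $v_0(t)=-\int_0^t\ell(u)\,du$, $u_i(t)=(\gamma_i,0)$ and $v_i(t)=c_i(t)$. Proposition \ref{prop:QAffineCharacterization}(i) then immediately gives exponentially affine OIS bond prices and spreads in the variable $Y_t=(X_t,\int_0^tX_u\,du)$, with the $\cA$'s and $\cB$'s expressed through the characteristic exponents of $Y$. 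The key observation is that $Y$ satisfies the hypotheses of Lemma \ref{mixedstate}: writing $Y=(Y^1,Y^2)$ with $Y^1=X$ and $Y^2=\int_0^\cdot X_u\,du$, the differential characteristics of $Y^2$ (drift $X_t$, no diffusion or jumps) depend only on $Y^1$. Hence there are exponents $\tilde\phi,\tilde\psi$ such that $\EE[e^{\langle a,X_t\rangle+\langle b,\int_0^tX_u\,du\rangle}\mid\cF_0]=e^{\tilde\phi(t,a,b)+\langle\tilde\psi(t,a,b),x\rangle+\langle b,\int_0^0X\rangle}$, and since $\int_0^0X_u\,du=0$ the last term vanishes. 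Substituting $u_0(T)=(0,-\lambda)$ and $u_i(T)+u_0(T)=(\gamma_i,-\lambda)$ into the formulas \eqref{eq:correspondence}, and recalling that $v_0(T)-v_0(t)$ telescopes as $-\int_t^T\ell(u)\,du$ once we absorb the $\int_0^tX_u\,du$ contribution correctly via the Markov property at time $t$, yields exactly the claimed formulas for $\widetilde{\cA}^i$ and $\widetilde{\cB}^i$, with the $\cB$'s depending on $T-t$ only (time-homogeneity of $\tilde\psi$ in the remaining argument).

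For part (ii), suppose OIS bond prices and spreads admit representation \eqref{eq:time_homogeneous_bond_spreads} with $T\mapsto\widetilde{\cA}^0(t,T)$ and $T\mapsto\widetilde{\cB}^0(T)$ differentiable. Setting $T=t$ and using $B(t,t)=1$, $S^{\delta_i}(t,t)=e^{c_i(t)+\langle\gamma_i,X_t\rangle}$ forces $\widetilde{\cA}^0(t,t)=0$, $\widetilde{\cB}^0(0)=0$, $c_i(t)=\widetilde{\cA}^i(t,t)$ and $\gamma_i=\widetilde{\cB}^i(0)$. To recover the short rate, I would use $B(t,T)=\EE[B_t/B_T\mid\cF_t]$, which in short-rate form means $r_t=-\partial_T\log B(t,T)|_{T=t}$ (the standard identification of the short rate from the bond price surface, valid because $B(t,T)=e^{-\int_t^T r_u\,du+o(T-t)}$ along the diagonal). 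Differentiating \eqref{eq:time_homogeneous_bond_spreads} in $T$ and evaluating at $T=t$ gives $r_t=-\partial_T\widetilde{\cA}^0(t,T)|_{T=t}-\langle(\widetilde{\cB}^0)'(0),X_t\rangle$, which matches \eqref{eq:short_rate_1} with $\ell(t)=-\partial_T\widetilde{\cA}^0(t,T)|_{T=t}$ and $\lambda=-(\widetilde{\cB}^0)'(0)$. It then remains to check that $(Y,\mathbf{u},\mathbf{v})$ with these $\ell,\lambda,\mathbf{c},\boldsymbol{\gamma}$ is genuinely an affine multi-curve model, i.e.\ that the required exponential moments lie in $\mfU^Y$; this follows exactly as in the proof of Proposition \ref{prop:QAffineCharacterization}(ii), since $\EE[e^{\langle u_0(T),Y_T\rangle}]=e^{-v_0(T)}B(0,T)<+\infty$ and analogously for the spread exponent, invoking Remark \ref{rem:convex_expmom}.

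The main obstacle I anticipate is the bookkeeping around the two-component structure of $Y$ and making the dependence on $\int_0^tX_u\,du$ disappear cleanly from the final formulas: one must verify that the Markov/semiflow evaluation at time $t$ produces a bond price depending on $X_t$ alone (not on $\int_0^tX_u\,du$), which is precisely where $v_0(T)-(-\int_0^t\ell)=-\int_t^T\ell$ and the $-\lambda$-component of $\tilde\psi$ interact. This is a routine but slightly delicate computation; once it is done, the time-homogeneity of $\widetilde{\cB}^i$ in $T-t$ and the additive $-\int_t^T\ell(u)\,du$ term in $\widetilde{\cA}^0$ both fall out. A secondary, minor point is justifying the differentiation under the expectation needed for the short-rate identification in part (ii), but the differentiability hypotheses on $\widetilde{\cA}^0,\widetilde{\cB}^0$ are imposed precisely to make this immediate.
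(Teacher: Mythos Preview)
Your proposal is correct and follows essentially the same route as the paper: part (i) is obtained by applying Proposition \ref{prop:QAffineCharacterization}(i) to $(Y,\mathbf{u},\mathbf{v})$ and invoking Lemma \ref{mixedstate} so that the $\int_0^t X_u\,du$-component cancels against $B_t$, while part (ii) recovers $\ell$ and $\lambda$ via the instantaneous forward rate $r_t=f_t(t)=-\partial_T\log B(t,T)|_{T=t}$, reads off $c_i,\gamma_i$ at $T=t$, and checks the moment condition exactly as in the proof of Proposition \ref{prop:QAffineCharacterization}(ii). The bookkeeping you flag as the ``main obstacle'' is precisely the computation the paper sweeps under ``easily follows,'' and your outline of how the cancellation occurs is accurate.
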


\begin{proof}
Noting that the affine process $Y=(X,\int_0^{\cdot}X_udu)$ satisfies the assumptions of Lemma \ref{mixedstate}, part (i) easily follows from Definition \ref{def:short_rate_model} together with Proposition \ref{prop:QAffineCharacterization}.
Conversely, if OIS bond prices and spreads admit the representation \eqref{eq:time_homogeneous_bond_spreads} with respect to some differentiable functions $\widetilde{\cA}^i$ and $\widetilde{\cB}^i$, $i=0,1,\ldots,m$, then the instantaneous OIS forward rate is given by
\[
f_t(T) := -\frac{\partial}{\partial T} \log B(t,T) = -\frac{\partial}{\partial T}(\widetilde{\cA}^0)(t,T) - \langle(\widetilde{\cB}^0)'(T-t),X_t\rangle,
\qquad \text{ for all }0\leq t\leq T\leq \T,
\]
so that the OIS short rate $r_t$ satisfies
\[
r_t = f_t(t) =  -\frac{\partial}{\partial T}(\widetilde{\cA}^0)(t,T)|_{T=t} - \langle(\widetilde{\cB}^0)'(0),X_t\rangle
=: \ell(t) + \langle\lambda,X_t\rangle,
\qquad \text{ for all }t\in[0,\T].
\]
Representations \eqref{eq:short_rate_1}-\eqref{eq:short_rate_2} then follow by letting $c_i(t):=\widetilde{\cA}^i(t,t)$ and $\gamma_i:=\widetilde{\cB}^i(0)$, for all $i=1,\ldots,m$. Finally, the condition $(\gamma_i,-\lambda)\in\mfU^Y$, for all $i=0,1,\ldots,m$, with $\gamma_0:=0$, follows by the same arguments used in the last part of the proof of Proposition \ref{prop:QAffineCharacterization}.
\end{proof}

In view of Proposition \ref{prop:X0Y}, multiplicative spreads greater than one and ordered with respect to the tenor's length can be easily generated by affine short rate multi-curve models.

\begin{remark}[On the possibility of negative OIS short rates and spreads greater than one]
In the current market environment, negative rates have been observed to coexist with spreads which are strictly greater than one and ordered with respect to the tenor's length. This market scenario can be easily captured by letting the affine process $X$ be of the form $X=(X^0,Z)$ and take value in a state space of the form $\RR^n\times C_Z$, for some $n\in\mathbb{N}$ and where $C_Z$ is a cone. If $\ell$ takes values in $\RR$, $\lambda\in\RR^n$ and the assumptions of Proposition \ref{prop:X0Y} are satisfied, then spreads will be greater than one and ordered, while the OIS short rate will not be restricted to positive values only. 
\end{remark}

Similarly to Proposition \ref{prop:fitting}, we can obtain a necessary and sufficient condition to achieve an exact fit to the initially observed term structures. Let us denote by $B^0(t,T)$ and $S^{0,\delta_i}(t,T)$ the bond prices and spreads computed according to the model $(X,0,\lambda,\mathbf{0},\boldsymbol{\gamma})$. Moreover, we denote by $T\mapsto B^M(0,T)$ and $\{T\mapsto S^{M,\delta_i}(0,T); i=1,\ldots,m\}$ the initially observed term structures of OIS bond prices and spreads. We also define the corresponding OIS forward rates by
\[
f^M_t(T) := -\frac{\partial}{\partial T}\log B^M(t,T)
\qquad\text{and}\qquad
f^0_t(T) := -\frac{\partial}{\partial T}\log B^0(t,T),
\qquad \text{ for all }0\leq t\leq T\leq \T.
\]
The following result is a direct consequence of the above definitions together with Proposition \ref{prop:fitting}. 

\begin{proposition}	\label{prop:deterministicShiftCharacterization}
Suppose that Assumption \ref{ass:short_rate} holds. Then an affine short rate multi-curve model $(X,\ell,\lambda,\mathbf{c},\boldsymbol{\gamma})$ achieves an exact fit to the initially observed term structures if and only if 
\begin{align*}
\ell(t) &= f^M_0(t) - f^0_0(t),
&\qquad &\text{ for all }t\in[0,\T],\\
c_i(t) &= \log S^{M,\delta_i}(0,t) - \log S^{0,\delta_i}(0,t),
&\qquad &\text{ for all }t\in[0,\T] \text{ and }i=1,\ldots,m.
\end{align*}
\end{proposition}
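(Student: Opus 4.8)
The plan is to reduce everything to Proposition~\ref{prop:fitting}, which already characterizes the exact fit for a general affine multi-curve model $(Y,\mathbf{u},\mathbf{v})$ in terms of the functions $\mathbf{v}=(v_0,v_1,\ldots,v_m)$, and then to translate the condition on $v_0$ into a condition on $\ell$ and the condition on $v_i$ into a condition on $c_i$, using the explicit correspondence from Definition~\ref{def:short_rate_model}. The spread part is immediate: in Definition~\ref{def:short_rate_model} one has $v_i(t)=c_i(t)$ and, with all $\mathbf{c}=\mathbf{0}$, the model $(X,0,\lambda,\mathbf{0},\boldsymbol{\gamma})$ is precisely the model $(Y,\mathbf{u},\mathbf{0})$ of Proposition~\ref{prop:fitting}, so $S^{0,\delta_i}(0,t)$ here coincides with the $S^{0,\delta_i}(0,t)$ there. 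Hence the exact-fit condition $v_i(t)=\log S^{M,\delta_i}(0,t)-\log S^{0,\delta_i}(0,t)$ reads verbatim $c_i(t)=\log S^{M,\delta_i}(0,t)-\log S^{0,\delta_i}(0,t)$.

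The bond part requires one extra manipulation because the fit is stated in terms of forward rates rather than bond prices. Here $v_0(t)=-\int_0^t\ell(u)\,du$, so $v_0$ is differentiable with $v_0'(t)=-\ell(t)$. By Proposition~\ref{prop:fitting} the model fits the initial bond curve iff $v_0(t)=\log B^M(0,t)-\log B^0(0,t)$ for all $t\in[0,\T]$; since both sides vanish at $t=0$ (as $B^M(0,0)=B^0(0,0)=1$ and $v_0(0)=0$), this identity of functions is equivalent to the identity of their derivatives, i.e.
\[
-\ell(t)=\frac{\partial}{\partial t}\log B^M(0,t)-\frac{\partial}{\partial t}\log B^0(0,t)=-f^M_0(t)+f^0_0(t),
\]
which is exactly $\ell(t)=f^M_0(t)-f^0_0(t)$. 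For this equivalence one needs $B^0(0,\cdot)$ to be differentiable, which holds because in the short-rate setting $\log B^0(0,t)=\widetilde{\cA}^0(0,t)+\langle\widetilde{\cB}^0(t),x\rangle$ is built from the regular functions $\tilde\phi,\tilde\psi$ of the affine process $Y$ (cf. Lemma~\ref{mixedstate} and the preceding discussion of regularity); $B^M(0,\cdot)$ is assumed differentiable so that $f^M_0$ is well defined. One should also note that $B^0(0,t)=\EE[e^{\langle u_0(t),Y_t\rangle}]$ is the numéraire-normalizing factor for the model with $\mathbf{v}=\mathbf{0}$, i.e. the $B^0(0,t)$ of Proposition~\ref{prop:fitting}, so no inconsistency of notation arises.

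I would therefore organize the proof as: (1) observe that $(X,\ell,\lambda,\mathbf{c},\boldsymbol{\gamma})$ corresponds to the affine multi-curve model $(Y,\mathbf{u},\mathbf{v})$ with $v_0(t)=-\int_0^t\ell(u)\,du$ and $v_i(t)=c_i(t)$, so Proposition~\ref{prop:fitting} applies directly; (2) read off the spread condition unchanged; (3) for the bond condition, differentiate the functional identity for $v_0$ in $t$, using $v_0(0)=0=\log B^M(0,0)-\log B^0(0,0)$ to pass between the integrated and differential forms, and identify $-\partial_t\log B^M(0,t)=-f^M_0(t)$, $-\partial_t\log B^0(0,t)=-f^0_0(t)$. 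The only point demanding a little care — and the closest thing to an obstacle — is the differentiability needed to convert $\int_0^t\ell(u)\,du=\log B^M(0,t)-\log B^0(0,t)$ into the pointwise statement about $\ell$; this is handled by the regularity of the affine characteristic exponents for $B^0$ and by the standing assumption that the initial OIS curve is smooth enough to define $f^M_0$. Everything else is a direct substitution into Proposition~\ref{prop:fitting}.
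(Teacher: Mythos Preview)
Your proposal is correct and follows the same route as the paper, which simply states that the result is a direct consequence of the definitions together with Proposition~\ref{prop:fitting}. You have spelled out precisely the reduction the paper leaves implicit, including the differentiation step that converts the integrated condition on $v_0$ into the pointwise condition on $\ell$ via the forward rates.
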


\begin{remark}
In view of Proposition~\ref{prop:deterministicShiftCharacterization}, 
assuming that the spreads $S^{0,\delta_i}(t,T)$ generated by the model $(X,0,\lambda,\mathbf{0},\boldsymbol{\gamma})$ are greater than one and ordered, it then easily follows that
\begin{enumerate}
\item[(i)]
for any $i\in\{1,\ldots,m\}$, if $S^{M,\delta_i}(0,t) \geq S^{0,\delta_i}(0,t)$, for all $t\in [0, \T]$, then $S^{\delta_i}(t,T)\geq 1$, for all $0\leq t\leq T\leq \T$;
\item[(ii)]
for any $i,j\in\{1,\ldots,m\}$ with $i<j$, if 
\[
\log S^{M,\delta_i}(0,t)- \log S^{M,\delta_j}(0,t) \leq \log S^{0,\delta_i}(0,t)- \log S^{0,\delta_j}(0,t),
\qquad \text{ for all } t\in[0,\T],
\]
then $S^{\delta_i}(t,T)\leq S^{\delta_j}(t,T)$, for all $0\leq t\leq T\leq \T$.
\end{enumerate}
\end{remark}

\subsection{Relations with existing multi-curve models based on affine processes}
\label{sec:relations}

In this section, we discuss how our setup relates to several multiple curve models based on affine processes that have been recently proposed in the literature. We show that all the existing approaches we are aware of can be recovered as special cases of our  framework.

\subsubsection{Short rate models}	\label{sec:rel_short_rate}

As mentioned in the introduction, short rate models based on affine processes have been proposed for modeling multiple curves, see \cite{fitr12,GM:14,GMR:15,ken10,kitawo09,MR14}. A general framework   encompassing all the models introduced in these papers has been recently provided in \cite[Chapter 2]{GR15}. In that work, the authors assume that the num\'eraire is given as usual by the OIS bank account, modeled via a short rate $r=(r_t)_{0\leq t\leq \T}$, with a corresponding martingale measure $\QQ$. Considering for simplicity of presentation a single tenor $\delta>0$, spot Libor rates $L_t(t,t+\delta)$ are then specified by
\[
L_t(t,t+\delta) = \frac{1}{\delta}\left(\frac{1}{B^{\delta}(t,t+\delta)}-1\right),
\qquad \text{ for all }0\leq t\leq \T,
\]
where $B^{\delta}(t,t+\delta)$ denotes the price at date $t$ of an artificial risky bond with maturity $t+\delta$. Note that risky bonds are only introduced as a modeling tool, their riskiness being referred to the interbank market, and do not represent real traded assets. 
By analogy to the short rate approach in a classical single-curve setting, OIS bond prices and risky bond prices are assumed to be given by
\[
B(t,T) = \mathbb{E}\left[e^{-\int_t^Tr_udu}\bigr|\cF_t\right]
\quad\text{ and }\quad
B^{\delta}(t,T) = \mathbb{E}\left[e^{-\int_t^T(r_u+s_u)du}\bigr|\cF_t\right],
\]
for all $0\leq t\leq T\leq \T$, where $s=(s_t)_{0\leq t\leq \T}$ denotes an instantaneous spread. The OIS short rate $r_t$ and the spread $s_t$ are modeled as affine functions of a common affine process $X=(X_t)_{0\leq t\leq \T}$ taking values on some state space $D\subseteq V:=\RR^n_+\times\RR^{d-n}$ (\cite{GM:14} also consider the case of a Wishart process $X$ and deterministic shift extensions, meaning that the constant part of the affine function is actually a deterministic function in time). 
This framework can be regarded as a special case of our affine short rate setup introduced in Section \ref{sec:short_rate}. Indeed, it suffices to observe that in the present setting the spot multiplicative spreads can be represented as
\[
\log S^{\delta}(t,t) 
= \log \frac{B(t,t+\delta)}{B^{\delta}(t,t+\delta)}
= \log \frac{\mathbb{E}\left[e^{-\int_t^{t+\delta}r_udu}\bigr|\cF_t\right]}{\mathbb{E}\left[e^{-\int_t^{t+\delta}(r_u+s_u)du}\bigr|\cF_t\right]}
= c_{\delta}(t) + \langle \gamma_{\delta},X_t\rangle,
\]
where $c_{\delta}:[0,\T]\rightarrow\RR$ and $\gamma_{\delta} \in V$ are determined in terms of the characteristic exponents of $X$ and the specification of the affine functions for $r_t$ and $s_t$. This shows that representations \eqref{eq:short_rate_1}-\eqref{eq:short_rate_2} hold with respect to the underlying affine process $X$. 
Similar considerations allow to recover the Gaussian exponentially quadratic model recently proposed in~\cite{GMR:15}. 
Since our spot multiplicative spread is an observable quantity, unlike the instantaneous spread $s_t$, it seems to us more natural to define affine short rate multi-curve models directly with respect to spot multiplicative spreads as in Definition \ref{def:short_rate_model}.

\subsubsection{Affine Libor models}	\label{sec:rel_affineLibor}

The general affine multi-curve setup of Section \ref{sec:aff_model} allows to recover a generalization of the model recently introduced in \cite{GPSS14}, which extends to a multi-curve setting the affine Libor model originally proposed in \cite{KRPT13}.
In this setting, the num\'eraire - martingale measure couple $(B, \QQ)$ is  given by the OIS zero-coupon bond with maturity $\T$ and the corresponding $\T$-forward measure, i.e., $(B, \QQ)= (B(\cdot, \T), \QQ^{\T})$. We
explicitly indicate this measure in the expectations below. 
As in Section \ref{sec:aff_model}, consider an affine process  $X=(X_t)_{0\leq t \leq \T}$ (under $\QQ^{\T}$) taking values in $\RR^d_+$ and denote by $M^u=(M^u_t)_{0\leq t\leq \T}$ the $\QQ^{\T}$-martingale defined by $M^u_t:=\EE^{\QQ^{\T}}\left[\exp(\langle u,X_{\T}\rangle)|\cF_t\right]$, for $t\in[0,\T]$.

For each tenor $\delta>0$ traded in the market,  \cite{GPSS14} consider a finite collection of ordered maturities $\mathcal{T}^{\delta}:=\{0=T^{\delta}_0,T^{\delta}_1,\ldots,T^{\delta}_{N^{\delta}}=\T\}$. 
For any tenor $\delta>0$ and $k\in\{0,\ldots,N^{\delta}-1\}$, the multiplicative spread $S^{\delta}(t,T^{\delta}_k) $ is given by
\begin{equation}	\label{GPSS-spread}
S^{\delta}(t,T^{\delta}_k) 
= \frac{1+\delta L_t(T^{\delta}_k,T^{\delta}_{k+1})}{1+\delta L^{{\rm OIS}}_t(T^{\delta}_k,T^{\delta}_{k+1})}
= \frac{M^{v^{\delta}_k}_t}{M^{u^{\delta}_k}_t},
\qquad \text{ for all }t\in[0,T^{\delta}_k],
\end{equation}
where $\{u^{\delta}_1,\ldots,u^{\delta}_{N^{\delta}}\}$ is a sequence in $\mfU$ such that $u^{\delta}_k\geq u^{\delta}_{k+1}$, for all $k$, in order to ensure non-negative OIS forward rates, and $\{v^{\delta}_1,\ldots,v^{\delta}_{N^{\delta}}\}$ is a sequence in $\mfU$ such that $v^{\delta}_k\geq u^{\delta}_k$, for all $k$, in order to ensure that spreads are greater than one. 

By its own nature, the model of \cite{GPSS14} is formulated with respect to a finite collection of maturities. However, a continuous extension can be obtained by considering a function $u:[0,\T]\rightarrow\mfU$ such that $u(T^{\delta}_k)=u^{\delta}_k$ and a function $v:[0,\T]\times \mathbb{R}_+ \rightarrow\mfU$ such that $v(T^{\delta}_k,\delta)=v^{\delta}_k$, for all $k$ and $\delta>0$. 
The idea of a continuous extension first appeared (in a single-curve setting) in the unpublished note~\cite{KR_note} and, after the appearance of the present paper, has been further developed in the multi-curve setting in~\cite{PW:16}.
In analogy to \cite[Section 6]{KRPT13}, let OIS bond prices be given as follows:
\begin{equation}	\label{GPSS-bonds}
B(t,T)/B(t,\T) = M^{u(T)}_t,
\qquad \text{ for all } 0\leq t\leq T\leq \T,
\end{equation}
with $M^{u(T)}_t=\EE^{\QQ^{\T}}\left[\exp\left(\langle u(T),X_{\T}\rangle\right)|\cF_t\right]$. 
In line with equation \eqref{GPSS-spread}, we then model $S^{\delta}(t,T)$ by
\begin{equation}	\label{GPSS-S}
S^{\delta}(t,T) = M^{v(T,\delta)}_t/M^{u(T)}_t,
\qquad \text{ for all }0\leq t\leq T\leq \T \text{ and }\delta>0,
\end{equation}
with $M^{v(T,\delta)}_t=\EE^{\QQ^{\T}}\left[\exp\left(\langle v(T,\delta),X_{\T}\rangle\right)|\cF_t\right]$.
Observe that, in order to ensure that $S^{\delta}(t,T)\geq1$, for all $0\leq t\leq T\leq \T$ and $\delta>0$, it suffices to require that $v(T,\delta)\geq u(T)$ for all $T\in[0,\T]$ and $\delta>0$. Similarly, order relations among multiplicative spreads $S^{\delta}(t,T)$ associated to different tenors can be obtained by imposing suitable requirements on the function $\delta\mapsto v(T,\delta)$. 

Recalling that $B(t,t)=1$, for all $t\in[0,\T]$, representation \eqref{GPSS-bonds} together with the affine property of $X$ implies that the num\'eraire process satisfies
\[
\log B_t := \log B(t,\T) = - \log M^{u(t)}_t = -\phi\bigl(\T-t,u(t)\bigr) - \bigl\langle\psi\bigl(\T-t,u(t)\bigr),X_t\bigr\rangle,
\qquad \text{ for all }t\in[0,\T],
\]
where $\phi$ and $\psi$ denote the characteristic exponents of $X$. 
Similarly, representation \eqref{GPSS-S} implies that the spot multiplicative spreads satisfy, for all $t\in[0,\T]$ and $\delta>0$,
\[
\log S^{\delta}(t,t) = \log\frac{M^{v(t,\delta)}_t}{M_t^{u(t)}}
= \phi\bigl(\T-t,v(t,\delta)\bigr)-\phi\bigl(\T-t,u(t)\bigr) + \bigl\langle\psi\bigl(\T-t,v(t,\delta)\bigr)-\psi\bigl(\T-t,u(t)\bigr),X_t\bigr\rangle.
\]
The model \eqref{GPSS-bonds}-\eqref{GPSS-S} thus represents a special case of our general affine multi-curve framework. Indeed, representations \eqref{eq:log_numeraire}-\eqref{eq:log_spread} follow by letting, for all $t\in[0,\T]$ and $i=1,\ldots,m$,
\begin{equation}\label{eq:comp_GPSS}
\begin{aligned}
v_0(t)&=\phi\bigl(\T-t,u(t)\bigr),
&\qquad &u_0(t)=\psi\bigl(\T-t,u(t)\bigr),	\\
v_i(t) &= \phi\bigl(\T-t,v(t,\delta_i)\bigr)-\phi\bigl(\T-t,u(t)\bigr),
&\qquad &u_i(t) = \psi\bigl(\T-t,v(t,\delta_i)\bigr)-\psi\bigl(\T-t,u(t)\bigr).
\end{aligned}
\end{equation}
Moreover, in view of \cite[Lemma 4.3]{krm12}, it holds that $u_0(T)\in\mfU_T$ and $u_i(T)+u_0(T)\in\mfU_T$, for all $T\in[0,\T]$ and $i=1,\ldots,m$, thus showing that this version of the affine Libor model with multiple curves can be recovered as a special case of Definition \ref{def:aff_model}.

In our view, our general framework presents significant advantages with respect to the model of \cite{GPSS14}, not only due to the flexibility in the choice of the num\'eraire - martingale measure couple and of affine processes on general convex state spaces. 
Indeed, even under the assumption that the couple $(B,\QQ)$ is specified in terms of the OIS bond with maturity $\T$ together with the corresponding $\T$-forward measure as above, our Definition \ref{def:aff_model} allows for greater generality. This can be seen by noting that, given a family of functions $({\mathbf u},\mathbf{v})$ as in Definition \ref{def:aff_model}, it is not always possible to find a couple of functions $(u,v):[0,\T]\times[0,\T]\times\RR_+\rightarrow\mfU\times\mfU$ such that \eqref{eq:comp_GPSS} holds.
Moreover, as shown in Proposition \ref{prop:fitting}, our specification \eqref{eq:log_numeraire}-\eqref{eq:log_spread} always allows for an automatic fit to the initially observed term structures. In contrast, in \cite{GPSS14} this is only possible under some conditions, notably that the initial OIS term structure is decreasing with respect to maturity together with the finiteness of suitable exponential moments of the driving process. Furthermore, while in our specification the initial term structures univocally determine the family of functions $\mathbf{v}$, in~\cite{GPSS14} this is not the case, so that all model parameters have to be jointly determined by fitting the initial term structures and by calibration to market data.

\subsubsection{Affine multiple yield curve models in the sense of \cite{CFG:14}}

Multi-curve models based on affine processes have been briefly mentioned in \cite[Section 5.3]{CFG:14} as a simple example of {\em HJM-type multiple yield curve models}. The affine specification proposed in \cite{CFG:14} represents a special case of the general setup developed in the present paper. Indeed, an affine multi-curve model in the sense of Definition~\ref{def:aff_model} belongs to the family of risk neutral HJM-type multiple yield curve models considered in \cite{CFG:14} only if $\log B_t$ is absolutely continuous.
More specifically, the affine specification considered in \cite{CFG:14} (see \cite[Definition 5.3]{CFG:14}) represents a special case of our~Definition \ref{def:short_rate_model} when $\ell$ and $\mathbf{c}$ are chosen to be constant.
For clarity of notation, let us denote all the ingredients appearing in~\cite[Definition 5.3]{CFG:14} with Fractur letters.
Then~\cite[Definition 5.3]{CFG:14} can be embedded into Definition~\ref{def:short_rate_model} by setting $X:=(\mathfrak{X}, \mathfrak{Y})$, $c_i=0$ and $\gamma_i=(0, \mathfrak{u}_i)$, for all $i=1, \ldots, m$. 
Conversely, in the special case where $\ell$ and $\mathbf{c}$ are constant, Definition~\ref{def:short_rate_model} can be embedded into~\cite[Definition 5.3]{CFG:14} by letting $\mathfrak{X}=X$, $\mathfrak{Y}=(1,X)$, $\mathfrak{u}^1_i=c_i$ and $\mathfrak{u}^{j+1}_i=\gamma_i^j$, for all $j=1,\ldots,\text{dim}(V_X)$ and $i=1,\ldots,m$.

\section{General pricing formulae for caplets and swaptions} 	\label{sec:aff_pricing}

We now show that affine multi-curve models, in the sense of Definition~\ref{def:aff_model}, lead to tractable general valuation formulae for caplets and swaptions. We compute \emph{clean prices} and follow the pricing approach outlined in~\cite[Appendix A]{CFG:14}, in particular assuming that the collateral account is given by the num\'eraire asset.
For simplicity of presentation, we shall consider 
a fixed maturity $T>0$ and assume that $(X,\mathbf{u},\mathbf{v})$ is an affine multi-curve model in the sense of Definition~\ref{def:aff_model}.

\subsection{Caplets}\label{sec:capletsAffine}

In the present affine setting, caplets can be easily priced by means of Fourier techniques. As a preliminary, let $i \in \{1, \ldots,m\}$ and define the stochastic process $(\mathcal{Y}_t)_{t\in[0,T+\delta_i]}$ by
\begin{equation}	\label{eq:proc_Y_gen}
\begin{aligned}
\mathcal{Y}_t&:=\log\left(\frac{S^{\delta_i}(t,t)}{B(t,T+\delta_i)}\right)\\
&=v_i(t)+v_0(t)-v_0(T+\delta_i)-\phi(T+\delta_i-t, u_0(T+\delta_i))	\\
&\quad+ \langle u_i(t)+ u_0(t)-\psi(T+\delta_i-t, u_0(T+\delta_i)), X_t\rangle,
\end{aligned}
\end{equation} 
where the second equality follows from Proposition~\ref{prop:QAffineCharacterization}.
We denote by $\mathcal{A}_{\mathcal{Y}_T}$ the set 
\[
\mathcal{A}_{\mathcal{Y}_T}:=\left\{\nu \in \RR: \,  \mathbb{E}\left[\frac{B(T,T+\delta_i)}{B_T}e^{\nu\mathcal{Y}_T}\right]<+\infty\right\}^{\circ},
\]
and introduce the strip of complex numbers $\Lambda_{\mathcal{Y}_T}=\left\{\zeta \in \CC: \ -\Im(\zeta)\in\mathcal{A}_{\mathcal{Y}_T} \right\}$.
For $\zeta\in\Lambda_{\mathcal{Y}_T}$, we can compute the following expectation, which we call \emph{modified moment generating function} of $\mathcal{Y}_T$:
\begin{align}
\begin{split}
\varphi_{\mathcal{Y}_{T}}(\zeta)&:=\Excond{}{\frac{B(T,T+\delta_i)}{B_T}e^{\im \zeta \mathcal{Y}_T}}{\cF_t}\\
&=\exp((1- \im \zeta)(v_0(T+\delta_i)+\phi(\delta_i, u_0(T+\delta_i)))+ \im \zeta(v_i(T)+v_0(T)))\\
&\quad \times\Excond{}{e^{\langle (1- \im \zeta) \psi(\delta_i, u_0(T+\delta_i))+  \im \zeta (u_i(T)+ u_0(T)), X_T\rangle}}{\cF_t}\\
&=\exp((1- \im \zeta)(v_0(T+\delta_i)+\phi(\delta_i, u_0(T+\delta_i)))+ \im \zeta(v_i(T)+v_0(T)))\\
&\quad \times \exp(\phi(T-t, (1- \im \zeta) \psi(\delta_i, u_0(T+\delta_i))+  \im \zeta (u_i(T)+ u_0(T))))\\
&\quad \times \exp(\langle\psi(T-t, (1- \im \zeta) \psi(\delta_i, u_0(T+\delta_i))+  \im \zeta (u_i(T)+ u_0(T))),X_t\rangle).
\label{cfchi}
\end{split}
\end{align}

\begin{remark}
In the case of an affine short rate multi-curve model as of Definition~\ref{def:short_rate_model}, expression \eqref{eq:proc_Y_gen} becomes
\[
\mathcal{Y}_t=c_i(t)+\int_t^{T+\delta_i} \ell(u) du- \tilde{\phi}(T+\delta_i-t, 0,-\lambda)+\langle \gamma_i-\tilde{\psi}(T+\delta_i-t, 0, -\lambda), X_t\rangle.
\]
Similarly, expression \eqref{cfchi} becomes
\begin{align*}
\begin{split}
\varphi_{\mathcal{Y}_{T}}(\zeta)
&=\exp\left((1- \im \zeta)\left(-\int_0^{T+\delta} \ell(u) du+\tilde{\phi}(\delta_i,0,-\lambda)\right)+ \im \zeta(c_i(T)-\int_{0}^{T}\ell(u)du)\right)\\
&\quad \times \exp (\tilde{\phi}(T-t,  (1- \im \zeta) \tilde{\psi}(\delta_i,0,-\lambda)+  \im \zeta \gamma_i, -\im \zeta\lambda))\\
&\quad \times \exp\left(\langle \tilde{\psi}(T-t,  (1- \im \zeta) \tilde{\psi}(\delta_i,0,-\lambda)+  \im \zeta \gamma_i,-\im \zeta\lambda), X_t\rangle-\langle  \lambda, \int_0^tX_s ds\rangle\right).
\end{split}
\end{align*}
\end{remark}

The sets $\mathcal{A}_{\mathcal{Y}_T}$ and $\Lambda_{\mathcal{Y}_T}$ depend on the specific choice of the driving process $X$. By relying on~\cite[Theorem 5.1]{lee2004}, we now provide a general caplet pricing formula, which is valid for any choice of the underlying affine process $X$ and for different choices of the contour of integration. In particular, the next result highlights the tractability of our framework: caplets can be priced by means of univariate Fourier integrals. In turn, this implies that a calibration may be obtained with a reasonable amount of computational effort (which may be further reduced by means of an application of an FFT algorithm).

\begin{proposition}\label{prop:capletBasicAffine}
Let $\zeta\in\CC$, $\epsilon\in\RR$, $\bar{K}:=1+\delta K$ and assume that $1+\epsilon \in\mathcal{A}_{\mathcal{Y}_T}$. Then the price at date $t$ of a caplet with notional $N$, reset date $T$ and payoff $N\delta_i(L_T(T,T+\delta_i)-K)^+$ at the settlement date $T+\delta_i$ is given by
\begin{equation}
\Pi^{CPLT}(t; T,T+\delta_i,K,N)
=NB_t\left(R\left(\mathcal{Y},\bar{K},\epsilon\right)+\frac{1}{\pi }\int_{0-\im\epsilon}^{\infty-\im\epsilon}\Re\left(e^{-\im\zeta \log(\bar{K})}\frac{\varphi_{\mathcal{Y}_{T}}(\zeta-\im)}{-\zeta(\zeta-\im)}\right)d\zeta\right),  
\label{capletprice}
\end{equation}
where $\varphi_{\mathcal{Y}_T}$ is given in \eqref{cfchi} and $R\left(\mathcal{Y},\bar{K},\epsilon\right)$ is given by
\[
R\left(\mathcal{Y},\bar{K},\epsilon\right) = 
\begin{cases}
 \varphi_{\mathcal{Y}_{T}}(-\im)-\bar{K}\varphi_{\mathcal{Y}_{T}}(0), & \mbox{if } \epsilon<-1, \\ 
 \varphi_{\mathcal{Y}_{T}}(-\im)-\frac{\bar{K}}{2}\varphi_{\mathcal{Y}_{T}}(0), & \mbox{if } \epsilon=-1, \\
   \varphi_{\mathcal{Y}_{T}}(-\im) & \mbox{if } -1<\epsilon<0,\\
    \frac{1}{2}\varphi_{\mathcal{Y}_{T}}(-\im)&\mbox{if } \epsilon = 0,\\
    0&\mbox{if } \epsilon >0.
 \end{cases}
\]
\end{proposition}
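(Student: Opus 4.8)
The plan is to reduce the caplet payoff to the payoff of a put option written on the process $\mathcal{Y}$ and then invoke the Fourier pricing result of Lee. First I would rewrite the payoff: since $L_T(T,T+\delta_i) = \frac{1}{\delta_i}\bigl(\frac{S^{\delta_i}(T,T)}{B(T,T+\delta_i)}-1\bigr)$ by \eqref{eq:spreadtT} and the definition of $\Lois$, the caplet payoff $N\delta_i(L_T(T,T+\delta_i)-K)^+$ paid at $T+\delta_i$ equals $N\bigl(\frac{S^{\delta_i}(T,T)}{B(T,T+\delta_i)} - \bar K\bigr)^+$ paid at $T+\delta_i$, with $\bar K = 1+\delta_i K$. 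Discounting this with the num\'eraire and taking $\QQ$-expectation, and using the tower property to first condition on $\cF_T$ together with $B(T,T+\delta_i) = \EE[B_T/B_{T+\delta_i}\mid\cF_T]$ (Proposition~\ref{prop:bonds_spreads}(i)), the price at $t$ becomes
\[
\Pi^{CPLT}(t;T,T+\delta_i,K,N) = N B_t\,\EE\Bigl[\frac{B(T,T+\delta_i)}{B_T}\bigl(e^{\mathcal{Y}_T}-\bar K\bigr)^+\,\Bigm|\,\cF_t\Bigr],
\]
where $e^{\mathcal{Y}_T} = S^{\delta_i}(T,T)/B(T,T+\delta_i)$ by the very definition \eqref{eq:proc_Y_gen}. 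Thus the quantity to be computed is a call-type expectation under the (sub-probability) measure with density proportional to $\frac{B(T,T+\delta_i)}{B_T}$, whose associated ``modified moment generating function'' is exactly $\varphi_{\mathcal{Y}_T}$ in \eqref{cfchi}.

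Next I would apply \cite[Theorem 5.1]{lee2004}. That theorem expresses $\EE[w(S)\,(e^{Y}-e^{k})^+]$ — for a weight $w$ and payoff written in log-strike $k=\log\bar K$ — as a residue term $R$ plus a contour integral of $e^{-\im\zeta k}\varphi(\zeta-\im)/(-\zeta(\zeta-\im))$ along a horizontal line $\Im\zeta = -\epsilon$, where $\varphi(\zeta) = \EE[w(S)e^{\im\zeta Y}]$ is the modified transform and $1+\epsilon$ must lie in the analyticity strip $\mathcal{A}_{\mathcal{Y}_T}$ so that $\varphi(\zeta-\im)$ is well-defined on the contour. The residue term $R(\mathcal{Y},\bar K,\epsilon)$ arises from the poles of the integrand at $\zeta=0$ and $\zeta=-\im$: as $\epsilon$ crosses the values $0$ and $-1$, the contour moves past these poles, picking up (half-)residues $\varphi_{\mathcal{Y}_T}(-\im)$ (the forward value $\EE[\tfrac{B(T,T+\delta_i)}{B_T}e^{\mathcal{Y}_T}\mid\cF_t]$) and $\bar K\varphi_{\mathcal{Y}_T}(0)$ (the discounted strike term $\bar K\,\EE[\tfrac{B(T,T+\delta_i)}{B_T}\mid\cF_t]$); collecting these yields precisely the five cases listed for $R$. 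I would check that $\varphi_{\mathcal{Y}_T}(-\im)$ and $\varphi_{\mathcal{Y}_T}(0)$ computed from \eqref{cfchi} are finite — the former is finite because $u_0(T+\delta_i)\in\mfU_{T+\delta_i}$ and $u_i(T)+u_0(T)\in\mfU_T$ by Definition~\ref{def:aff_model}, and the latter reduces to the OIS bond price.

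The main obstacle, I expect, is not the probabilistic reduction but the bookkeeping needed to verify that the general Fourier result of \cite{lee2004} applies verbatim here: one must confirm that the map $\zeta\mapsto\varphi_{\mathcal{Y}_T}(\zeta-\im)$ is analytic on the strip $\{-\Im\zeta\in\mathcal{A}_{\mathcal{Y}_T}\}$ and has the required decay along horizontal lines for the contour integral to converge absolutely (this is where the affine structure of $\varphi_{\mathcal{Y}_T}$, i.e.\ the explicit formula \eqref{cfchi} in terms of $\phi,\psi$, does the work, since the Riccati solutions are analytic and the $1/(\zeta(\zeta-\im))$ factor provides quadratic decay), and that the interchange of expectation and Fourier integration is justified. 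Once these regularity points are settled, the formula \eqref{capletprice} is an immediate transcription of \cite[Theorem 5.1]{lee2004} with $\varphi = \varphi_{\mathcal{Y}_T}$, $k = \log\bar K$, and the contour parameter $-\epsilon$; I would state this conclusion and refer to \cite{lee2004} for the analytic details rather than reproduce them.
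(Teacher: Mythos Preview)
Your proposal is correct and follows essentially the same route as the paper: reduce the caplet price to $N B_t\,\EE[\tfrac{B(T,T+\delta_i)}{B_T}(e^{\mathcal{Y}_T}-\bar K)^+\mid\cF_t]$ (the paper obtains this by citing the appendix formula \eqref{eq:caplet_gen} and factoring out $B(T,T+\delta_i)$, you do it via the tower property---same result), and then invoke \cite[Theorem~5.1]{lee2004} with the modified transform $\varphi_{\mathcal{Y}_T}$, noting that $b_1=1\in\mathcal{A}_{\mathcal{Y}_T}$ by Definition~\ref{def:aff_model}. Your opening phrase ``put option'' is a slip (the payoff is a call on $e^{\mathcal{Y}_T}$), but the computation you carry out is the correct call-type one, and your extra discussion of the residue structure and analyticity is accurate elaboration that the paper simply delegates to \cite{lee2004}.
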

\begin{proof}
As shown in Appendix~\ref{sec:option}, the price of a caplet can be expressed as
\begin{align*}
\Pi^{CPLT}(t; T,T+\delta_i,K,N)&=N\, \Excond{}{\frac{B_t}{B_T}\left(S^{\delta_i}(T,T)-(1+\delta_i K)B(T,T+\delta_i)\right)^+}{\cF_t}\\
&=N\,\Excond{}{\frac{B_t}{B_T}B(T,T+\delta_i)\left(S^{\delta_i}(T,T)B(T,T+\delta_i)^{-1}-(1+\delta_i K)\right)^+}{\cF_t}\\
&=N\,\Excond{}{\frac{B_t}{B_T}B(T,T+\delta_i)\left(e^{\mathcal{Y}_T}-(1+\delta_i K)\right)^+}{\cF_t}
\end{align*}
Since the modified moment generating function of $\mathcal{Y}_T$ can be explicitly computed as in \eqref{cfchi}, the pricing of a caplet is thus reduced to the pricing of a call option written on an asset whose characteristic function is explicitly known. At this point, a direct application of \cite[Theorem 5.1]{lee2004} yields the result. Note that, in the terminology of \cite[Theorem 5.1]{lee2004}, the present case corresponds to $G=G_1$ and $b_1=1$, which is an element of $\mathcal{A}_{\mathcal{Y}_T}$ by Definition~\ref{def:aff_model}.
\end{proof}

\subsection{Swaptions} \label{swaptions}

In the present general setting, swaptions do not admit a closed-form pricing formula. Indeed, on the one side, we consider general multi-factor models, so that the ``Jamshidian trick" (see \cite{jam89}) is not applicable; on the other side, in a multiple curve setting, a payer (resp. receiver) swaption cannot be represented as a put (resp. call) option on a coupon bond. 
In this subsection, by relying on Fourier methods and along the lines of \cite{cfg15,cfgg13}, we provide a general analytical approximation which exploits the affine property of our framework. We work in the general setting of Section~\ref{sec:aff_model}, noting that all formulas admit suitable simplifications in the case of affine short-rate multi-curve models.

We consider a European payer swaption with maturity $T$, written on a (payer) interest rate swap starting at $T_0=T$, with payment dates $T_1,\ldots, T_\cN$, with $T_{j+1}-T_j=\delta_i$ for $j=1,\ldots,\cN-1$ and $i \in \{1, \ldots,m\}$, with notional $N$. As shown in Appendix~\ref{sec:option}, the value of such a claim at date $t$ is given by
\[
\Pi^{SWPTN}(t; T_{1},T_{\cN},K,N)\\
=N\mathbb{E}^{}\left[\frac{B_t}{B_T}\left(\sum_{j=1}^{\cN}B(T,T_{j-1})S^{\delta_i}(T,T_{j-1})-(1+\delta_i K)B(T,T_j)\right)^+\biggr|\cF_t\right].
\]
The general idea underlying our approximation consists in approximating the exercise region by an event defined in terms of an affine function of $X_T$. More specifically, we have that
\begin{align*}
\Pi^{SWPTN}(t; T_{1},T_{\cN},K,N)&\geq N\mathbb{E}^{}\left[\frac{B_t}{B_T}\left(\sum_{j=1}^{\cN}B(T,T_{j-1})S^{\delta_i}(T,T_{j-1})-(1+\delta_i K)B(T,T_j)\right)^+\ind_\cG\biggr|\cF_t\right]\\
&\geq N\mathbb{E}^{}\left[\frac{B_t}{B_T}\left(\sum_{j=1}^{\cN}B(T,T_{j-1})S^{\delta_i}(T,T_{j-1})-(1+\delta_i K)B(T,T_j)\right)\ind_\cG\biggr|\cF_t\right]\\
&=:\widetilde{\Pi}^{SWPTN}(\alpha, \beta),
\end{align*}
where 
$\cG:=\left\{\left.\omega\in\Omega\right| \ \left\langle\beta,X_T\right\rangle > \alpha\right\}$ and  $\beta \in V_X,\alpha\in\RR$.
Let us simplify the notation by introducing
 \begin{align*}
 & w_{i,j}:=\begin{cases}
    1, & j=1,\ldots, \cN,\\
    -(1+\delta_i K), & j=\cN+1,\ldots, 2\cN,
  \end{cases}
 & \qquad u_{i,j}(T):=\begin{cases}
    u_{i}(T), & j=1,\ldots, \cN,\\
    0, & j=\cN+1,\ldots, 2\cN,
  \end{cases}\\
   &v_{i,j}(T):=\begin{cases}
    v_i(T), & j=1,\ldots, \cN,\\
    0, & j=\cN+1,\ldots, 2\cN,
  \end{cases}
 & \qquad
   l:=\begin{cases}
    j-1, & j=1,\ldots, \cN,\\
    j, & j=\cN+1,\ldots, 2\cN,
  \end{cases}
\end{align*}
so that, in view of Proposition \ref{prop:QAffineCharacterization}, $\widetilde{\Pi}^{SWPTN}(\alpha, \beta)$ admits the representation
\begin{align}\label{eq:payoff}
\widetilde{\Pi}^{SWPTN}(\alpha,\beta)=NB_t\sum_{j=1}^{2\cN}w_{i,j}\mathbb{E}^{}\left[e^{v_0(T_l)+v_{i,j}(T_l)+\phi(T_l-T, u_{i,j}(T_l)+u_0(T_l))+\langle \psi(T_l-T,u_{i,j}(T_l)+u_0(T_l)), X_T\rangle  }\ind_\cG\biggr|\cF_t\right].
\end{align}
Recall from Definition~\ref{def:aff_model} that $u_i(T)+u_0(T)\in\mfU_T$, so that $\phi(T_{l}-T,u_{i,j}(T_l)+u_0(T_l))$ and $\psi(T_{l}-T,u_{i,j}(T_l)+u_0(T_l))$ are well-defined.  As shown in the following proposition, the quantity $\widetilde{\Pi}^{SWPTN}(\alpha,\beta)$ can be explicitly computed, analogously to Proposition \ref{prop:capletBasicAffine}.

\begin{proposition}\label{prop:swaptionBasicAffine}
Assume $\psi(T_l-T,u_{i,j}(T_l)+u_0(T_l)) \in \mathfrak{U}_T$ $\forall j = 1,\ldots,2\cN$. Let $\epsilon\in\RR$ such that  $\psi(T_{l}-T,u_{i,j}(T_l)+u_0(T_l))+\epsilon\beta\in \mathfrak{U}_T$, $\forall j = 1,\ldots,2\cN$.
Then the lower bound in terms of $\alpha,\beta$ for a payer swaption with notional $N$, maturity $T$, payment dates $T_1,..., T_\cN$, with $T_{j+1}-T_j=\delta_i$ for all $j=1,\ldots,\cN-1$ is given by
\begin{align}
\begin{aligned}
&\widetilde{\Pi}^{SWPTN}(\alpha,\beta)=NB_t\sum_{j=1}^{2\cN}w_{i,j}\Big(R_\epsilon\Big.\\
&\quad\Big.+\frac{1}{\pi }\int_{0-\im\epsilon}^{\infty-\im\epsilon}\Re\left(\frac{e^{-\im\zeta\alpha}}{\im\zeta}\mathbb{E}\left[e^{v_0(T_l)+v_{i,j}(T_l)+\phi(T_l-T, u_{i,j}(T_l)+u_0(T_l))+\langle \psi(T_l-T,u_{i,j}(T_l)+u_0(T_l))+\im \beta \zeta, X_T\rangle}\biggr|\cF_t\right]\right)d\zeta\Big),  
\end{aligned}
\label{eq:swaptionApproxGeneral}
\end{align}
where $R_\epsilon$ is given by
\[
R_\epsilon= 
\begin{cases}
\mathbb{E}\left[e^{v_0(T_l)+v_{i,j}(T_l)+\phi(T_l-T, u_{i,j}(T_l)+u_0(T_l))+\langle \psi(T_l-T,u_{i,j}(T_l)+u_0(T_l)), X_T\rangle}\biggr|\cF_t\right], & \mbox{if } \epsilon<0, \\ 
    \frac{1}{2}\mathbb{E}\left[e^{v_0(T_l)+v_{i,j}(T_l)+\phi(T_l-T, u_{i,j}(T_l)+u_0(T_l))+\langle \psi(T_l-T,u_{i,j}(T_l)+u_0(T_l)), X_T\rangle}\biggr|\cF_t\right],&\mbox{if } \epsilon = 0,\\
    0,&\mbox{if } \epsilon >0.
 \end{cases}
\]
\end{proposition}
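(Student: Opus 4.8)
The plan is to reduce the computation of $\widetilde{\Pi}^{SWPTN}(\alpha,\beta)$ to a sum of terms, each of which is an expectation of the form $\mathbb{E}[e^{\langle\theta,X_T\rangle}\ind_{\{\langle\beta,X_T\rangle>\alpha\}}|\cF_t]$ for a suitable vector $\theta$, and then to evaluate each such expectation by the same Fourier/Lewis-type argument that underlies Proposition~\ref{prop:capletBasicAffine}. Concretely, starting from representation \eqref{eq:payoff}, the factor $e^{v_0(T_l)+v_{i,j}(T_l)+\phi(T_l-T,u_{i,j}(T_l)+u_0(T_l))}$ is $\cF_T$-measurable (indeed deterministic), so it pulls out of the conditional expectation, and what remains in the $j$-th summand is $w_{i,j}$ times that deterministic prefactor times $\mathbb{E}[e^{\langle\psi(T_l-T,u_{i,j}(T_l)+u_0(T_l)),X_T\rangle}\ind_\cG|\cF_t]$. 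Writing $\theta_j:=\psi(T_l-T,u_{i,j}(T_l)+u_0(T_l))$, which is well-defined by the hypothesis $\theta_j\in\mfU_T$, the whole task is to compute $\mathbb{E}[e^{\langle\theta_j,X_T\rangle}\ind_{\{\langle\beta,X_T\rangle>\alpha\}}|\cF_t]$.

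For this I would invoke the representation of an indicator of a half-line as an inverse Fourier transform: for $\epsilon>0$ (here I follow the sign conventions of \cite[Theorem 5.1]{lee2004}, which handles the endpoint cases $\epsilon\leq 0$ via the residue term $R_\epsilon$), one has $\ind_{\{y>\alpha\}}$ expressed through $\frac{1}{2\pi}\int_{\RR-\im\epsilon}\frac{e^{\im\zeta(y-\alpha)}}{\im\zeta}\,d\zeta$ up to the boundary correction. Applying this with $y=\langle\beta,X_T\rangle$, interchanging expectation and integral (justified precisely by the moment condition $\theta_j+\epsilon\beta\in\mfU_T$, which guarantees $\mathbb{E}[e^{\langle\theta_j,X_T\rangle}e^{\epsilon\langle\beta,X_T\rangle}]<\infty$ and hence absolute integrability of the integrand along the shifted contour), and using the tower property together with the affine transform formula \eqref{affineproperty} to evaluate $\mathbb{E}[e^{\langle\theta_j+\im\beta\zeta,X_T\rangle}|\cF_t]=e^{\phi(T-t,\theta_j+\im\beta\zeta)+\langle\psi(T-t,\theta_j+\im\beta\zeta),X_t\rangle}$, one obtains exactly the integrand appearing in \eqref{eq:swaptionApproxGeneral}. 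Collecting the deterministic prefactors back inside gives the stated formula, and tracking the endpoint behaviour of Lewis's contour argument yields the three cases in the definition of $R_\epsilon$ (the $\epsilon<0$ case giving the full unconditional expectation, $\epsilon=0$ giving half of it, $\epsilon>0$ giving zero, exactly as in Proposition~\ref{prop:capletBasicAffine}).

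The lower-bound inequality $\widetilde{\Pi}^{SWPTN}(\alpha,\beta)\leq\Pi^{SWPTN}(t;T_1,T_\cN,K,N)$ was already established in the display preceding the statement (dropping the positive part and restricting to $\cG$), so nothing further is needed there; the proposition is purely about evaluating the right-hand side $\widetilde{\Pi}^{SWPTN}(\alpha,\beta)$ in closed form. The main technical obstacle I anticipate is the rigorous justification of the Fubini interchange and the choice of admissible contour: one must check that the hypotheses $\theta_j\in\mfU_T$ and $\theta_j+\epsilon\beta\in\mfU_T$ not only make the affine exponents $\phi,\psi$ well-defined at the relevant complex arguments (which requires, via the arguments in the last part of the proof of Proposition~\ref{prop:QAffineCharacterization} and \cite[Lemma 4.2]{krm12}, that the strip between these real points lies in $\mfU_T$) but also deliver the decay in $|\zeta|$ needed for the integral in \eqref{eq:swaptionApproxGeneral} to converge. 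Once this is in place, the result is essentially a term-by-term application of \cite[Theorem 5.1]{lee2004}, summed over $j=1,\ldots,2\cN$ with weights $w_{i,j}$, so I would present it succinctly by reducing to that theorem exactly as in the proof of Proposition~\ref{prop:capletBasicAffine}.
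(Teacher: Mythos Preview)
Your approach is correct and coincides with the paper's: both proofs reduce each summand in \eqref{eq:payoff} to a direct application of \cite[Theorem~5.1]{lee2004}, with the paper simply recording the identification $G=G_3$, $b_1=\psi(T_l-T,u_{i,j}(T_l)+u_0(T_l))$, $b_0=\beta$, $k=\alpha$ in Lee's notation. Your write-up supplies more of the mechanics (pulling out the deterministic prefactor, the Fourier representation of the indicator, the Fubini justification via the moment hypothesis $\theta_j+\epsilon\beta\in\mfU_T$), but the argument is the same term-by-term invocation of Lee's theorem.
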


\begin{proof}
The claim is a direct consequence of \cite[Theorem~5.1]{lee2004}, noting that each summand appearing in \eqref{eq:payoff} corresponds  in the notation of \cite{lee2004} to the case $G=G_3$, $b_1=\psi(T_l-T,u_{i,j}(T_l)+u_0(T_l))$, $b_0=\beta$ and $k=\alpha$. 
\end{proof}

Proposition \ref{prop:swaptionBasicAffine} gives a general lower bound for the price of a swaption, parameterized in terms of $(\alpha,\beta)$. 
These parameters should be determined in such a way that the lower bound becomes as tight as possible, while at the same time ensuring the finiteness of suitable exponential moments of $X_T$.
As pointed out in \cite{cfg15}, to which we refer for more details on the numerical implementation, the values of $(\alpha,\beta)$ can be chosen in two ways:  

\begin{enumerate}
\item 
by maximizing \eqref{eq:swaptionApproxGeneral} with respect to $\alpha, \beta$, thus providing the lower bound
\[
{\Pi}^{SWPTN}_{LB}:=\max_{\alpha,\beta}\widetilde{\Pi}^{SWPTN}(\alpha, \beta).
\]
Note that this solution can be computationally demanding, especially for highly dimensional models. Moreover, for a given choice of $\epsilon$, the optimization procedure should be constrained in order to ensure the finiteness of joint exponential moments of $X_T$.
\item 
by considering hyperplane-like approximations and predetermining the best possible values of $\alpha, \beta$ (the well-known Singleton-Umantsev approximation constitutes an example in this sense, see \cite{sinum2002} and compare also with \cite{kim14})\footnote{We also want to mention that, in the recent paper \cite{GPSS14}, the performance of the Singleton-Umantsev approximation has been empirically tested and compared to Monte-Carlo simulations in the context of a multi-curve model.}.
\end{enumerate}

\section{A tractable specification based on Wishart processes}	\label{sec:Wishart}

For notational simplicity we consider here the case of one single tenor $\delta$ and suppose that the driving process $X$ is a Wishart process on $(\Omega, \mathcal{F},(\mathcal{F}_t)_{0\leq t\leq \T}, \QQ)$ with state space $\mathbb{S}_d^+$ of the form 
\begin{align}\label{eq:Wish} 
dX_t=\left(\kappa Q^{\top}Q+MX_t+X_tM^\top\right)dt+\sqrt{X_t}dW_tQ+Q^\top dW^\top_t\sqrt{X_t}, \quad X_0=x_0,
\end{align}
where $W$ is $d \times d$ matrix of Brownian motions, $\kappa \geq d-1$ and $Q,M$ are $d\times d$ matrices.
The particular appealing feature of Wishart processes is stochastic correlation between the nonnegative diagonal elements of the matrix. On the classical canonical state space $\mathbb{R}^n \times \mathbb{R}^m_+$ 
this property cannot be achieved for positive factors. This possibility is however a crucial ingredient when it comes to modeling spreads which are highly correlated.

We consider an affine short rate multi-curve model as of Section~\ref{sec:short_rate} with short rate $r(t)=\ell(t)+\langle \lambda, X_t\rangle$
and spread $\log S^{\delta}(t,t)=c(t)+\langle \gamma, X_t\rangle $, where $c(t) \in \mathbb{R}_+$ and $\gamma \in \mathbb{S}_d^+$ to guarantee positivity of the spreads. Recall that the scalar product $\langle \cdot, \cdot \rangle$ is here the trace.

\begin{remark}
By choosing $\lambda$ and $\gamma$ to be diagonal matrices, the above model represents a natural extension of the classical CIR model to the multi-curve setting since the diagonal elements of a Wishart process are stochastically correlated CIR processes. 
\end{remark}

The goal of this section is to study the pricing of caps for this particular model.  As shown in Section~\ref{sec:option}, the price of a caplet with unitary notional can be computed via
\begin{equation}\label{eq:capletpriceWish1}
\begin{split}
\Pi^{CPLT}(t; T,T+\delta,K,1)&=S^{\delta}(t,T)B(t,T)\widetilde{\mathbb{Q}}\left[S^{\delta}(T,T)\geq (1+\delta K)B(T,T+\delta)\, | \, \mathcal{F}_t\right]\\
&\quad -(1+\delta K)B(t,T+\delta)\mathbb{Q}^{T+\delta}\left[S^{\delta}(T,T)\geq (1+\delta K)B(T,T+\delta)\, |\,  \mathcal{F}_t\right],
\end{split}
\end{equation}
where the probability measure $\widetilde{\mathbb{Q}}\sim\QQ$ is defined via
\[
\frac{d\widetilde{\mathbb{Q}}}{d \mathbb{Q}}:=\frac{S^{\delta}(T,T) B(T,T)}{B_T S^{\delta}(0,T) B(0,T)}.
\]
In the case of the above introduced model, it can be easily shown that the process $X$ follows under both measures $\widetilde{\mathbb{Q}}$ and $\mathbb{Q}^{T+\delta}$ a non-central Wishart distribution with time dependent parameters, which is stated in Lemma~\ref{lem:newWish} below.
As the density is (up to the solution of ODEs) explicitly known, this allows obtaining (semi-)analytical pricing formulas for caplets, similarly as in the CIR model.
As a preliminary, let us introduce the following definition, in line with~\cite[Definition A.4]{KK:14}.

\begin{definition}
Suppose that $\kappa \geq d-1$, $\Sigma \in \mathbb{S}_d^+$ and $\Theta$ is a $d \times d$ matrix such that $\Sigma\Theta$
is symmetric positive semidefinite. A symmetric positive definite random matrix $U$ is said to be \emph{non-centrally Wishart distributed}
with $\kappa$ degrees of freedom, covariance matrix $\Sigma$, and matrix of non-centrality parameter $\Theta$, if its Laplace transform satisfies
\[
\mathbb{E}[e^{-\langle u, U\rangle}]=\det(I_d-2 u\Sigma)^{-\frac{\delta}{2}}e^{- \langle u(I_d+2u\Sigma)^{-1},\Sigma\Theta\rangle}.
\]
In this case, we write $U ~\sim \mathcal{W}_d(\kappa, \Sigma, \Theta)$.
\end{definition}

\begin{lemma}\label{lem:newWish}
Let $X$ be a Wishart process of the form~\eqref{eq:Wish} under $\mathbb{Q}$.
\begin{enumerate}
\item 
Under $\widetilde{\mathbb{Q}}$, $X_T$ has a non-central Wishart distribution
\[
\mathcal{W}_{d}(\kappa, \widetilde{V}(0), \widetilde{V}(0)^{-1}\widetilde{\Psi}(0)^{\top} x \widetilde{\Psi}(0)),
\]
where $\widetilde{V}(t)$ and $\widetilde{\Psi}(t)$ are solutions of the following system of ordinary differential equations
\begin{equation}
\label{eq:ODE1}
\begin{aligned}
\partial_t \widetilde{\Psi}(t)&= -\left(M^{\top}+2 Q^{\top}Q\tilde{\psi}\left(T-t,\gamma,-\lambda \right)\right)\widetilde{\Psi}(t), &\quad \widetilde{\Psi}(T)=I_2,\\
\partial_t \widetilde{V}(t)&=-\widetilde{\Psi}(t)^{\top}Q^{\top}Q\widetilde{\Psi}(t), &\quad \widetilde{V}(T)=0,
\end{aligned}
\end{equation}
where $\tilde{\phi}$ and $\tilde{\psi}$ denote the characteristic exponents of the process $Y=(X, \int_0^{\cdot} X_s ds)$.
The solution of~\eqref{eq:ODE1} is explicitly given by
\begin{align*}
\widetilde{\Psi}(0)&=\exp\left(\int_0^T \left(M^{\top}+2 Q^{\top}Q\tilde{\psi}\left(T-t,\gamma, -\lambda\right)\right)dt\right) I_2\\
\widetilde{V}(0)&=\int_0^T \exp\left(\int_t^T A_s ds\right)Q^{\top}Q\exp\left(\int_t^T A^{\top}_s ds\right) dt
\end{align*}
with $A_s:=M+2 \tilde{\psi}\left(T-s, \gamma,-\lambda \right)Q^{\top}Q$.
\item 
Under $\mathbb{Q}^{T+\delta}$, $X_T$ has a non-central Wishart distribution
$\mathcal{W}_d(\kappa, V(0), V(0)^{-1}\Psi(0)^{\top} x \Psi(0))$, 
where $V(t)$ and $\Psi(t)$ are solutions of the following system of ordinary differential equations
\begin{equation}
\label{eq:ODE2}
\begin{aligned}
\partial_t \Psi(t)&= -(M^{\top}+2 Q^{\top}Q\tilde{\psi}(T-t), 0,-\lambda)\Psi(t), &\quad \Psi(T)=I_2,\\
\partial_t V(t)&=-\Psi(t)^{\top}Q^{\top}Q\Psi(t), &\quad V(T)=0,
\end{aligned}
\end{equation}
where $\tilde{\phi}$ and $\tilde{\psi}$  denote the characteristic exponents of the process $Y=(X, \int_0^{\cdot} X_s ds)$. The solution of~\eqref{eq:ODE2} is explicitly given as above.
\end{enumerate}
\end{lemma}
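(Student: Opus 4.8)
The plan is to identify, for each of the two measures $\widetilde{\mathbb{Q}}$ and $\mathbb{Q}^{T+\delta}$, the conditional Laplace transform of $X_T$ and to recognize it as that of a non-central Wishart law in the sense of the preceding definition. First I would treat $\widetilde{\mathbb{Q}}$. By the definition of the density $d\widetilde{\mathbb{Q}}/d\mathbb{Q}$ and Bayes' rule, for a test argument $u\in\mathbb{S}_d$ with $-u$ in the appropriate set of finite exponential moments,
\[
\mathbb{E}^{\widetilde{\mathbb{Q}}}\bigl[e^{-\langle u,X_T\rangle}\bigr]
=\frac{1}{S^\delta(0,T)B(0,T)}\,
\mathbb{E}\Bigl[\frac{S^\delta(T,T)B(T,T)}{B_T}e^{-\langle u,X_T\rangle}\Bigr].
\]
Using Assumption~\ref{ass:short_rate} ($B_T=\exp(\int_0^Tr_sds)$, $B(T,T)=1$) together with the short-rate specification $r_t=\ell(t)+\langle\lambda,X_t\rangle$ and the spread specification $\log S^\delta(T,T)=c(T)+\langle\gamma,X_T\rangle$, the expectation becomes a constant (deterministic) prefactor times $\mathbb{E}\bigl[e^{\langle\gamma-u,X_T\rangle-\langle\lambda,\int_0^TX_sds\rangle}\bigr]$. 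This is exactly an exponential-affine functional of the augmented process $Y=(X,\int_0^\cdot X_sds)$, hence by Lemma~\ref{mixedstate} it equals $\exp\bigl(\tilde\phi(T,\gamma-u,-\lambda)+\langle\tilde\psi(T,\gamma-u,-\lambda),x\rangle\bigr)$, with the normalizing constant ensuring the value at $u=0$ is $1$.

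The core of the argument is then to show that this function of $u$ is the Laplace transform of $\mathcal{W}_d(\kappa,\widetilde V(0),\widetilde V(0)^{-1}\widetilde\Psi(0)^\top x\,\widetilde\Psi(0))$ with $\widetilde V,\widetilde\Psi$ as in~\eqref{eq:ODE1}. I would do this by differentiating in the running time: introduce the conditional transform $g(t,u):=\mathbb{E}\bigl[e^{\langle -u,X_T\rangle}\mid\mathcal{F}_t\bigr]/(\text{renormalization})$, note it has the form $\exp(\tilde\phi(T-t,\cdot)+\langle\tilde\psi(T-t,\cdot),X_t\rangle)$ with argument shifted by $\gamma$ and by the accumulated $-\lambda$ contribution, and derive from the Riccati system for $(\tilde\phi,\tilde\psi)$ the ODE satisfied by the "effective" matrix argument $\Phi(t):=\tilde\psi(T-t,\gamma-u,-\lambda)$ — this is a matrix Riccati equation whose drift is governed by $M,M^\top,Q^\top Q,\kappa$. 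The key algebraic step is the standard linearization of this matrix Riccati equation: writing $\Phi(t)=\widetilde\Psi(t)\,(\text{linear part})$ one peels off the solution into the linear flow $\widetilde\Psi$ solving $\partial_t\widetilde\Psi=-(M^\top+2Q^\top Q\,\tilde\psi)\widetilde\Psi$ and the accumulated quadratic contribution $\widetilde V$ solving $\partial_t\widetilde V=-\widetilde\Psi^\top Q^\top Q\widetilde\Psi$; matching the resulting closed form against the Laplace transform in the Wishart definition $\det(I_d-2u\Sigma)^{-\kappa/2}\exp(-\langle u(I_d+2u\Sigma)^{-1},\Sigma\Theta\rangle)$ yields $\Sigma=\widetilde V(0)$ and $\Sigma\Theta=\widetilde\Psi(0)^\top x\,\widetilde\Psi(0)$, i.e.\ $\Theta=\widetilde V(0)^{-1}\widetilde\Psi(0)^\top x\,\widetilde\Psi(0)$. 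The explicit forms of $\widetilde\Psi(0)$ and $\widetilde V(0)$ follow by integrating the linear system, with $A_s=M+2\tilde\psi(T-s,\gamma,-\lambda)Q^\top Q$ the transpose-type generator appearing there. Part (ii) is then the same computation with $\gamma$ replaced by $0$ (since under $\mathbb{Q}^{T+\delta}$ the density $1/(B_{T+\delta}B(0,T+\delta))$ brings down no $\langle\gamma,X_T\rangle$ term, only the discounting $r$ and an extra $B(T,T+\delta)$ factor that contributes deterministically plus a $\tilde\psi(\delta,0,-\lambda)$-type shift which is absorbed into the prefactor), giving~\eqref{eq:ODE2}.

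The main obstacle is the bookkeeping in the linearization of the matrix Riccati ODE and the correct identification of the covariance $\Sigma$ and non-centrality $\Theta$: one must be careful with non-commutativity of the $d\times d$ matrices, with the symmetry constraints ($\Sigma\Theta\in\mathbb{S}_d^+$, which here holds because $\widetilde\Psi(0)^\top x\widetilde\Psi(0)\in\mathbb{S}_d^+$ and $\widetilde V(0)\in\mathbb{S}_d^+$), and with tracking which arguments of $\tilde\psi$ are "frozen" at $(\,\cdot\,,\gamma,-\lambda)$ versus carrying the variable $u$. A secondary technical point is justifying the finiteness of the relevant exponential moments so that all transforms are well-defined on a neighbourhood of $0$; this follows as in the proof of Proposition~\ref{prop:QAffineCharacterization}, using that $u_0(T),u_i(T)+u_0(T)\in\mfU_T$ by Definition~\ref{def:aff_model}, together with the fact that the Wishart process is conservative. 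Everything else is a routine verification once the Riccati linearization is set up.
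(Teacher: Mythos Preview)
Your approach is correct in principle but takes a genuinely different route from the paper. You compute the $\widetilde{\mathbb{Q}}$-Laplace transform of $X_T$ directly via Bayes' formula and the affine transform of $Y=(X,\int_0^\cdot X_s\,ds)$, and then identify $\Sigma$ and $\Theta$ by linearizing the matrix Riccati ODE for $\tilde\psi(T-\cdot,\gamma-u,-\lambda)$. The paper instead proceeds structurally: it computes the density process $N_t=\mathbb{E}[d\widetilde{\mathbb{Q}}/d\mathbb{Q}\mid\mathcal{F}_t]$, reads off its martingale part as a stochastic exponential $\mathcal{E}\bigl(2\int_0^\cdot\langle Q\,\tilde\psi(T-s,\gamma,-\lambda)\sqrt{X_s},\,dW_s\rangle\bigr)$, applies Girsanov to conclude that under $\widetilde{\mathbb{Q}}$ the process $X$ is again a Wishart diffusion but with time-dependent linear drift $M+2\tilde\psi(T-t,\gamma,-\lambda)Q^\top Q$, and then invokes \cite[Proposition~A.6]{KK:14} for the marginal law of such a process at time $T$. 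Part~(ii) is then immediate by the same Girsanov argument with $\gamma$ replaced by $0$.

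The paper's route is shorter and more conceptual: the measure change preserves the Wishart class (only the linear drift moves), and the ODEs~\eqref{eq:ODE1} together with the identification of $\Sigma,\Theta$ come ``for free'' from the cited result rather than from matching Laplace transforms by hand. Your route is more self-contained---it does not rely on an external reference and makes explicit why the Riccati linearization produces exactly $\widetilde V(0)$ and $\widetilde\Psi(0)$---but the bookkeeping you flag (non-commutativity, tracking the $u$-dependence versus the frozen $(\gamma,-\lambda)$ argument, the determinant factor producing the $-\kappa/2$ power) is real work that the paper sidesteps. Both arguments are valid; if you carry yours out you are essentially reproving the relevant piece of \cite[Proposition~A.6]{KK:14} in situ.
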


\begin{proof}
Concerning (i), note that the density process $(N_t)_{0\leq t\leq T}$ of
$\frac{d\widetilde{\mathbb{Q}}}{d \mathbb{Q}}$ is given by
\begin{align*}
N_t&:=\mathbb{E}\left[\frac{d\widetilde{\mathbb{Q}}}{d \mathbb{Q}}\Big| \mathcal{F}_t\right]=
 \frac{1}{S^{\delta}(0,T) B(0,T)}\mathbb{E}\left[\frac{S^{\delta}(T,T) B(T,T)}{B_T} \Big| \mathcal{F}_t\right]\\
 &= \frac{1}{S^{\delta}(0,T) B(0,T)}\mathbb{E}\left[e^{c(T)+\langle \gamma, X_{T}\rangle-\int_0^T \ell(s)ds-\langle\lambda, \int_0^T X_{s}ds\rangle } \Big| \mathcal{F}_t\right]\\
 &= \frac{1}{S^{\delta}(0,T) B(0,T)}\exp\left(c(T)-\int_0^T \ell(s)ds+\tilde{\phi}\left(T-t, \gamma,-\lambda \right)+\left\langle\tilde{\psi}\left(T-t, \gamma, -\lambda\right), X_t\right\rangle-\langle \lambda, \int_0^t X_{s}ds\rangle\right),
\end{align*}
where $\tilde{\phi}$ and $\tilde \psi$ denote the characteristic exponents of $Y=(X, \int_0^{\cdot} X_s ds)$.
Note that the diffusion part of $N_t$ is given by 
\[
2\int_0^tN_s  \left\langle Q \tilde{\psi}\left(T-s, \gamma, -\lambda \right)\sqrt{X}_s, dW_s \right\rangle,
\]
so that we can write 
\[
N_t=\mathcal{E}\left(2\int_0^{\cdot}  \left\langle Q \tilde{\psi}\left(T-s, \gamma,-\lambda \right)\sqrt{X}_s, dW_s \right\rangle\right)_t.
\] 
By Girsanov's theorem, under the measure $\widetilde{\QQ}$, the linear drift of $X$ changes to
\[
M+2 \tilde{\psi}\left(T-t, \gamma, -\lambda \right)Q^{\top}Q,
\]
so that $X$ becomes a Wishart process with time-varying linear drift under $\widetilde{\mathbb{Q}}$. According to~\cite[Proposition A.6]{KK:14}, $X_T$ has a noncentral Wishart distribution $\mathcal{W}_d(\kappa, \widetilde{V}(0), \widetilde{V}(0)^{-1}\widetilde{\Psi}(0)^{\top} x \widetilde{\Psi}(0))$, 
where $\widetilde{V}(t)$ and $\widetilde{\Psi}(t)$ are solutions of~\eqref{eq:ODE1}.
Concerning (ii), we have for the density process of $\frac{d\mathbb{Q}^{T+\delta}}{d \mathbb{Q}}$
\begin{align*}
\mathbb{E}\left[\frac{d\mathbb{Q}^{T+\delta}}{d \mathbb{Q}}\Big| \mathcal{F}_t\right]&=\frac{B(t,T)}{B_t B(0,T)}\\
 &= \frac{1}{B(0,T)}\exp\left(-\int_0^T \ell(u)du+\tilde{\phi}\left(T-t, 0, -\lambda \right)+\left\langle\tilde{\psi}\left(T-t, 0,-\lambda\right), X_t\right\rangle-\langle \lambda, \int_0^t X_{s}ds\rangle \right).
\end{align*}
The assertion then follows similarly as for $\widetilde{\mathbb{Q}}$.
\end{proof}

\subsection{Computing certain probabilities of linear functionals in non-central Wishart distributions}

In view of~\eqref{eq:capletpriceWish1}, we focus here on the computation of 
\[
\widetilde{\mathbb{Q}}\left[S^{\delta}(T,T)\geq (1+\delta K)B(T,T+\delta)\right]
\qquad\text{and}\qquad
\mathbb{Q}^{T+\delta}\left[S^{\delta}(T,T)\geq (1+\delta K)B(T,T+\delta)\right].
\]
By the specification of our model, the first of the two quantities above becomes
\[
\widetilde{\mathbb{Q}}\left[\langle\gamma-\tilde{\psi}(\delta,0,-\lambda), X_{T}\rangle\geq \log(1+\delta K) -\int_T^{T+\delta}\ell(u) du+ \tilde{\phi}(\delta,0,-\lambda)-c(T)\right]
\]
and similarly for the $\mathbb{Q}^{T+\delta}$-probability.
It thus amounts to compute expressions of the type 
\[
\widetilde{\mathbb{Q}}\left[\langle A, X_T \rangle \geq C\right],
\]
for some matrix $A \in \mathbb{S}_d$ (note that $\tilde{\psi}(\delta,0,-\lambda)$ can be symmetrized  to lie in $\mathbb{S}_d$) and some constant $C$. 
The following proposition relates linear combination of elements of non-centrally Wishart distributed matrices with $\chi^2$-distributed random variables.

\begin{proposition}\label{prop:chisquare}
Let $X ~\sim W_d(\kappa, \Sigma, \Sigma^{-1}x)$ with $\Sigma \in \mathbb{S}_d^{++}$. Then 
\[
\langle A,X \rangle \sim \sum_{i=1}^d \lambda_i V_i,
\]
where $\lambda_i$ are the eigenvalues of $\sqrt{\Sigma} A \sqrt{\Sigma}=O\Lambda O^{\top}$ and $V_{i}, i \in \{1,\ldots,d\}$ are independent random variables with $V_i \sim \chi^2(\kappa, y_{ii})$, where
$
y=O^{\top}\Sigma^{-\frac{1}{2}}x\Sigma^{-\frac{1}{2}}O.
$
\end{proposition}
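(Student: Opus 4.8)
The plan is to diagonalise the linear functional $A\mapsto\langle A,\cdot\rangle$ after a linear change of variables that normalises the covariance matrix to the identity, thereby reducing the claim to the classical fact that the diagonal entries of a \emph{standard} non-central Wishart matrix are independent non-central $\chi^{2}$ random variables. Since $\Sigma\in\mathbb{S}_{d}^{++}$ the root $\Sigma^{-1/2}$ is well defined, and by hypothesis $\sqrt{\Sigma}A\sqrt{\Sigma}=O\Lambda O^{\top}$ with $O$ orthogonal and $\Lambda=\diag(\lambda_{1},\dots,\lambda_{d})$. I would set $C:=O^{\top}\Sigma^{-1/2}$ and $\widehat{X}:=CXC^{\top}=O^{\top}\Sigma^{-1/2}X\Sigma^{-1/2}O$. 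The first ingredient is the transformation rule for non-central Wishart laws: if $U\sim\mathcal{W}_{d}(\kappa,\Sigma,\Theta)$ and $C$ is invertible, then $CUC^{\top}\sim\mathcal{W}_{d}\bigl(\kappa,\,C\Sigma C^{\top},\,(C\Sigma C^{\top})^{-1}C(\Sigma\Theta)C^{\top}\bigr)$. This is read off from the Laplace transform in the definition of $\mathcal{W}_{d}$ stated above, using $\langle u,CUC^{\top}\rangle=\langle C^{\top}uC,U\rangle$ together with $\det(I_{d}-2C^{\top}uC\Sigma)=\det(I_{d}-2u\,C\Sigma C^{\top})$ and the analogous rearrangement of the exponential term (for integer $\kappa$ it is also immediate from the Gaussian-sum representation $U=\sum_{j=1}^{\kappa}\eta_{j}\eta_{j}^{\top}$, since $C\eta_{j}\sim\mathcal{N}(C\mu_{j},C\Sigma C^{\top})$). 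Applying this with $\Theta=\Sigma^{-1}x$, so that $\Sigma\Theta=x$, yields $C\Sigma C^{\top}=I_{d}$ and hence $\widehat{X}\sim\mathcal{W}_{d}(\kappa,I_{d},y)$ with $y=O^{\top}\Sigma^{-1/2}x\Sigma^{-1/2}O$, which is exactly the matrix in the statement.

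Next, writing $A=\Sigma^{-1/2}O\Lambda O^{\top}\Sigma^{-1/2}$ and using cyclicity of the trace,
\[
\langle A,X\rangle=\tr(AX)=\tr\bigl(\Lambda\,O^{\top}\Sigma^{-1/2}X\Sigma^{-1/2}O\bigr)=\tr(\Lambda\widehat{X})=\sum_{i=1}^{d}\lambda_{i}\,\widehat{X}_{ii},
\]
so it remains only to identify the joint law of the diagonal $(\widehat{X}_{11},\dots,\widehat{X}_{dd})$. I would then establish the key claim: for $U\sim\mathcal{W}_{d}(\kappa,I_{d},y)$ the entries $U_{11},\dots,U_{dd}$ are independent and $U_{ii}\sim\chi^{2}(\kappa,y_{ii})$. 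For $\kappa\in\mathbb{N}$ this is immediate from $U=\sum_{j=1}^{\kappa}\eta_{j}\eta_{j}^{\top}$ with $\eta_{j}\sim\mathcal{N}(\mu_{j},I_{d})$ independent and $\sum_{j}\mu_{j}\mu_{j}^{\top}=y$: the collections $\{\eta_{j,i}:j=1,\dots,\kappa\}$ are mutually independent across $i$ (because $\mathrm{Cov}(\eta_{j})=I_{d}$), while $U_{ii}=\sum_{j}\eta_{j,i}^{2}\sim\chi^{2}\bigl(\kappa,\sum_{j}\mu_{j,i}^{2}\bigr)=\chi^{2}(\kappa,y_{ii})$. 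For general $\kappa\geq d-1$ one argues via Laplace transforms: evaluating the defining formula of $\mathcal{W}_{d}(\kappa,I_{d},y)$ at $u=\diag(s_{1},\dots,s_{d})$ and observing that $\diag(s)(I_{d}+2\diag(s))^{-1}$ is diagonal with entries $s_{i}/(1+2s_{i})$, one finds that $\mathbb{E}\bigl[e^{-\langle\diag(s),U\rangle}\bigr]$ factorises as $\prod_{i=1}^{d}\psi_{i}(s_{i})$, where $\psi_{i}$ is precisely the Laplace transform of a non-central $\chi^{2}(\kappa,y_{ii})$ law; uniqueness of the Laplace transform then delivers both the mutual independence of the $U_{ii}$ and their marginal distributions (compare also~\cite{KK:14}).

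Combining the last display with this claim gives $\langle A,X\rangle=\sum_{i=1}^{d}\lambda_{i}\widehat{X}_{ii}\stackrel{d}{=}\sum_{i=1}^{d}\lambda_{i}V_{i}$ with $V_{i}\sim\chi^{2}(\kappa,y_{ii})$ independent, which is the assertion. The only genuinely delicate point, I expect, is the treatment of non-integer degrees of freedom $\kappa$: when $\kappa\in\mathbb{N}$ the Gaussian-sum representation trivialises both the change-of-variables formula and the independence of the diagonal, but in general one must argue solely through the Laplace transform in the definition of $\mathcal{W}_{d}$ and keep careful track of traces rather than matrices (the matrix $u(I_{d}+2u\Sigma)^{-1}$ need not be symmetric, even though its trace against a symmetric matrix is all that enters). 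Everything else reduces to routine linear algebra and a standard uniqueness argument for Laplace transforms.
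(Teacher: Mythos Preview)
Your proof is correct and follows essentially the same route as the paper's: conjugate to a standard Wishart $\mathcal{W}_d(\kappa,I_d,y)$ via $O^{\top}\Sigma^{-1/2}$, rewrite $\langle A,X\rangle$ as $\sum_i\lambda_i$ times the diagonal entries, and identify the law through the moment generating function. The only cosmetic differences are that the paper performs the normalisation in two steps (first invoking \cite{AA:13} for $X\sim\sqrt{\Sigma}Z\sqrt{\Sigma}$ with $Z\sim\mathcal{W}_d(\kappa,I_d,\Sigma^{-1/2}x\Sigma^{-1/2})$, then rotating by $O$) and matches the \emph{univariate} MGF of the weighted sum $\sum_i\lambda_i Y_{ii}$ directly, rather than first extracting joint independence of the diagonal entries as you do.
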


\begin{proof}
By~\cite[Proposition 6]{AA:13}, $X$ has the same distribution as  $\sqrt{\Sigma} Z \sqrt{\Sigma}$ where $Z \sim   \mathcal{W}_d(\kappa, I_d, \Sigma^{-\frac{1}{2}}x\Sigma^{-\frac{1}{2}} )$. Therefore 
\begin{align*}
\langle A,X\rangle &~\sim \langle A, \sqrt{\Sigma} Z\sqrt{\Sigma} \rangle= \langle \sqrt{\Sigma} A \sqrt{\Sigma}, Z \rangle
=\langle O\Lambda O^{\top}, Z \rangle
=\langle \Lambda, O^{\top} Z O\rangle
=:\langle \Lambda, Y\rangle
=\sum_{i=1}^d \lambda_i Y_{ii}
\end{align*}
where $Y:=O^{\top} Z O ~\sim  W_d(\kappa, I_d, O^{\top}\Sigma^{-\frac{1}{2}}x\Sigma^{-\frac{1}{2}}O )=W_d(\kappa, I_d, y)$. 
Let us now compute the moment generating function of $\sum_{i=1}^d \lambda_i Y_{ii}$, which is given by (see, e.g.,~\cite[Proposition A.5]{KK:14})
\[
\mathbb{E}\left[e^{ u \sum_{i=1}^d \lambda_i Y_{ii}}\right]=\prod_{i=1}^d(1-2u\lambda_i)^{-\frac{\kappa}{2}}e^{\frac{u\lambda_iy_{ii}}{1-2u\lambda_i}}.
\]
However, this corresponds to the moment generating function of $\sum_{i=1}^d\lambda_iV_i$, where $V_i ~\sim\chi^2(\kappa, y_{ii})$ are independent random variables. Indeed, it holds that 
\[
\mathbb{E}\left[e^{u ( \sum_{i=1}^d\lambda_i V_{i})}\right]=\prod_{i=1}^d\mathbb{E}\left[e^{u \lambda_i V_{i}}\right]
=\prod_{i=1}^d(1-2u\lambda_i)^{-\frac{\kappa}{2}}e^{\frac{u\lambda_iy_{ii}}{1-2\lambda_iu}}.
\]
This proves that $\sum_{i=1}^d \lambda_i Y_{ii} \sim \sum_{i=1}^d \lambda_i V_{i}$. Since $\langle A, X_T \rangle~\sim \sum_{i=1}^d \lambda_i Y_{ii}$, the assertion is proved.
\end{proof}

\begin{corollary}\label{cor:Wish}
Let $A \in \mathbb{S}_d$ and $X$ a Wishart process of form~\eqref{eq:Wish}. Then under $\mathbb{Q}^{T+\delta}$ it holds that 
\[
\langle A, X_T\rangle ~\sim \sum_{i=1}^d \lambda_{i,T}V_{i,T},
\]
where $\lambda_{i,T}$ are the eigenvalues of $\sqrt{V(0)}A \sqrt{V(0)}=O\Lambda_T O^{\top}$ and $V_{i,T}, i\in \{1,\ldots,d\}$ are independent random variables with $V_{i,T} \sim \chi^2(\kappa, y_{ii,T})$, where
\[
y_T=(O^{\top}V(0)^{-\frac{1}{2}}\Psi^{\top}(0)\Psi(0)V(0)^{-\frac{1}{2}}O)
\]
and $V(0)$ and $\Psi(0)$ are given in Lemma~\ref{lem:newWish}. The same assertion holds for $\mathbb{\widetilde{Q}}$ with $V(0)$ and $\Psi(0)$ replaced by $\widetilde{V}(0)$ and $\widetilde{\Psi}(0)$.
\end{corollary}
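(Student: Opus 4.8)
The plan is simply to chain Lemma~\ref{lem:newWish} with Proposition~\ref{prop:chisquare}: the former identifies the law of $X_T$ under $\mathbb{Q}^{T+\delta}$ and under $\widetilde{\mathbb{Q}}$ as a non-central Wishart law, and the latter turns $\langle A, X_T\rangle$, for a symmetric matrix $A$ and such a Wishart-distributed matrix, into a weighted sum of independent non-central $\chi^2$ random variables. So the corollary is essentially a substitution of the first result into the second.

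Concretely, I would first recall from Lemma~\ref{lem:newWish}(ii) that under $\mathbb{Q}^{T+\delta}$ the matrix $X_T$ is distributed as $\mathcal{W}_d\bigl(\kappa, V(0), V(0)^{-1}\Psi(0)^\top x\Psi(0)\bigr)$, with $V(0)$ and $\Psi(0)$ given by the explicit formulas there. Before invoking Proposition~\ref{prop:chisquare} I would check its two requirements. First, $V(0)\in\mathbb{S}_d^{++}$: from the representation $V(0)=\int_0^T e^{\int_t^T A_s\,ds}\,Q^\top Q\, e^{\int_t^T A_s^\top\,ds}\,dt$ one sees that $V(0)$ is an integral over a nondegenerate time interval of positive semidefinite matrices, hence positive definite provided $Q^\top Q$ is nonsingular, a mild non-degeneracy condition on the Wishart parameters that should be recorded. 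Second, the non-centrality parameter has the admissible form $V(0)^{-1}\Xi$ with $\Xi:=\Psi(0)^\top x\Psi(0)$, and $\Xi\in\mathbb{S}_d^{+}$ because $x=x_0\in\mathbb{S}_d^{+}$; this is precisely the shape assumed in Proposition~\ref{prop:chisquare}.

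With these verifications in hand, applying Proposition~\ref{prop:chisquare} with $\Sigma=V(0)$ and non-centrality argument $\Xi$ and writing $\sqrt{V(0)}A\sqrt{V(0)}=O\Lambda_T O^\top$ with $O$ orthogonal yields $\langle A, X_T\rangle\sim\sum_{i=1}^d\lambda_{i,T}V_{i,T}$, where the $\lambda_{i,T}$ are the diagonal entries of $\Lambda_T$ and the $V_{i,T}$ are independent with $V_{i,T}\sim\chi^2(\kappa,y_{ii,T})$ for $y_T=O^\top V(0)^{-1/2}\,\Xi\,V(0)^{-1/2}O$, i.e. the matrix displayed in the statement. The assertion under $\widetilde{\mathbb{Q}}$ then follows verbatim, using Lemma~\ref{lem:newWish}(i) and replacing $V(0),\Psi(0)$ by $\widetilde V(0),\widetilde\Psi(0)$. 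There is no substantial obstacle: the only point deserving care is the positive-definiteness of $V(0)$ and $\widetilde V(0)$, which is what makes $V(0)^{-1/2}$, $\widetilde V(0)^{-1/2}$ and the non-centrality parameter well defined, and which is why the non-degeneracy of $Q$ should be assumed throughout; together with the elementary observation that $\Psi(0)^\top x\Psi(0)$ and $\widetilde\Psi(0)^\top x\widetilde\Psi(0)$ are symmetric positive semidefinite because the driving Wishart process starts from $x_0\in\mathbb{S}_d^{+}$.
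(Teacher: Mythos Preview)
Your proposal is correct and follows exactly the same approach as the paper, which simply states that the assertion is a direct consequence of Lemma~\ref{lem:newWish} and Proposition~\ref{prop:chisquare}. Your additional care in verifying the positive-definiteness of $V(0)$ (via the non-degeneracy of $Q$) and the admissible form of the non-centrality parameter goes beyond what the paper records, but is a welcome clarification of the implicit hypotheses.
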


\begin{proof}
The assertion is a direct consequence of Lemma~\ref{lem:newWish} and Proposition~\ref{prop:chisquare}.
\end{proof}

\subsection{A closed-form expression for the price of caplet}

Finally, by relying on the above results, we are ready to give a (semi-)analytical formula for the price of a caplet in the above Wishart model.
\begin{theorem}
Let $X$ be a Wishart process of the form~\eqref{eq:Wish}. 
Consider an affine short rate multi-curve model with 
\begin{itemize}
\item 
OIS short rate $r_t=\ell(t)+\langle \lambda, X_t\rangle$, for all $t \in [0, \T]$ and 
\item logarithmic multiplicative spreads $\log S^{\delta}(t,t)=c(t)+\langle \gamma,  X_{t} \rangle$, for all $t \in [0, \T]$. 
\end{itemize}
Then the price of a caplet at date $0$ with unitary notional,  reset date $T$ and payoff $\delta(L_T(T,T+\delta)-K)^+$ at the settlement date $T+\delta$, is given by
\begin{equation}\label{eq:capletpriceWish2}
\Pi^{CPLT}(0; T,T+\delta,K,1)=S^{\delta}(0,T)B(0,T)\bigl(1- \widetilde{F}_T(C_{T,K})\bigr)
-(1+\delta K)B(0,T+\delta)\bigl(1-F_T(C_{T,K})\bigr),
\end{equation}
where 
\begin{itemize}
\item the constant $C_{T,K}$ is given by
\begin{align}\label{eq:constC}
C_{T,K}= \log(1+\delta K) -\int_T^{T+\delta}\ell(u) du+ \tilde{\phi}(\delta,0,-\lambda)-c(T),
\end{align} 
with $\widetilde{\phi}$ being the constant part in the characteristic exponent of $(X, \int_0^{\cdot} X_s ds)$, and
\item 
$\widetilde{F}_T$ and $F_T$ denote the cumulative distribution functions of a weighted sum of  non-centrally $\chi^2$-distributed random variables corresponding to  $\sum_{i=1}^d \widetilde{\lambda}_{i,T}\widetilde{V}_{i,T}$ and $\sum_{i=1}^d\lambda_{i,T} V_{i,T} $ as of Corollary~\ref{cor:Wish} for $\widetilde{\mathbb{Q}}$ and $\mathbb{Q}^{T+\delta}$ with $A=\gamma-\tilde{\psi}(\delta,0,-\lambda)$ and 
$\widetilde{\psi}$ the constant part in the characteristic exponent of $(X, \int_0^{\cdot} X_s ds)$.
\end{itemize}
\end{theorem}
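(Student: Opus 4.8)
The plan is to start from the caplet representation already derived in \eqref{eq:capletpriceWish1}, evaluated at $t=0$. Since $X_0=x$ is deterministic, the conditional expectations there become unconditional, and the problem reduces to computing the two probabilities $\QQtilde\bigl[S^{\delta}(T,T)\geq (1+\delta K)B(T,T+\delta)\bigr]$ and $\QQ^{T+\delta}\bigl[S^{\delta}(T,T)\geq (1+\delta K)B(T,T+\delta)\bigr]$, and then plugging the results back into \eqref{eq:capletpriceWish1}.

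The first step is to rewrite the exercise event as a half-space in $X_T$. Taking logarithms and using the spread specification $\log S^{\delta}(T,T)=c(T)+\langle\gamma,X_T\rangle$ together with the affine form of the OIS bond price, namely $B(T,T+\delta)=\exp\bigl(-\int_T^{T+\delta}\ell(u)\,du+\tilde{\phi}(\delta,0,-\lambda)+\langle\tilde{\psi}(\delta,0,-\lambda),X_T\rangle\bigr)$, which follows from \eqref{eq:time_homogeneous_bond_spreads} (equivalently, from $B(T,T+\delta)=\EE[e^{-\int_T^{T+\delta}r_u\,du}|\cF_T]$, the short rate specification $r_t=\ell(t)+\langle\lambda,X_t\rangle$, and the affine property of $Y=(X,\int_0^{\cdot}X_s\,ds)$ of Lemma~\ref{mixedstate}), the event $\{S^{\delta}(T,T)\geq (1+\delta K)B(T,T+\delta)\}$ becomes $\{\langle\gamma-\tilde{\psi}(\delta,0,-\lambda),X_T\rangle\geq C_{T,K}\}$ with $C_{T,K}$ exactly as in \eqref{eq:constC}. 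Setting $A:=\gamma-\tilde{\psi}(\delta,0,-\lambda)$, symmetrized so that $A\in\mathbb{S}_d$ (which does not affect $\langle A,X_T\rangle$ since $X_T\in\mathbb{S}_d^+$), both probabilities take the form $\QQ^{\ast}[\langle A,X_T\rangle\geq C_{T,K}]$.

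The second step invokes the distributional results of the preceding subsections. By Lemma~\ref{lem:newWish}, $X_T$ has under $\QQtilde$ the law $\mathcal{W}_d\bigl(\kappa,\widetilde{V}(0),\widetilde{V}(0)^{-1}\widetilde{\Psi}(0)^{\top}x\widetilde{\Psi}(0)\bigr)$ and under $\QQ^{T+\delta}$ the law $\mathcal{W}_d\bigl(\kappa,V(0),V(0)^{-1}\Psi(0)^{\top}x\Psi(0)\bigr)$, both of the form $\mathcal{W}_d(\kappa,\Sigma,\Sigma^{-1}\tilde{x})$ required by Proposition~\ref{prop:chisquare}. Corollary~\ref{cor:Wish} then gives $\langle A,X_T\rangle\sim\sum_{i=1}^d\widetilde{\lambda}_{i,T}\widetilde{V}_{i,T}$ under $\QQtilde$ and $\langle A,X_T\rangle\sim\sum_{i=1}^d\lambda_{i,T}V_{i,T}$ under $\QQ^{T+\delta}$, weighted sums of independent non-central $\chi^2$ random variables with the stated weights and non-centrality parameters, whose cumulative distribution functions are precisely $\widetilde{F}_T$ and $F_T$. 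Hence the two probabilities equal $1-\widetilde{F}_T(C_{T,K})$ and $1-F_T(C_{T,K})$; substituting into \eqref{eq:capletpriceWish1} at $t=0$ yields \eqref{eq:capletpriceWish2}.

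The algebraic manipulation of the exercise event is routine; the points requiring care — and the main obstacle — are the bookkeeping with the characteristic exponents of $Y=(X,\int_0^{\cdot}X_s\,ds)$ (keeping track of signs and of the arguments of $\tilde{\phi}$ and $\tilde{\psi}$ in the definition of $C_{T,K}$), verifying that $\widetilde{V}(0)$ and $V(0)$ are strictly positive definite so that Proposition~\ref{prop:chisquare} and Corollary~\ref{cor:Wish} genuinely apply, and the above-mentioned symmetrization of $\tilde{\psi}(\delta,0,-\lambda)$ so that $A\in\mathbb{S}_d$.
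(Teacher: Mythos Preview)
Your proposal is correct and follows exactly the same route as the paper: the paper's proof is a one-line reference to equation~\eqref{eq:capletpriceWish} together with Corollary~\ref{cor:Wish}, and you have simply unpacked those two ingredients (rewriting the exercise event as $\{\langle A,X_T\rangle\geq C_{T,K}\}$ and then applying the distributional identification of $\langle A,X_T\rangle$ under both measures). The additional care you flag about symmetrizing $\tilde{\psi}(\delta,0,-\lambda)$ and the positive definiteness of $V(0),\widetilde{V}(0)$ is appropriate bookkeeping that the paper leaves implicit.
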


\begin{proof}
The assertion is a consequence of equation~\eqref{eq:capletpriceWish} together with Corollary~\ref{cor:Wish}.
\end{proof}

\begin{remark}
In order to practically implement the pricing formula~\eqref{eq:capletpriceWish2}, the following steps are necessary:
\begin{itemize}
\item
compute $\widetilde{\psi}$ and integrate it over the interval $[0,T]$ to obtain $V(0), \widetilde{V}(0), \Psi(0), \widetilde{\Psi}(0)$;
\item compute the eigenvalues and eigenvectors of $\sqrt{\widetilde{V}(0)}A \sqrt{\widetilde{V}(0)}$ and $\sqrt{V(0)}A \sqrt{V(0)}$ with $A=\gamma-\tilde{\psi}(\delta,0,-\lambda)$ to get the weights $\lambda_T, \widetilde{\lambda}_T$ and the non-centrality parameters $y_T, \widetilde{y}_T$;
\item compute $C_{T,K}$ as given in~\eqref{eq:constC} and the distribution function of a positive weighted sum of two independent non-centrally $\chi^2$-distributed random variables, e.g., via a Laguerre series expansions (see~\cite{CL:05}) of the form
\[
F_T(C)=F_T(C_{T,K},\kappa, y_T, \lambda_T)=g(\kappa, C_{T,K})\sum_{j =0 }^n \alpha_j(\kappa,y_T,\lambda_T)L_j^{(\kappa)}(qC_{T,K}),
\]
where $g$ is a function depending on $\kappa, C_{T,K}$ and $\alpha_j$ are the coefficients (depending  on $\kappa,y_T,\lambda_T$) of the Laguerre polynomials $L_j^{(\kappa)}$ (evaluated at $qC_{T,K}$ where $q$ denotes a constant). For an alternative computation based on Fourier inversion we refer to \cite{J:96}.
\end{itemize}
Note that the weights $\lambda_T, \widetilde{\lambda}_T$ and the non-centrality parameters $y_T,\widetilde{y}_T$ depend on the maturity (the degrees of freedom $\kappa$ are constant) while the arguments $C_{T,K}$ in the distribution function depend additionally on the strike. In the case where $\ell$ and $c$ are constants, $C$ only depends on $K$ and we write only $C_K$. In this case
the matrix $(F_T(C_{K}))_{T \in \{T_1, \ldots, T_m\}, K \in \{K_1, \ldots, K_l\}}$ for different maturities and strikes needed for calibration purposes can be obtained by a matrix product $UV$ where $U \in \mathbb{R}^{m \times n}$ and $V \in \mathbb{R}^{n\times l}$  are defined by 
 \[
 U_{ij}=\alpha_j(\kappa,y_{T_i},\lambda_{T_i}) 
 \quad\textrm{ and }\quad 
 V_{ij}= L_i^{(\kappa)}(qC_{K_j})g(\kappa, C_{K_j}).
 \]
Provided the initial term structure of spreads and bonds is known, this procedure then gives the prices for all maturities and strike rates.
\end{remark}

\section{Calibration analysis} \label{sec:aff_examples}
\newcommand{\dateFolder}{08022011}
\newcommand{\calibrationDate}{dataset\dateFolder}
\newcommand{\theDate}{August $2^{nd}$, 2011}

In this section, we discuss the calibration of two simple specifications of our general framework to cap/floor market data. This section aims at illustrating the practical feasibility of the proposed approach and it is not meant to suggest a particular specification of the framework. 
The first specification is based on a CIR-Gamma model, while the second specification is driven by a Wishart process, as presented in Section~\ref{sec:Wishart}. In particular, this represents the first instance of calibration of a Wishart short rate model in the multi-curve framework. Moreover, we complement our results with a parameter stability analysis in the case of the CIR-Gamma model, showing a satisfactory degree of stability.
We refer to \cite{Bormetti2015,CGNS:13} for calibration results based on swaption data and  to \cite{mopa10} for a calibration approach which relies on ATM European swaption and cap quotes.

\subsection{Market data} 

Let us start by briefly describing our market data sample. We initially consider a fixed trading date, namely \theDate. The data sample consists of a cross section of market quotes (corresponding to perfectly collateralized transactions) of linear and non-linear interest rate derivatives.
As far as linear products are concerned, we consider market data for overnight indexed swaps and interest rate swaps.
On the basis of these market quotes, we construct the OIS discount curve $T \mapsto B(0,T)$ and the forward curves $T\mapsto L_0(T,T+\delta_i)$, for $\delta_1= 3M$ and $\delta_2 = 6M$. 
This has been performed by relying on the Finmath Java library (see \cite{finmath}). 

\begin{figure}[ht]
  \centering
  \subfloat{\label{fig:discountCurve\calibrationDate}\includegraphics[scale=0.35]{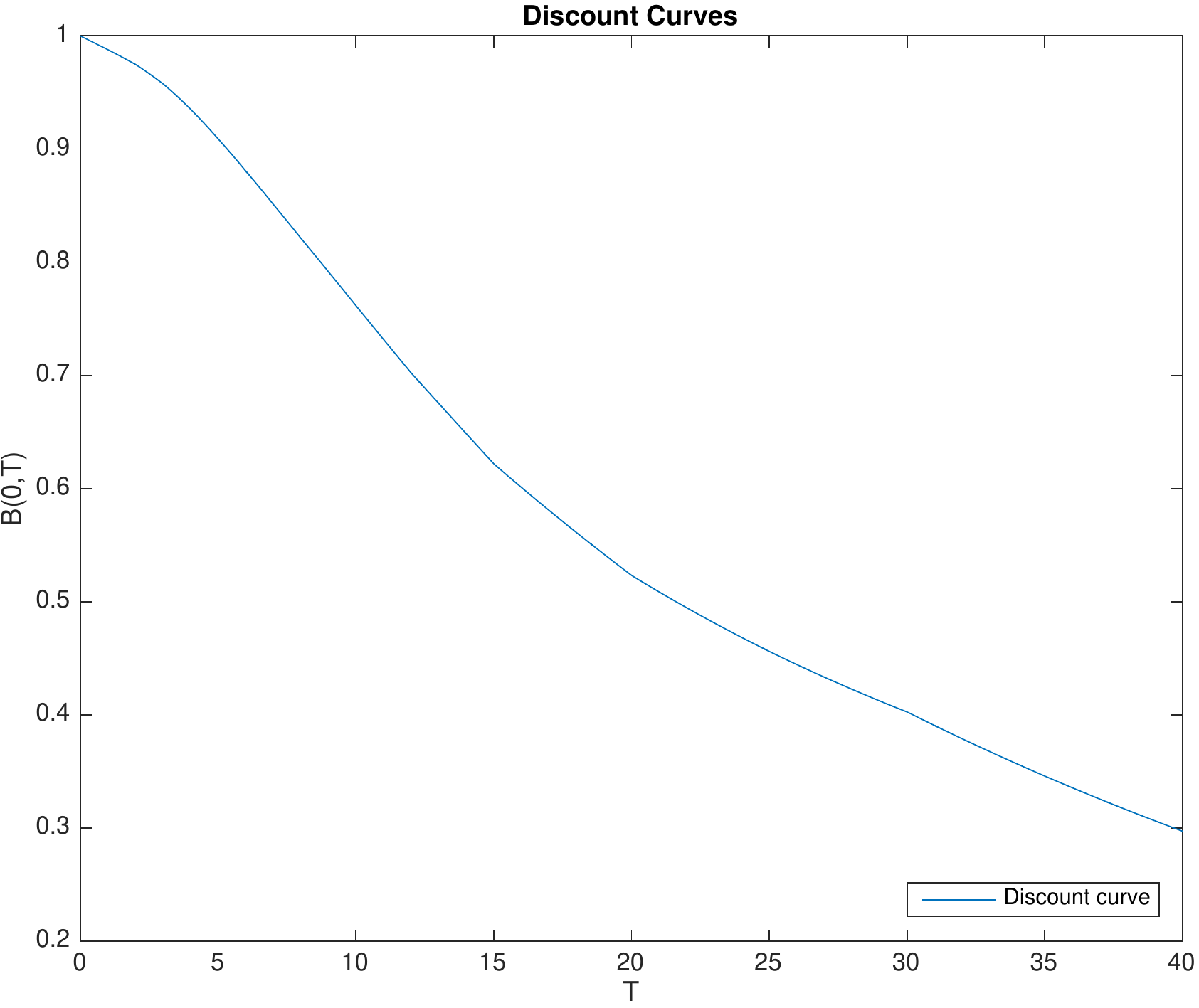}} \                                          \subfloat{\label{fig:forwardCurve\calibrationDate}\includegraphics[scale=0.35]{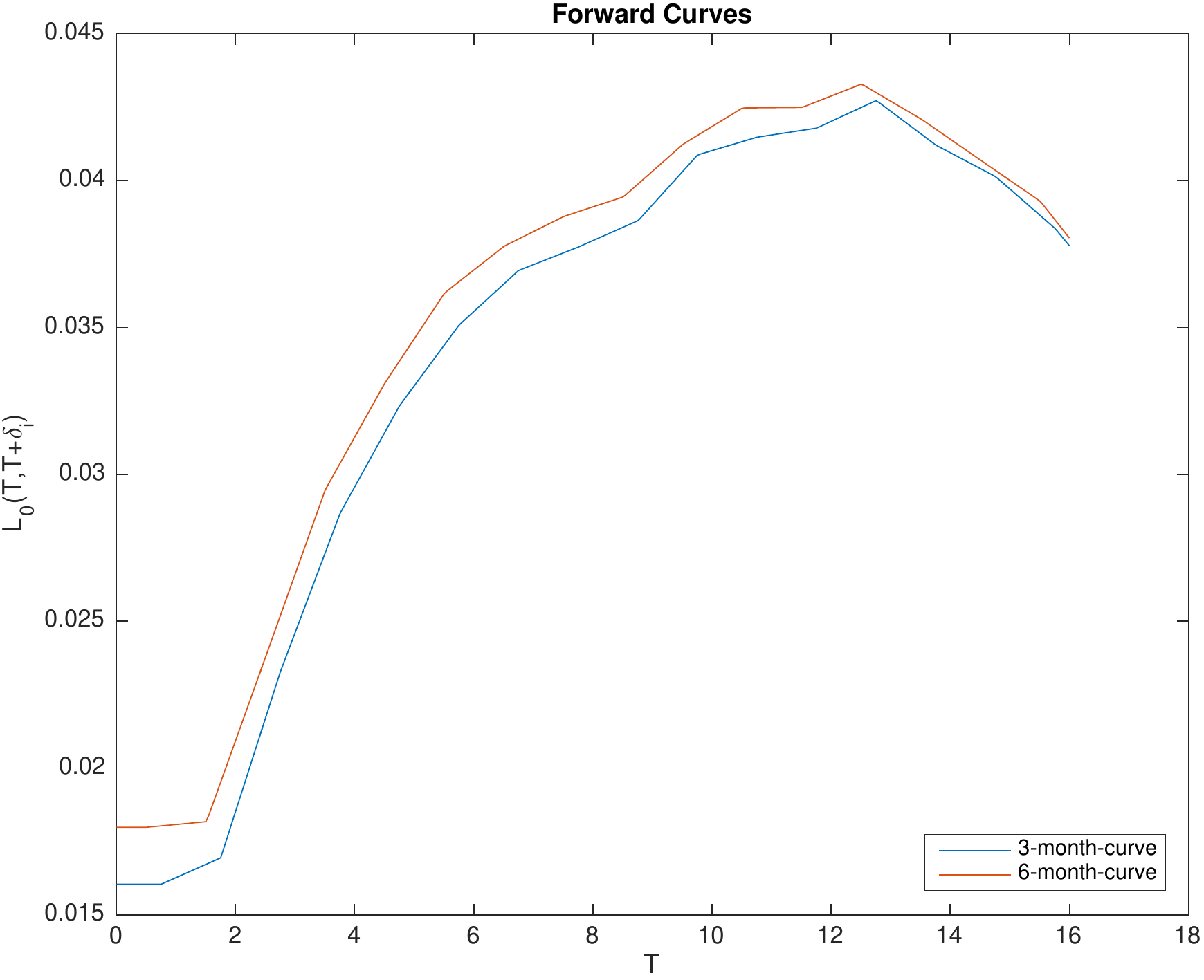}}
\caption{Discount and forward curves as of \theDate. \label{fig:curves\calibrationDate}}
\end{figure}

Concerning non-linear interest rate products, the market practice consists in posting a cap/floor composite surface: when the strike price is below the at-the-money (ATM) level, quotes refer to out-of-the-money (OTM) floors, whereas if the strike is above ATM, quotes refer to OTM caps. 
In this way, the whole surface is constructed from liquidly traded OTM options.
By relying on the put-call parity between caps and floors, we can however treat all implied volatilities as cap volatilities. To reduce the complexity of the calibration, it is convenient to construct a caplet implied volatility surface, suitably bootstrapped from cap implied volatility data.
The surface refers to strike prices ranging between $0.75\%$ and $6\%$ and  maturities between $6$ months and $10$ years. In the market, caps with maturity larger than two years are indexed to the $6$-month forward rate while those with lower expiry are linked to the $3$-month curve. 
A normal implied volatility is obtained by numerically searching for the value of $\sigma_N$ (where $N$ stands for normal) such that the Bachelier pricing formula for a caplet
\begin{align*}
\Pi^{CPLT}(t; T_{i-1},T_{i},K,1)&=B(t,T_i)\delta\Excond{\QQ^{T_{i}}}{\left(L_{T_{i-1}}(T_{i-1},T_{i})-K\right)^+}{\cF_t}\\
&=B(t,T_i)\delta\sigma_N\sqrt{T_{i-1}-t}\left(\frac{1}{\sqrt{2\pi}}e^{-\frac{z^2}{2}}+zN(z)\right),
\end{align*}
where
\[
N(x)=\frac{1}{\sqrt{2\pi}}\int_{-\infty}^xe^{-\frac{y^2}{2}}dy
\qquad\text{and}\qquad
z = \frac{L_t(T_{i-1},T_{i})-K}{\sigma_N\sqrt{T_{i-1}-t}},
\]
best fits the market price of a given caplet.

\subsection{Implementation details}

We now provide a detailed description of the implementation of Proposition~\ref{prop:capletBasicAffine}, by means of the FFT algorithm (see \cite{cm99}). In the present discussion, we express the integration variable as $\zeta=z-\im\epsilon$. Let also $k:=\log \bar{K}$ and
\[
\cI(z):=\frac{\varphi_{\mathcal{Y}_{T}}(z-\im(\epsilon+1))}{-(z-\im\epsilon)(z-\im(\epsilon+1))}.
\]
The integral term appearing in \eqref{capletprice} can then be written as
\begin{align*}
IT(k):=\frac{e^{-\epsilon k}}{\pi}\int_0^\infty\Re\left(e^{-\im zk}\cI(z)\right)dz.
\end{align*}
We perform a first approximation by introducing a trapezoidal rule of the form $z_{j_1}:=\eta(j_1-1), \ \eta >0, \ j_1=1,\ldots,N$, so that the effective upper limit of integration is given by $(N-1)\eta$. As we want to perform a simultaneous evaluation of the integral term for a grid of strike prices, we also introduce a grid for $k$ of the form $k_{j_2}:=-b+\eta^\star(j_2-1), \ \eta^\star>0, \ j_2=1,\ldots,N$, which gives a mesh covering the interval $[-b, b)$, with $b = 0.5N\eta^\star$. 
Since we want to apply the FFT algorithm, we need to impose the Nyquist condition, meaning that we set $\eta\eta^\star=2\pi/N$, thus introducing a tradeoff between the accuracy of the log-strike and integration grids. As suggested by \cite{cm99}, we introduce the weights of the Simpson rule, in order to obtain a satisfactory accuracy even for large values of $\eta$. In summary, the integral term (along a grid of log-strikes) is approximated via
\begin{align}
\label{eq:capletFFT}
IT(k_{j_2})\approx\frac{e^{-\epsilon k_{j_2}}}{\pi}\Re\sum_{j_1=1}^{N}e^{-\im \frac{2\pi}{N}(j_1-1)(j_2-1)}e^{\im z_{j_1}b}\cI(z_{j_1})\frac{\eta}{3}\left[3+(-1)^{j_1}+\delta_{j_1-1}\right],
\end{align}
where $\delta_n$ denotes a Kronecker delta function which is $1$ for $n=0$ and zero otherwise. Formula \eqref{eq:capletFFT} can be computed by a direct application of the FFT algorithm. In our analysis, we set $N=16384$ and $\eta=0.2$. 

For a given vector of model parameters $p$, belonging to the set of admissible parameters $\cP$, we compute caplet prices using the above methodology and convert them into model implied normal volatilities, that we denote by $\sigma^{imp}_{mod}(p)$. The aim of the calibration procedure is to solve
\begin{align*}
\min_{p\in\cP}\left\|\sigma^{imp}_{mkt}-\sigma^{imp}_{mod}(p)\right\|^2,
\end{align*}
where $\sigma^{imp}_{mkt}$ denotes the market-observed implied volatilities and $\|\cdot\|$ the Euclidean norm.

\subsection{Calibration results} 
In the following, we illustrate two candidate specifications along with their calibration results. Note that both models allow for a perfect fit to the observed term structures via a suitable choice of the functions $\ell$ and $c$ in line with Proposition~\ref{prop:deterministicShiftCharacterization}.

\subsubsection{CIR-Gamma model}	\label{sec:CIRGamma}
We first calibrate the following model consisting of a two-dimensional process $X=(X^1,X^2)$ of the form
\begin{align*}
X^1_t=X^1_0+ \int_0^t \left(b+\beta X^1_s\right)ds+\int_0^t \sigma\sqrt{X^1_s}dW_s, \qquad 
X^2_t=X^2_0+\int_0^t \int \xi \mu(d\xi),
\end{align*}
where  $b, \beta, \sigma \in \mathbb{R}$,  $W$ is a Brownian motion and $X^2$ is a Gamma process with compensating jump measure $\nu(d\xi)=mx^{-1}e^{-n\xi}d\xi$, with $m,n>0$.
The short rate is specified as $r_t=\ell(t)+\lambda X^1_t$ and the spreads are of the form $\log S^{\delta_i}(t,t)=c_i(t)+\gamma_{i}(X^1_t+X^2_t)$, $i=1,2$, with $\gamma_i,\lambda \in \mathbb{R}$ and $\ell,c:[0,\T]\rightarrow\RR$. 

Table~\ref{table:CIR-VGparams} reports the calibrated model parameters  while Figure~\ref{fig:calibrationErrors\calibrationDate} illustrates the quality of the fit in terms of prices and implied volatilities. More precisely, the left panel shows the squared error in price while the right panel illustrates the squared error in implied volatilities. 
Despite its simplicity, the model achieves a reasonably good fit to the observed market quotes. 
Looking at squared error in implied volatilities, we observe that the quality of the fit is lower for higher strikes and for the first maturity we considered (6 months). 
However, the quality of the fit is acceptable as can be seen by looking at the left panel of Figure~\ref{fig:calibrationErrors\calibrationDate}, which highlights that the squared error in price in the $6$-month maturity and deep OTM region is low.
Observe that the calibrated parameters satisfy $\gamma_1<\gamma_2$, reflecting the fact that $6$-month rates embed a higher degree of interbank risk with respect to $3$-month rates. This was achieved without imposing a priori constraints in the optimization algorithm.
Note also that the computational complexity of the procedure is at the same level of a calibration of a standard affine stochastic volatility model for equities.

\begin{table}[ht]
\centering
\begin{tabular}{|cc|cc|cc|}
\hline
\multicolumn{2}{|c|}{$X^1$}&\multicolumn{2}{c}{$X^2$}&\multicolumn{2}{|c|}{Parameters}\\
\hline
\hline
$b$ & $0.0630$ & $m$& $0.3651$ & $\lambda$& $0.0107$\\
$\beta $& $0.0033$ & $n$ & $1.8614$ & $\gamma_{1}$ & $0.0039$\\
$\sigma$ & $0.1479$& $X^2_0$ & $0.2386$ &$\gamma_{2}$ &$0.0128$\\
$X^1_0$& $0.4330$ & & & &\\
\hline
Resnorm &$0.0014$&&&&\\
\hline
\end{tabular}
\caption{Calibration result for the CIR-Gamma model. Resnorm represents the sum of squared distances between market and model implied volatilites.\label{table:CIR-VGparams}}
\end{table}

\begin{figure}[ht]
  \centering
  \subfloat{\label{fig:priceErrors\calibrationDate}\includegraphics[scale=0.40]{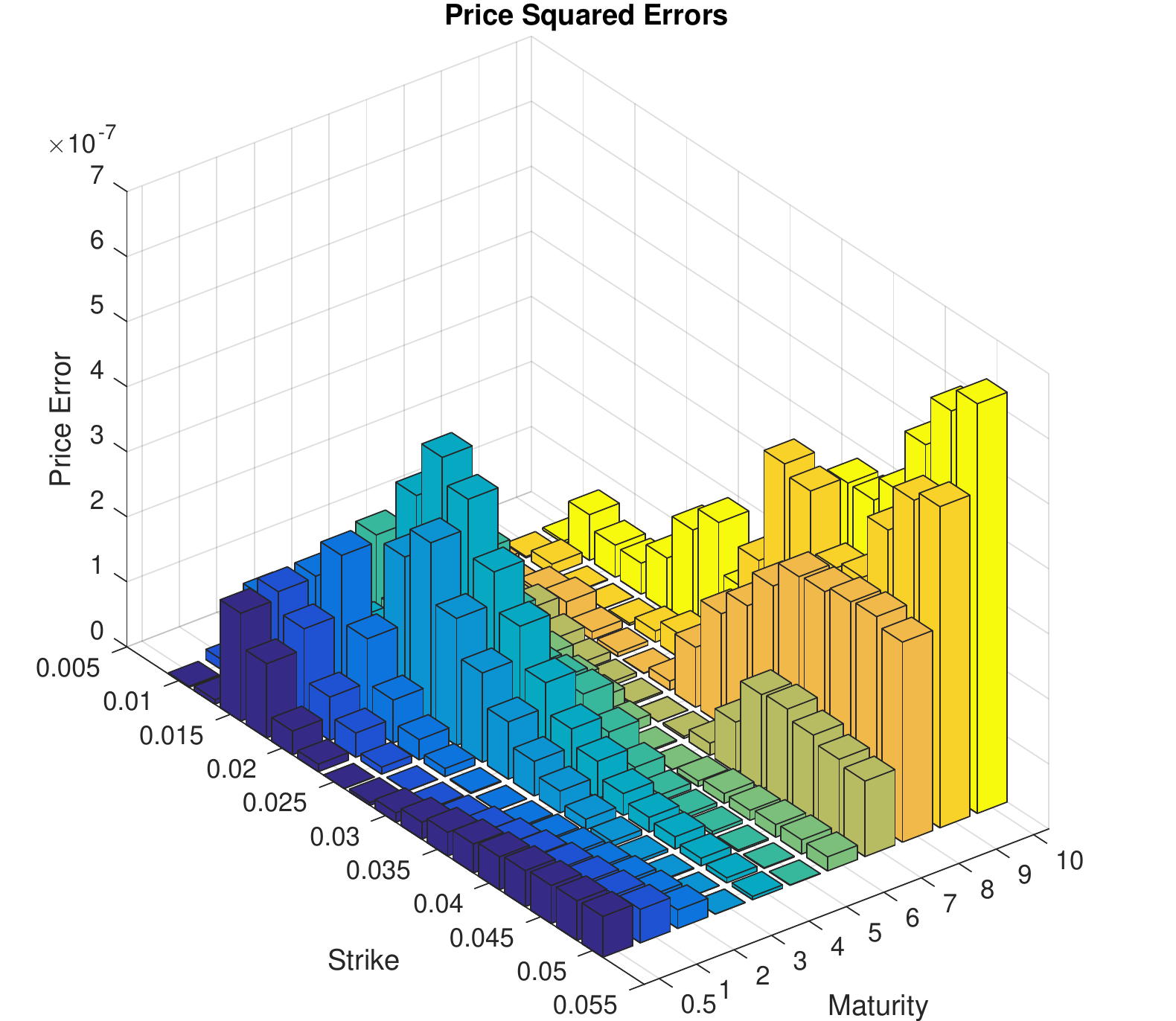}} \                                       \subfloat{\label{fig:volaErrors\calibrationDate}\includegraphics[scale=0.40]{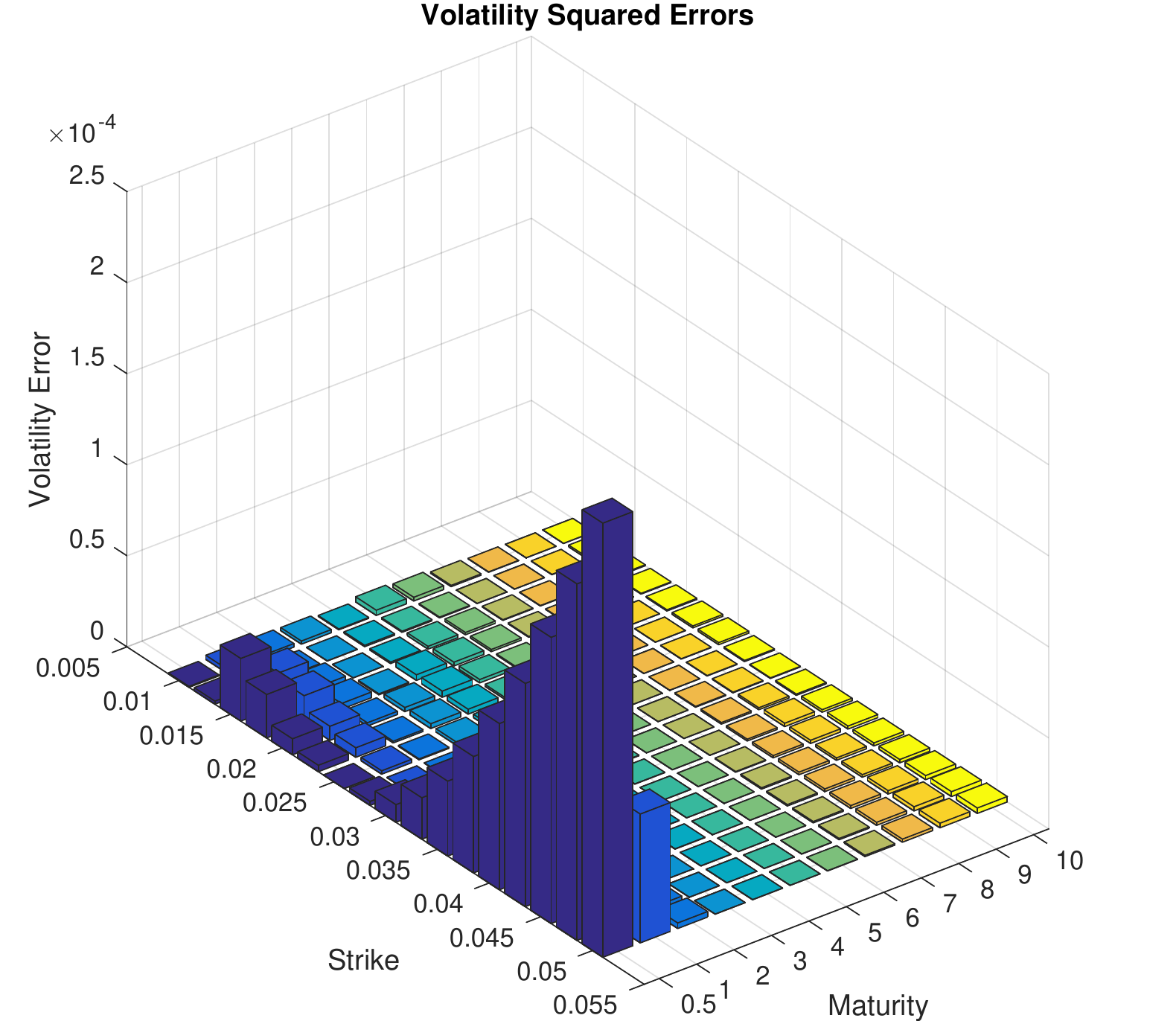}}\
\caption{Calibration residuals in terms of prices and implied volatilities for the CIR-Gamma model. Calibration date: \theDate. \label{fig:calibrationErrors\calibrationDate}}
\end{figure}

\subsubsection{Wishart-Gamma model}

We now present the results of the calibration to the same market data sample of a simple model driven by a Wishart process.
To the best of our knowledge, this represents the first example for the calibration of a Wishart-driven interest rate model to market data of non-linear products. Previous works such as \cite{Biagini2016,dfg13,gnoatto2012} were limited to the presentation of models and their properties.
Let $X=(X^1,X^2)$ be a process where $X^1$ is a Wishart process as defined in~\eqref{eq:Wish} with $d=2$ and $X^2$ is a Gamma process exactly as in the model considered in Section~\ref{sec:CIRGamma}. The short rate is specified as $r_t=\ell(t)+ \lambda \langle I_d, X^1_t\rangle $ and the spreads as $\log S^{\delta_i}(t,t)=c_i(t)+\gamma_i(\langle I_d, X^1_t \rangle +X^2_t)$, $i=1,2$, with $\gamma_i, \lambda \in \mathbb{R}$.
We call this specification {\em Wishart-Gamma model}. 
We report  in Table~\ref{table:Wishart-VGparams} the calibrated parameters, while Figure~\ref{fig:calibrationErrorsWis\calibrationDate}
illustrates the quality of the fit in terms of price and implied volatility errors. The magnitude of the calibration error is in line with the previous CIR-Gamma specification. Notice that, also in this case, the calibrated parameters satisfy $\gamma_1<\gamma_2$. We acknowledge however that the computational cost of this specification is higher due to the fact that one needs to solve matrix Riccati ODEs instead of classical ones.
Constraining some parameters or an application of the approach from Section~\ref{sec:Wishart} could represent solutions which are left for future research.

\begin{remark}
We point out that, since the caplet pricing formula of Proposition \ref{prop:capletBasicAffine} only depends on the specific form of the solutions to the system of generalized Riccati ODEs, the overall structure of the software implementation does not depend on the specific combination of processes/state spaces. 
\end{remark}

\begin{table}[ht]
\centering
\begin{tabular}{|cc|cc|cc|}
\hline
\multicolumn{2}{|c|}{$X^1$}&\multicolumn{2}{c}{$X^2$}&\multicolumn{2}{|c|}{Parameters}\\
\hline
\hline
$\kappa$ & $3.0626$ & $m$ & $0.3502$ & $\lambda$& $0.0021$ \\ 
$M$& $\left(\begin{array}{cc}
-0.4647 & -0.0218 \\ 
-0.0823 & 0.0110
\end{array} \right)$  & $n$ & $3.8926$ & $\gamma_{1}$& $0.0068$  \\
$Q$& $\left(\begin{array}{cc}
-0.0093 & 0.0201 \\ 
-0.0008 & 0.1019
\end{array} \right)$  & $X^2_0$& $2.7617$ &  $\gamma_{2}$ & $0.0118$ \\
$X^1_0$& $\left(\begin{array}{cc}
2.3928  & 1.4489 \\ 
1.4489 & 2.2730
\end{array} \right)$  & & & & \\
\hline
Resnorm &$0.0034$&&&&\\
\hline 
\end{tabular} 
\caption{Calibration result for the Wishart-Gamma model. Resnorm represents the sum of squared distances between market and model implied volatilities.\label{table:Wishart-VGparams}}
\end{table}

\begin{figure}[ht]
  \centering
  \subfloat{\label{fig:priceErrorsWis\calibrationDate}\includegraphics[scale=0.40]{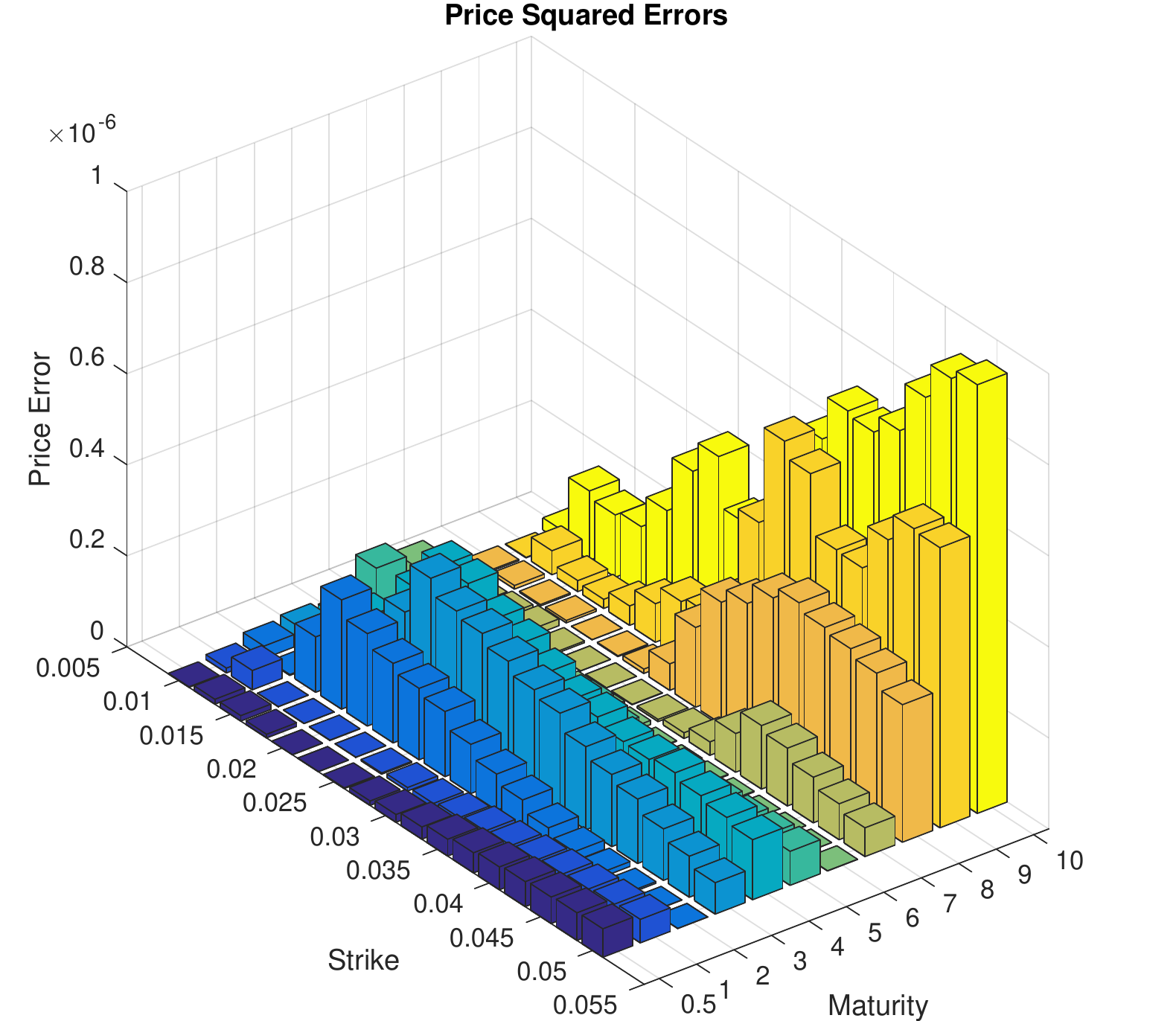}} \                                       \subfloat{\label{fig:volaErrorsWis\calibrationDate}\includegraphics[scale=0.40]{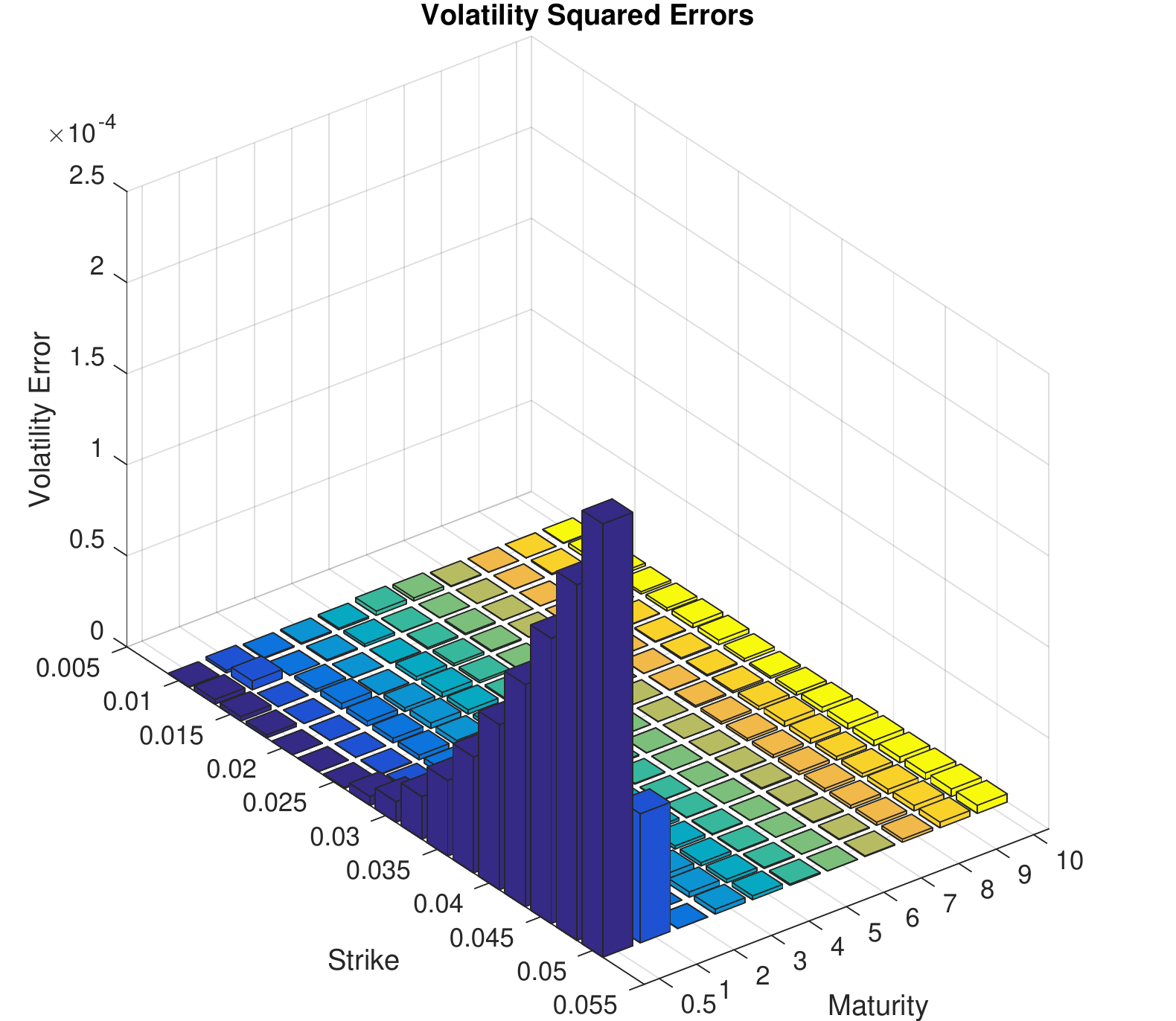}}\
\caption{Calibration residuals in terms of prices and implied volatilities. Calibration date: \theDate. \label{fig:calibrationErrorsWis\calibrationDate}}
\end{figure}

\subsubsection{Calibration stability} 
In this section, adopting the perspective of a front office interest rate desk performing a daily recalibration, we analyze the stability of the calibrated parameters. For this experiment we choose the CIR-Gamma model due to its higher computational tractability.

We take the calibrated parameters from Table~\ref{table:CIR-VGparams} as initial guess for a sequence of calibration experiments over a time window from August $2^{nd}$, 2011 up to August $31^{st}$, 2011.\footnote{Alternatively, one could choose as initial guess for the calibration at a given day the calibration result of the previous day: we performed this experiment and we noticed a slightly increased instability of the parameters. Therefore, our choice guarantees a higher stability while being consistent with the market practice.} 
 
The output of the procedure is a time series of calibrated parameters and calibration statistics. First of all, Figure~\ref{fig:calibrationStability06\calibrationDate} bottom panel shows that the quality of the fit, as measured by the sum of squared implied volatility errors, is subject to minor oscillations along the time window we consider, ranging between $0.0013$ and $0.0019$. A second important finding is related to the parameters $\gamma_1$ and $\gamma_2$ (central right panel): the ordering $\gamma_1<\gamma_2$  is persistent across the whole time window considered. 
We observe a satisfactory level of stability of the calibrated parameters (see top left and right panels) as measured by the ratio of the standard deviation rescaled by the sample average, which is always less than 20\%, see Table~\ref{table:CIRSTability1}. 

To further improve the stability of the calibration, one can proceed by noticing that scalar products among parameters, which are featured in the starting specification of the model, could generate instability in the calibration. Indeed, the specification of the risk-free short rate is proportional to the product between $\lambda$ and $X^1$, which suggests the presence of a redundancy between the projection and the initial value of the process $X^1_0$. Similarly, by looking at the spread specification, we also have products between $\gamma_i$, $i=1,2$, and the process $X^2$. For related issues when different affine processes generate the same term structure we refer to \cite{CFK:10}.
In the present case it is convenient to fix the value of $X^2_0$ and let the parameters $\gamma_i$ be calibrated, a choice which guarantees a better flexibility over the ordering of the multiplicative spreads associated to the different tenors. To test our intuition, we performed the stability experiment over the same time window we used previously. We were able to obtain a slight reduction of the coefficient of variation for almost all parameters without significantly affecting the quality of the fit in terms of mean squared error.

\begin{table}[!h]
\centering
\begin{tabular}{|c|c|c|c|}
\hline
Parameter & Std. Dev / Mean & Parameter & Std. Dev / Mean\\
\hline
\hline
$b$&$0.10532$ & $n$&$0.045478$ \\
$\beta$&$0.11868$ & $X^2_0$&$0.18247$ \\
$\sigma$& $0.094040$ & $\lambda$&$0.096756$ \\
$X^1_0$&$0.081350$ & $\gamma_1$&$0.069002$ \\
$m$&$0.10168$ & $\gamma_2$&$0.047381$  \\
\hline
\end{tabular}
\caption{Coefficient of variation for the model parameters.\label{table:CIRSTability1}}
\end{table}

\begin{figure}[h!]
  \centering
  \subfloat{\label{fig:calibrationStability01\calibrationDate}\includegraphics[scale=0.35]{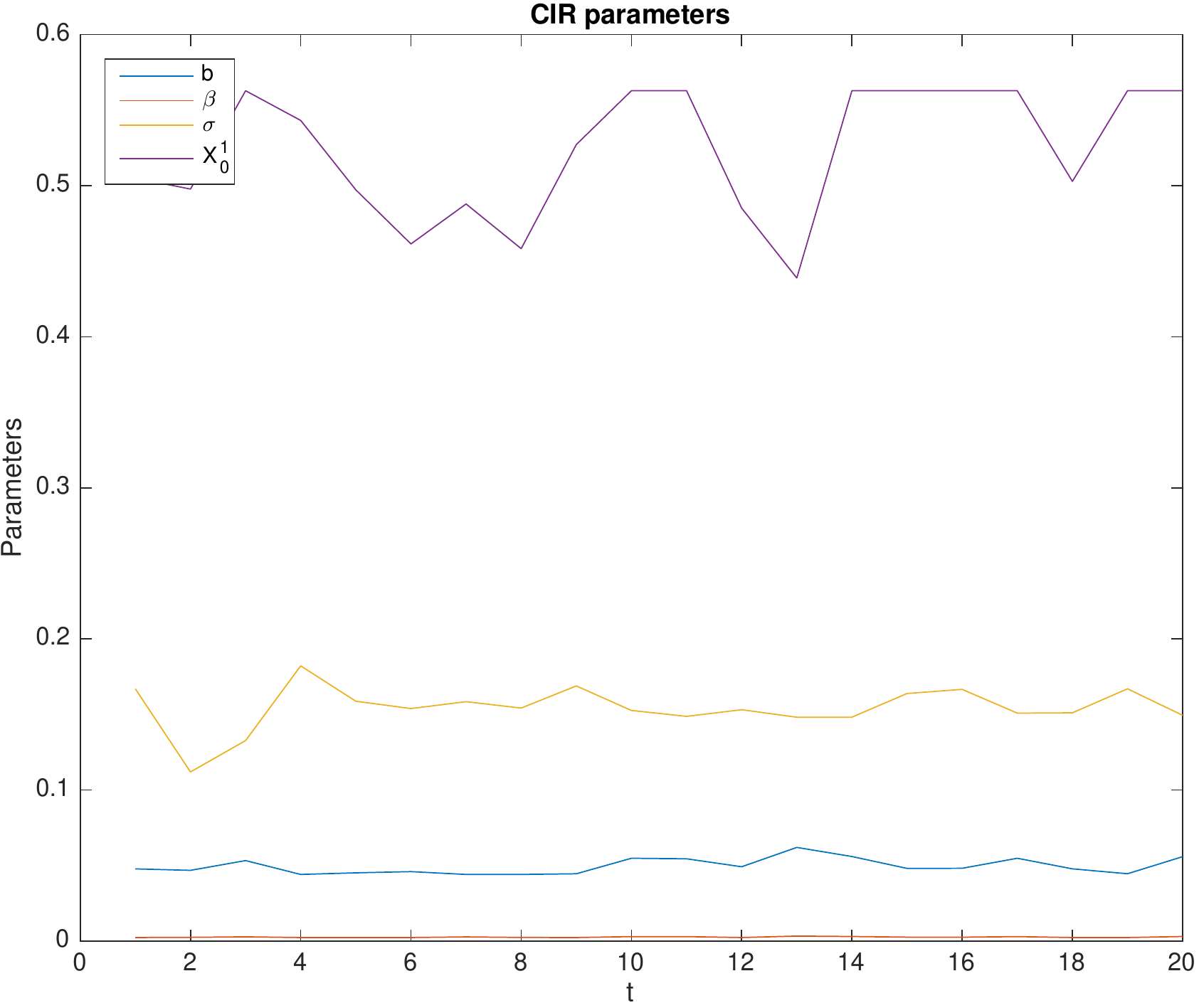}} \  
   \subfloat{\label{fig:calibrationStability02\calibrationDate}\includegraphics[scale=0.35]{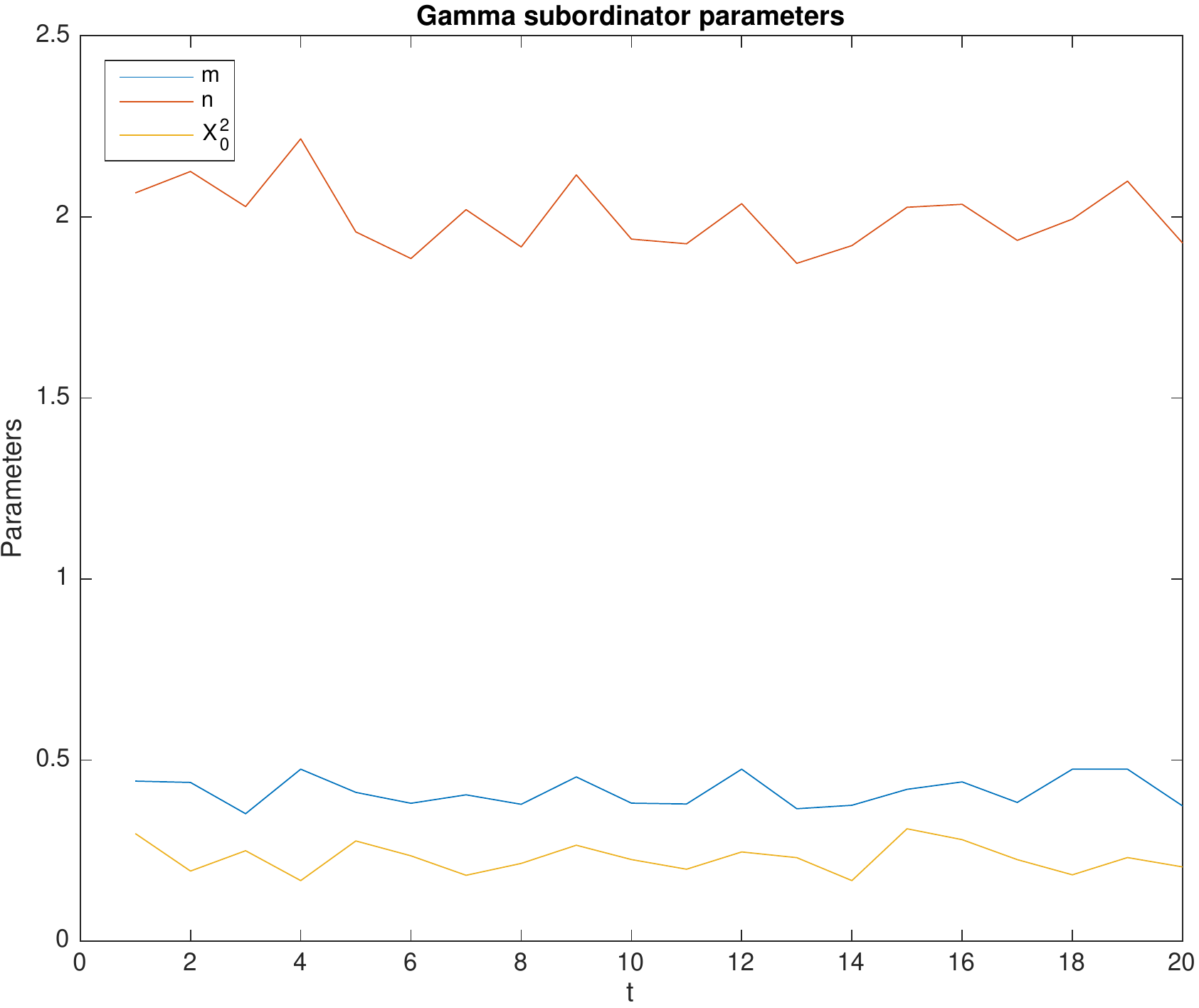}}\
    \subfloat{\label{fig:calibrationStability03\calibrationDate}\includegraphics[scale=0.35]{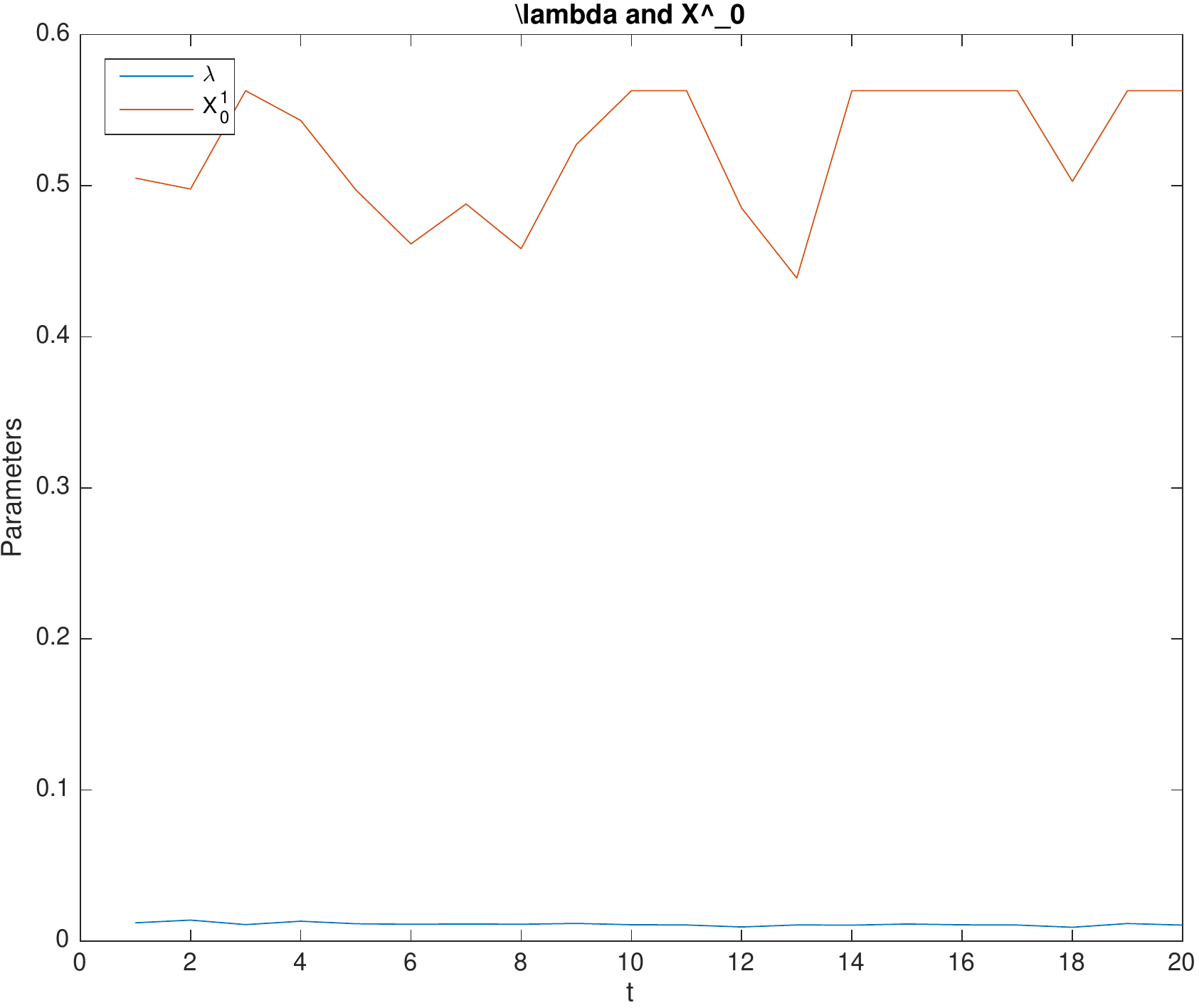}} \
    \subfloat{\label{fig:calibrationStability04\calibrationDate}\includegraphics[scale=0.35]{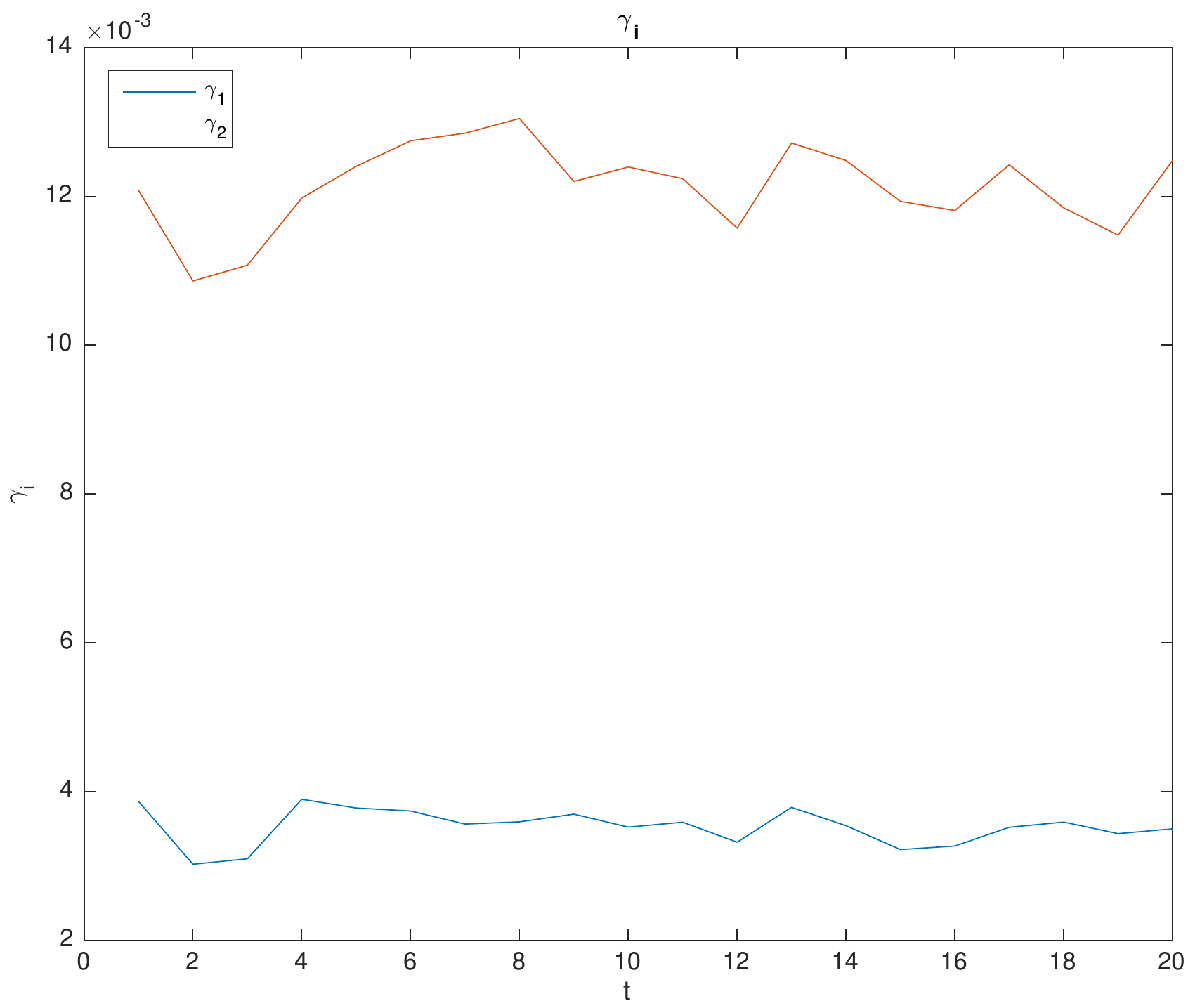}}   \
    \subfloat{\label{fig:calibrationStability05\calibrationDate}\includegraphics[scale=0.35]{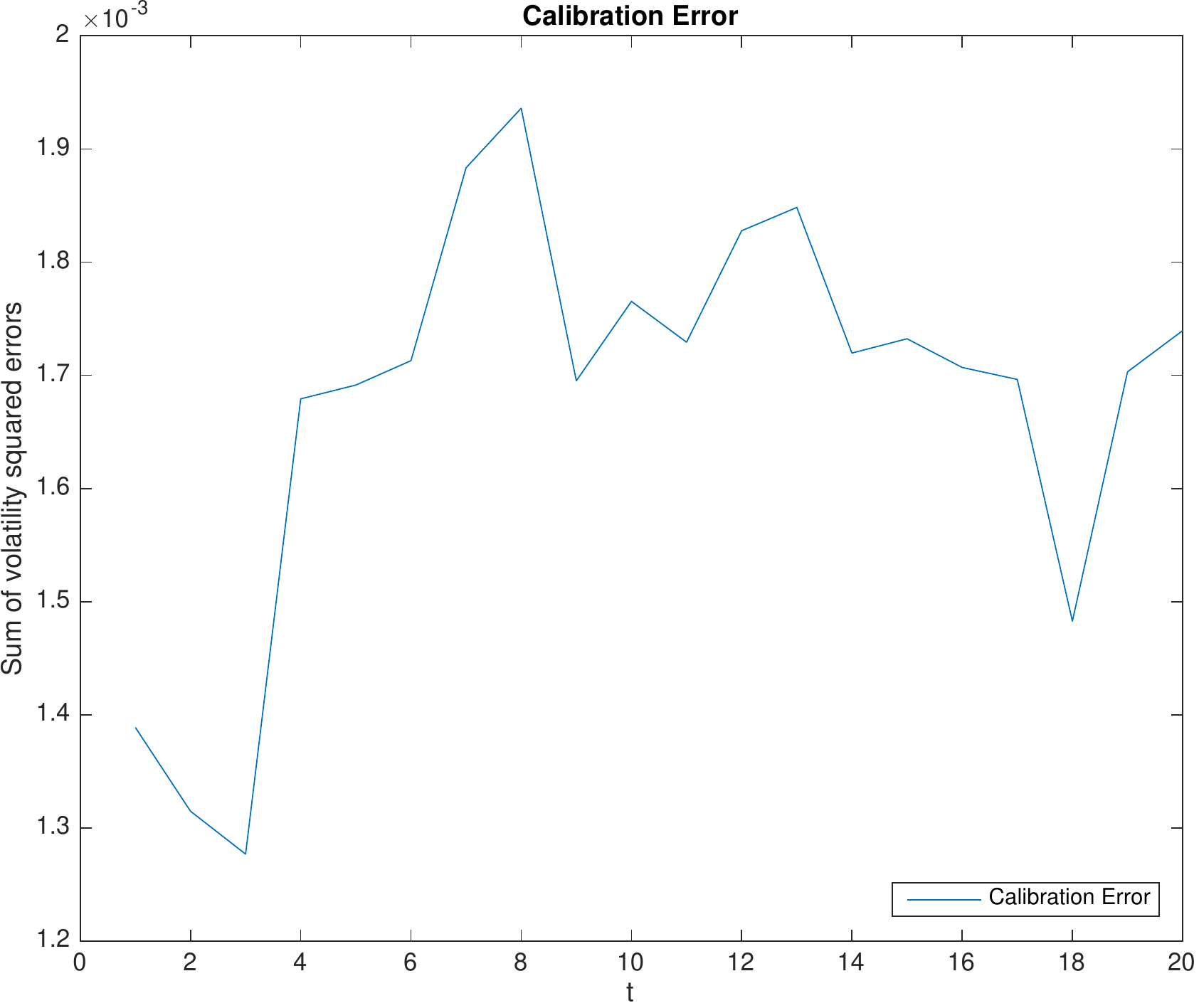}}                                        
\caption{Parameter stability test. Top left panel: CIR parameters. Top right panel: Gamma subordinator parameters. Central left panel: comparison between $X^1_0$ and $\lambda$. Central right panel: projection for the spread models.  Bottom panel: Sum of squared volatility errors over the whole sample. Calibration time window: August 2011. \label{fig:calibrationStability06\calibrationDate}}
\end{figure}

\appendix

\section{General pricing formulae}	\label{sec:gen_pricing}
This appendix presents general pricing formulas for typical interest rate derivatives. As we are going to show, the quantity $S^{\delta}(t,T)$ plays a pivotal role in the valuation of interest rate products. We here derive clean prices in the spirit of ~\cite[Appendix A]{CFG:14}, assuming perfect collateralization with the collateral asset being equal to the num\'eraire.

\subsection{Linear products}\label{sec:noopt}
The prices of linear interest rate products (i.e., without optionality features) can be directly expressed in terms of the basic quantities $B(t,T)$ and $S^{\delta}(t,T)$.\\

\noindent\emph{Forward rate agreement.}\\
A forward rate agreement (FRA) starting at $T$, with maturity $T+\delta$, fixed rate $K$ and notional $N$ is a contract which pays at time $T+\delta$ the amount
\begin{align*}
\Pi^{FRA}(T+\delta; T,T+\delta,K,N)=N\delta\bigl(L_T(T,T+\delta)-K\bigr).
\end{align*}
The value of such a claim at time $t\leq T$ is
\[
\begin{split}
\Pi^{FRA}(t; T,T+\delta,K,N)&=NB(t,T+\delta)\delta\,\Excond{\QQ^{T+\delta}}{L_T(T,T+\delta)-K}{\cF_t}\\
&=N\bigl(B(t,T)S^{\delta}(t,T)-B(t,T+\delta)(1+\delta K)\bigr).
\end{split}
\]

\noindent\emph{Overnight indexed swap.}\\ 
An overnight indexed swap (OIS) is a contract where two counterparties exchange two streams of payments: the first one is computed with respect to a fixed rate $K$, whereas the second one is indexed by an overnight rate (EONIA). Let us denote by $T_1, \ldots, T_n$ the payment dates, with $T_{i+1}-T_i=\delta$ for all $i=1,\ldots,n-1$. The swap is initiated at $T_0\in[0,T_1)$.  
The value of an OIS at $t\leq T_0$ with notional $N$ can be expressed as (see e.g.~\cite[Section 2.5]{fitr12})
\begin{align*}
\Pi^{OIS}(t; T_{1},T_{n},K,N)=N\left(B(t,T_0)-B(t,T_n)-K\delta\sum_{i=1}^{n} B(t,T_i)\right).
\end{align*}
Therefore, the OIS rate $K^{OIS}$, which is by definition the value for $K$ such that the OIS contract has zero value at inception, is given by
\[
K^{OIS}(T_1,T_n)=\frac{B(t,T_0)-B(t,T_n)}{\delta\sum_{k=1}^{n} B(t,T_k)}.
\]

\noindent\emph{Interest rate swap.}\\ 
In an interest rate swap (IRS), two streams of payments are exchanged between two counterparties: the first cash flow is computed with respect to a fixed rate $K$, whereas the second one is indexed by the  Libor rate. 
The value of the IRS at time $t\leq T_0$, where $T_0$ denotes the inception time, is given by
\begin{align*}
\Pi^{IRS}(t; T_{1},T_{n},K,N)
&=N\sum_{i=1}^{n}\left(B(t,T_{i-1})S^{\delta}(t,T_{i-1})-B(t,T_{i})(1+\delta K)\right).
\end{align*}
The swap rate $K^{IRS}$, which is by definition the value for $K$ such that the contract has zero value at inception, is given by
\begin{align*}
K^{IRS}(T_{1},T_{n},\delta)
= \frac{\sum_{i=1}^n\bigl(B(t,T_{i-1})S^{\delta}(t,T_{i-1})-B(t,T_i)\bigr)}{\delta\sum_{i=1}^nB(t,T_i)}
= \frac{\sum_{i=1}^nB(t,T_i)L_t(T_{i-1},T_1)}{\sum_{i=1}^nB(t,T_i)}.
\end{align*}

\noindent\emph{Basis swap.}\\ 
A basis swap is a special type of interest rate swap where two cash flows related to Libor rates associated to different tenors are exchanged between two counterparties. For instance, a typical basis swap may involve the exchange of the 3-month against the 6-month Libor rate. Following the standard convention for the definition of a basis swap in the Euro market (see~\cite{AB:13}), the basis swap is equivalent to a long/short position on two different interest rate swaps which share the same fixed leg. Let  $\cT^1=\left\{T^1_0,\cdots, T^1_{n_{1}}\right\}$, $\cT^2=\left\{T^2_0,\cdots T^2_{n_{2}}\right\}$ and $\cT^3=\left\{T^3_0,\cdots, T^3_{n_{3}}\right\}$, with $T^1_{n_{1}}=T^2_{n_{2}}=T^2_{n_{3}}$, $\cT^1\subset \cT^2$, $n_1<n_2$ and corresponding tenor lengths $\delta_1 > \delta_2$, with no constraints on $\delta_3$. The first two tenor structures on the one side and the third on the other are associated to the two floating and to the single fixed leg, respectively.  We denote by $N$ the notional of the swap, which is initiated at time $T^1_0=T^2_0=T^3_0$. The value at time $t\leq T^1_0$ is given by
\begin{align*}
\Pi^{BSW}(t;\cT^1,\cT^2,\cT^3,N)
&=
N\left(\sum_{i=1}^{n_1}\bigl(B(t,T^1_{i-1})S^{\delta^1}(t,T^1_{i-1})-B(t,T^1_i)\bigr)\right.\\
&\quad-\sum_{j=1}^{n_2}\bigl(B(t,T^2_{j-1})S^{\delta^2}(t,T^2_{j-1})-B(t,T^2_j)\bigr)-\left.K\sum_{\ell=1}^{n_3}\delta^3B(t,T^3_{\ell})\right)\\
\end{align*}
The value $K^{BSW}$ (called \textit{basis swap spread}) such that the value of the contract at initiation is zero is then given by
\begin{align*}
K^{BSW}(\cT^1,\cT^2,\cT^3)&=
\frac{\sum_{i=1}^{n_1}\bigl(B(t,T^1_{i-1})S^{\delta^1}(t,T^1_{i-1})-B(t,T^1_i)\bigr)-\sum_{j=1}^{n_2}\bigl(B(t,T^2_{j-1})S^{\delta^2}(t,T^2_{j-1})-B(t,T^2_j)\bigr)}{\delta_3\sum_{\ell=1}^{n_3}B(t,T^3_{\ell})}.
\end{align*}
Note that, prior to the financial crisis, the value of $K^{BSW}$ used to be (approximately) zero.

\subsection{Products with optionality features}	\label{sec:option}
In this section,  we report general valuation formulas  for plain vanilla interest rate products such as (European) caplets and swaptions. 
\noindent\emph{Caplet.}\\ 
The price at time $t$ of a caplet with strike price $K$, maturity $T$, settled in arrears at $T+\delta$, is given by
\begin{align}
\Pi^{CPLT}(t; T,T+\delta,K,N)&=NB_t\delta\,\Excond{}{\frac{1}{B_{T+\delta}}\Bigl(L_T(T,T+\delta)-K\Bigr)^+}{\cF_t}\notag\\
&=N\Excond{}{\frac{B_t}{B_T}\Bigl(S^{\delta}(T,T)-(1+\delta K)B(T,T+\delta)\Bigr)^+}{\cF_t}	\label{eq:caplet_gen}.
\end{align}

\begin{remark}
Note that, in the classical single-curve setting (i.e., under the assumption that $S^{\delta}(T,T)$ is identically equal to one), the valuation formula \eqref{eq:caplet_gen} reduces to the classical relationship between a caplet and a put option on a zero-coupon bond with strike $1/(1+\delta K)$. 
\end{remark}

From~\eqref{eq:caplet_gen}, we see that the payoff of a caplet at time $T$ and notional $N=1$ corresponds to
\begin{align*}
\Bigl(S^{\delta}(T,T)-(1+\delta K)B(T,T+\delta)\Bigr)^+&=S^{\delta}(T,T)1_{\{S^{\delta}(T,T)\geq (1+\delta K)B(T,T+\delta)\}}\\
&\quad -(1+\delta K)B(T,T+\delta)1_{\{S^{\delta}(T,T)\geq (1+\delta K)B(T,T+\delta)\}}.
\end{align*}
Let us now introduce the following probability measure on $\mathcal{F}_T$:
\[
\frac{d\widetilde{\mathbb{Q}}}{d \mathbb{Q}}:=\frac{S^{\delta}(T,T) B(T,T)}{B_T S^{\delta}(0,T) B(0,T)}.
\]
Note that $\frac{d\widetilde{\mathbb{Q}}}{d \mathbb{Q}} >0$ and 
$
\mathbb{E}^{\mathbb{Q}}\left[\frac{S^{\delta}(T,T) B(T,T)}{B_T S^{\delta}(0,T) B(0,T)}\right]=1
$,
since $\frac{S^{\delta}(t,T) B(t,T)}{B_t B(0,T)}$ is a $\mathbb{Q}$-martingale by Proposition~\ref{prop:bonds_spreads}. By Bayes' formula, we therefore have
\[
\mathbb{E}^{\mathbb{Q}}\left[\frac{B_t}{B_T}S^{\delta}(T,T)1_{\{S^{\delta}(T,T)\geq (1+\delta K)B(T,T+\delta)\}}\, \Bigr|\,  \mathcal{F}_t\right]=S^{\delta}(t,T)B(t,T)\mathbb{E}^{\widetilde{\mathbb{Q}}}\left[1_{\{S^{\delta}(T,T)\geq (1+\delta K)B(T,T+\delta)\}}\, \Bigr| \, \mathcal{F}_t\right].
\]
Similarly, we obtain 
\begin{align*}
\mathbb{E}^{\mathbb{Q}}\left[\frac{B_t}{B_T}(1+\delta K)B(T,T+\delta)1_{\{S^{\delta}(T,T)\geq (1+\delta K)B(T,T+\delta)\}}\, \Bigr|\,  \mathcal{F}_t\right]\\=(1+\delta K)B(t,T+\delta)\mathbb{E}^{\mathbb{Q}^{T+\delta}}\left[1_{\{S^{\delta}(T,T)\geq (1+\delta K)B(T,T+\delta)\}}\, \Bigr|\,  \mathcal{F}_t\right].
\end{align*}
Hence, the price of a caplet can be computed via
\begin{equation}\label{eq:capletpriceWish}
\begin{split}
\Pi^{CPLT}(t; T,T+\delta,K,1)&=S^{\delta}(t,T)B(t,T)\widetilde{\mathbb{Q}}\left[S^{\delta}(T,T)\geq (1+\delta K)B(T,T+\delta)\, \Bigr| \, \mathcal{F}_t\right]\\
&\quad -(1+\delta K)B(t,T+\delta)\mathbb{Q}^{T+\delta}\left[S^{\delta}(T,T)\geq (1+\delta K)B(T,T+\delta)\, \Bigr|\,  \mathcal{F}_t\right].
\end{split}
\end{equation}

\noindent\emph{Swaption.}\\ 
We consider a standard European payer swaption with maturity $T$, written on a (payer) interest rate swap starting at $T_0=T$ and payment dates $T_1,..., T_n$, with $T_{i+1}-T_i=\delta$ for all $i=1,\ldots,n-1$, with notional $N$. The value of such a claim at time $t$ is given by

\[
\Pi^{SWPTN}(t; T_{1},T_{n},K,N)
=N\mathbb{E}^{}\left[\frac{B_t}{B_T}\left(\sum_{i=1}^{n}B(T,T_{i-1})S^{\delta}(T,T_{i-1})-(1+\delta K)B(T,T_i)\right)^+\biggr|\cF_t\right].
\]

\section{Proof of Proposition~\ref{prop:bonds_spreads}}	\label{app:proof_prop}

Under Assumption \ref{ass:couple}, the process $(B(t,T)/B_t)_{0\leq t\leq T}$ is a martingale, for all $T\in[0,\T]$. Since $B(T,T)=1$, this implies that
\[
\frac{B(t,T)}{B_t} = \EE\left[\frac{B(T,T)}{B_T}\Bigr|\cF_t\right]
= \EE\left[\frac{1}{B_T}\Bigr|\cF_t\right],
\qquad\text{ for all }0\leq t\leq T\leq \T,
\]
thus proving part (i).
In particular, note that this implies that $B(0,T)=\EE[1/B_T]$, thus ensuring that  $d\QQ^T/d\QQ = 1/(B_TB(0,T))$ defines a probability measure $\QQ^T\sim\QQ$, for every $T\in[0,\T]$.
Recalling that $\Pi^{FRA}(t;T,T+\delta_i,L_t(T,T+\delta_i))=0$, for all $0\leq t\leq T\leq \T$ and $i=1,\ldots,m$, it holds that
\begin{align*}
0 &= \frac{1}{\delta_i}\frac{\Pi^{FRA}(t;T,T+\delta_i,L_t(T,T+\delta_i))}{B_t}
= \frac{1}{\delta_i}\EE\left[\frac{\Pi^{FRA}(T+\delta_i;T,T+\delta_i,L_t(T,T+\delta_i))}{B_{T+\delta_i}}\Bigr|\cF_t\right]	\\
&= \EE\left[\frac{L_T(T,T+\delta_i)-L_t(T,T+\delta_i)}{B_{T+\delta_i}}\Bigr|\cF_t\right]
= \frac{B(t,T+\delta_i)}{B_t}\bigl(\EE^{\QQ^{T+\delta_i}}[L_T(T,T+\delta_i)|\cF_t]-L_t(T,T+\delta_i)\bigr).
\end{align*}
Since $B(t,T+\delta_i)/B_t>0$ by part (i), the last equality proves part (ii).
Finally, by Bayes' formula, the process $(S^{\delta_i}(t,T))_{0\leq t\leq T}$ is a $\QQ^T$-martingale if and only if the process $(M^i_t)_{0\leq t\leq T}$ defined by
\[
M^i_t := S^{\delta_i}(t,T)\frac{d\QQ^T|_{\cF_t}}{d\QQ^{T+\delta_i}|_{\cF_t}}
= S^{\delta_i}(t,T)\frac{B(t,T)B(0,T+\delta_i)}{B(t,T+{\delta_i})B(0,T)}
\]
is a $\QQ^{T+\delta_i}$-martingale, for every $i=1,\ldots,m$. By definition of $S^{\delta_i}(t,T)$, it holds that
\[
M^i_t = \frac{1+\delta_i L_t(T,T+\delta_i)}{1+\delta_i \Lois_t(T,T+\delta_i)}\frac{B(t,T)B(0,T+\delta_i)}{B(t,T+{\delta_i})B(0,T)}
= \bigl(1+\delta_i L_t(T,T+\delta_i)\bigr)\frac{B(0,T+\delta_i)}{B(0,T)}
\]
and the desired martingale property then follows from part (ii).



\end{document}